\newtheorem{lemma}{Lemma}[section]
\newtheorem{theorem}{Theorem}[section]
\newtheorem{corollary}{Corollary}[section] 
\newtheorem{proof}{Proof}[section]
\def\eqref#1{equation~\ref{#1}}
\def\1{\bm{1}}
\DeclareMathAlphabet{\mathsfit}{\encodingdefault}{\sfdefault}{m}{sl}
\SetMathAlphabet{\mathsfit}{bold}{\encodingdefault}{\sfdefault}{bx}{n}
\newcommand{\goto}{\rightarrow} 
\newcommand{\logit}{\mathrm{logit}} 
\newcommand{\diag}{\mathrm{diag}}
\title{Fitting networks with a cancellation trick}
\author{Jiashun Jin  \\
Department of Statistics\\
Carnegie Mellon University\\
Pittsburgh, PA 15213, USA \\
\texttt{jiashun@andrew.cmu.edu} \\
\And
Jingming Wang \\
Department of Statistics \\
University of Virginia \\
Virginia, VA 22903, USA \\
\texttt{pdw9qv@virginia.edu} 
}
\begin{document}

\maketitle


\begin{abstract} 
The degree-corrected block model (DCBM), latent space model (LSM), and $\beta$-model  
are all popular network models.  We combine their modeling ideas and propose the logit-DCBM as a new model.  Similar as the $\beta$-model and LSM, the logit-DCBM contains nonlinear factors, where fitting the parameters   is a challenging open problem. We resolve this  problem by introducing a cancellation trick. 
We also propose R-SCORE as a recursive community detection algorithm, where in each iteration, we first use the idea above to update 
our parameter estimation,  and then use the results to remove the nonlinear factors in the logit-DCBM so the renormalized model  approximately satisfies a low-rank model, just like the DCBM. 
 Our numerical study suggests that R-SCORE significantly improves over existing spectral approaches in many cases.  Also, theoretically, we show that  the Hamming error rate of R-SCORE is faster 
 than that of SCORE in a specific sparse region, and is at least as fast outside this region.  
 \end{abstract}

\tableofcontents
 
\section{Introduction}\label{sec:intro}
Community detection is a problem of major interest in network analysis (e.g., see \citep{goldenberg2010survey}, a survey paper).    Consider an undirected network 
with $n$ nodes and $K$ communities 
${\cal C}_1, {\cal C}_2, \ldots, {\cal C}_K$ (a community is a group of nodes with similar behaviors).     
Let $A \in \mathbb{R}^{n, n}$ be the adjacency matrix,  where $A_{ij} = 1$ if 
and only if there is an edge between node $i$ and $j$, $1 \leq i \neq j \leq n$.   Conventionally, 
we do not count self edges, so $A_{ij} = 0$ if $i = j$.  As in many works on community detection (e.g., \cite{chen2018convexified,  zhao2012consistency, yuan2022likelihood}),  we assume that each node belongs to exactly one of the $K$ communities.    For each $1 \leq i \leq n$, we encode the community label of node $i$ by a $K$-dimensional 
vector $\pi_i$ (which is unknown to us) such that 
\begin{equation} \label{DCBM1}  
\mbox{$\pi_i= e_k$ \; if and only if node $i \in {\cal C}_k$}   \qquad  \mbox{($e_k$: $k$-th standard Euclidean basis vector of $\mathbb{R}^K$)}.  
\end{equation} 
The goal of community detection is to use $(A, K)$ to cluster all $n$ nodes into $K$ communities/groups.

The degree-corrected block model (DCBM) \citep{DCBM} is a  popular 
network model. Suppose we use a free parameter $\theta_i > 0$ to model the {\it degree heterogeneity}  of node $i$,  $1 \leq i \leq n$. For a non-negative matrix $P \in \mathbb{R}^{K, K}$, DCBM 
assumes that the upper triangular entries of $A$ are independent Bernoulli variables satisfying  
\begin{equation} \label{DCBM2} 
\mathbb{P}(A_{ij} = 1) = \theta_i \theta_j \pi_i' P \pi_j \qquad \Longleftrightarrow 
\qquad \log(\mathbb{P}(A_{ij} = 1))   = \log(\theta_i) + \log(\theta_j) + \pi_i' Q \pi_j, 
\end{equation} 
where $Q$ is a $K \times K$ matrix such that $Q  = \log(P)$ entry-wise.  When all $\theta_i$ are equal,  
DCBM reduces to the well-known {\it Stochastic Block Model (SBM)} \citep{SBM}.    
Note  that as $0 \leq \mathbb{P}(A_{ij} = 1) \leq 1$, so implicitly, DCBM has imposed a 
set of  constraints on its parameters: 
\begin{equation} \label{DCBMconstraint} 
\log(\theta_i) + \log(\theta_j) + \pi_i' Q \pi_j \leq  0,  \qquad 1 \leq i, j \leq n. 
\end{equation} 
These constraints make an already complicated setting even more complicated, 
so we desire to remove them if possible. 
Also, if the matrix $Q$ is positive definite, then $Q = U' U$ for a matrix $U \in \mathbb{R}^{K, K}$. In this special case, we can rewrite (\ref{DCBM2}) as 
\begin{equation} \label{DCBM3} 
\log(\mathbb{P}(A_{ij} = 1))   = \log(\theta_i) + \log(\theta_j) + z_i' z_j,   \qquad \mbox{where $z_i = U \pi_i$, $1 \leq i \leq n$}. 
\end{equation}

The latent space model (LSM)  \citep{Hoff} and the $\beta$-model \citep{betamodel} are also popular network models.  Denote the logit function by $\logit(x) = \log(x/[1-x])$,  $0 < x < 1$. 
In a representative form, for so-called latent positions $z_1, \ldots, z_n \in \mathbb{R}^K$,  the LSM assumes  
\begin{equation} \label{LSM} 
\logit(\mathbb{P}(A_{ij} = 1))  = \log(\theta_i) + \log(\theta_j) + z_i' z_j. 
\end{equation} 
Compared (\ref{LSM}) with (\ref{DCBM3}), the only difference is the log-link is replaced by the logit-link, so at least in the special case where $Q$ is positive definite, two models are similar.  
Also, if we drop the $z_i' z_j$ term on the RHS, then (\ref{LSM}) reduces to the $\beta$ model (where we only have one community, i.e., $K = 1$).  

However, despite the similarity,  to many statisticians, (\ref{LSM}) is highly preferred. The main reason is that, 
for binary data, the model recommended by textbooks is the logistic regression model (e.g., \citep[Section 4.4]{HTF} and  \cite{dobson2018introduction}), where the logit-link function 
was argued to be the most natural. Additionally, some popular Python packages, such as scikit-learn, frequently use the logit-link function.    In fact, in the LSM case,  since $\logit(\mathbb{P}(A_{ij} = 1))$ can take any values in 
$(-\infty, \infty)$, we do not have the constraints (see (\ref{DCBMconstraint}))  as the DCBM case. 

To combine the modeling ideas of  all three models,    we propose the {\it logit-DCBM}, where we assume 
\begin{equation} \label{logit-DCBM} 
\logit(\mathbb{P}(A_{ij} = 1)) = \log(\theta_i) + \log(\theta_j) + \pi_i'  Q  \pi_j,  \;\;\;  \mbox{with $Q = \log(P)$ entrywise as above}. 
\end{equation} 
Since we use the logit-link function,   we do not need the constraints (\ref{DCBMconstraint})  as in the DCBM case.    
Also, we can view (\ref{logit-DCBM}) as an extension of (\ref{DCBM2}).  
Moreover, since we do not require $Q$ to be positive definite in (\ref{logit-DCBM}), so (\ref{logit-DCBM}) also extends (\ref{LSM}) to a broader setting. Last, \ref{logit-DCBM}) reduces to the $\beta$-model   if we let $Q = 0$. 

%
 
In summary, we propose the logit-DCBM as a nonlinear variant of DCBM so hopefully it is more broadly acceptable, 
especially for researchers with a strong preferences in nonlinear network models (such as the LSM) and in using logistic regression type model for binary data.

We now rewrite the logit-DCBM in the matrix form. 
Note that under the model, $\mathbb{P}(A_{ij}  = 1) = N_{ij} \cdot \theta_i \theta_j \pi_i' P \pi_j$, 
where $N_{ij} = [1 + \theta_i \theta_j \pi_i' P \pi_j]^{-1}$ is a nonlinear term. Let $N = (N_{ij})$, 
$\Theta = \diag(\theta_1, \ldots, \theta_n) \in \mathbb{R}^{n, n}$ and 
$\Pi = [\pi_1, \ldots, \pi_n]'$. For any matrix $\Omega \in \mathbb{R}^{n, n}$, let $\diag(\Omega) \in \mathbb{R}^{n, n}$ be the diagonal matrix where the $k$-th diagonal entry is $\Omega_{kk}$.  Let $W \in \mathbb{R}^{n, n}$ be the matrix where $W_{ij} = A_{ij} - \mathbb{E}[A_{ij}]$ if $i \neq j$ and $W_{ij} = 0$ otherwise. Let $\circ$ denote the Hadamard (or entry-wise) product 
\citep{HornJohnson}. 
Under the logit-DCBM model (\ref{logit-DCBM}), 
\begin{equation} \label{logit-DCBM2} 
A = \Omega - \diag(\Omega) + W, \qquad \mbox{with  $\Omega = N  \circ \widetilde{\Omega}$ \; and \; $\widetilde{\Omega} = 
\Theta \Pi P \Pi' \Theta$}. 
\end{equation} 
 Note that $\mathrm{rank}(\widetilde{\Omega}) = K$, but due to the matrix of nonlinear factors $N$, 
$\mathrm{rank}(\Omega)$ may be much larger than $K$. For this reason, (\ref{logit-DCBM2}) 
is not a low-rank model in general.

{\bf Remark 1}.  Since $N_{ij} \approx 1$ when $\theta_i \theta_j \pi_i' P \pi_j \approx 0$, one may think that the  DCBM and logit-DCBM  are close to each other. This is not true. First, $\theta_i \theta_j \pi_i' P \pi_j$ are not necessarily small for all $i,j$.  Second,  even if $\widetilde{\Omega}$ and $N \circ \widetilde{\Omega}$ 
are close in each entry,  their spectra and norms can be very different. 

 
{\bf Literature review and our contribution}.  The logit-DCBM (and all other models mentioned above) are so-called latent variable models, 
where $\Pi$ is the matrix of latent variables. 
For latent variable models, the EM algorithm (e.g., \cite{EM}) is a well-known approach. 
However, EM algorithm is computationally expensive,  lacks of theoretical guarantee  
for high dimensional setting as we have here, and does not perform well when the 
networks are sparse.   For network data, penalization approach is popular, 
and in the DCBM setting, there are many interesting works (e.g., \cite{chen2018convexified,  zhao2012consistency}). 
However,  since the DCBM is a latent variable model with many unknown parameters, these methods usually involve a non-convex optimization, where a good initialization is crucial. Also, penalization approaches are usually computationally relatively slow and hard to analyze. We can extend these approaches to LSM \citep{Malatent} and logit-DCBM, but due to the nonlinearity in LSM and logit-DCBM, these issues persist. 
 
For these reasons, spectral approaches for network data are especially appealing. 
Compared with EM algorithm and penalization approaches, spectral approaches  
are conceptually simpler, computational faster, and also easier (at least for the DCBM) to analyze.    
In the classical spectral approach, we cluster by applying $k$-means to the $n$ rows of the 
matrix $\widehat{\Xi} = [\hat{\xi}_1, \ldots, \hat{\xi}_K]$, where $\hat{\xi}_k$ is the $k$-th eigenvector of $A$. 
However, due to frequently observed phenomenon of {\it severe degree heterogeneity} in network data, 
such an approach frequently performs poorly. To fix the problem, 
\citep{SCORE} proposed SCORE as a new spectral approach. In the DCBM setting, 
 SCORE was shown to have fast convergence rates (e.g., \cite{SCORE,SCORE+}). Also,  in a survey paper \citep{SCORE-Review},  SCORE was compared with many  
algorithms on many real networks, where it was shown to be competitive in real data performances.

Motivated by these, we wish to extend SCORE to our setting. The challenge is,   
the success of SCORE critically depends on the fact that the DCBM is a  low-rank model, 
but unfortunately, the logit-DCBM   is not a low-rank model (see  above).

%
%
%

We adapt SCORE to our setting by proposing the {\it Recursive-SCORE (R-SCORE)}: 
we initialize by a possibly crude estimate for $\Pi$ (denoted for $\widehat{\Pi}$), 
and then use $A$ and $\widehat{\Pi}$ to estimate $N$ (denoted by $\widehat{N}$). 
We then update $A$ by $A \oslash \widehat{N}$ ($\oslash$ denotes the entry-wise division) and repeat the above process for a number of times. The main idea here is that, if $\widehat{N} \approx N$, then $A \oslash \widehat{N} \approx A \oslash N = \widetilde{\Omega} - \diag(\widetilde{\Omega}) + W \oslash N$, where the RHS is a low-rank model (recall that 
$\mathrm{rank}(\widetilde{\Omega}) = K$).  

%
%

The challenge is, how to estimate $N$ is a {\it difficult problem}, even if $\Pi$ is known. 
In fact, when $\Pi$ is known, we can restrict the network to each of the $K$ communities, 
where within each community, the logit-DCBM reduces to the $\beta$-model (which is a symmetrical version of the $p_1$ model \citep{p1model}).    How to estimate $N$ in the $\beta$-model is a well-known open problem, as  explained in the  
survey paper \citep{goldenberg2010survey} (see also \cite{Rinaldo2010mle}):  ``A major problem with the $p_1$ and related models, recognized by Holland and Leinhardt, is the lack of standard asymptotics, ..., we have no consistency in results for the maximum likelihood estimates".

We tackle this with a cancellation trick. 
Construct two types of cycles. For each type, the expected 
cycle count  is a big sum of many terms, where due 
to  the matrix of nonlinear factors $N$,  we can not derive a simple   
expression.  Fortunately, 
in the ratio of the two big sums,  {\it the nonlinear factors in one big sum 
cancel with those in the other, and  the ratio has a simple 
and closed-form expression}.

Therefore, if $\Pi$ is known, then the idea gives rises to a simple and convenient way to estimate $N$. 
Note that this also solves the open problem for the $\beta$-model aforementioned.   
In our case,  $\Pi$ is unknown, but we can first obtain a possibly crude estimate $\widehat{\Pi}$, and 
then use $\widehat{\Pi}$ and the idea above to obtain an estimate $\widehat{N}$ for $N$. We can then 
repeat the two steps as in the R-SCORE.   

{\bf Remark 2}.  As many recent procedures rely on a low-rank network model, the above idea 
is not only useful for adapting SCORE to our setting, but is also helpful 
in adapting other ideas (e.g., those on global testing \citep{JKL2021} and on estimating $K$ \citep{JKLW2022}) 
to our setting. 
 
It remains to analyze SCORE and R-SCORE for the logit-DCBM model.  Note that while 
the Hamming clustering error of SCORE was analyzed before (e.g., \cite{SCORE, SCORE+}), 
but the focus were on the simpler DCBM model, where the analysis critically depends on that the DCBM is a low-rank model. 
Unfortunately, the logit-DCBM is not a low-rank model, so it is unclear how to extend the results in  \cite{SCORE, SCORE+} to our setting. Note also that R-SCORE is a new algorithm, and has never been analyzed before.

For any community detection procedures, we measure the performance by the Hamming clustering error. 
In the logit-DCBM model, we can always write 
\[
A = \widetilde{\Omega}  + (N - {\bf 1}_n {\bf 1}_n') \circ \widetilde{\Omega} - \diag(\Omega) + W,  \qquad \mbox{where $\widetilde{\Omega}$ is a low-rank matrix}, 
\] 
and $(N - {\bf 1}_n {\bf 1}_n') \circ \widetilde{\Omega}$ can be viewed as a non-linear perturbation of $\widetilde{\Omega}$.  
We show that the Hamming error rate of SCORE is upper bounded by 
\[
C [\lambda_1(\widetilde{\Omega})+ \|(N - {\bf 1}_n {\bf 1}_n') \circ \widetilde{\Omega}\|] / \lambda_K^2(\widetilde{\Omega}). 
\] 
This is the first time we derive a bound for the Hamming clustering error of  SCORE for a nonlinear network model. Compared with 
existing works on DCBM (e.g., \cite{JKL2021, JKLW2022}), the analysis is  quite different.

The Hamming error of R-SCORE is much harder to analyze, for many reasons. 
First, R-SCORE is a recursive algorithm, where the next step depends on 
the pervious one. Second, $\widehat{N}$ (the estimate for $N$) is a 
complicated nonlinear function of $A$, which  depends   not only on the 
clustering errors in the previous step, but also on the cycle 
count step aforementioned (where the analysis is non-standard). 

Fortunately, we manage to derive an upper bound for the Hamming error rate of R-SCORE.   To save space, we consider a special case here, 
leaving more general cases to Section \ref{subsec:theory}. Consider the special case 
where for all $1 \leq i \leq n$,  $c_0 n^{-\beta}  \leq  \theta_i \leq c_1 n^{-\beta}$, where $\beta \in (0,1/2)$ and $c_1 > c_0 > 0$ are constants. In this case, the Hamming error rates of SCORE are 
R-SCORE are upper bounded by $C n^{-a_0(\beta)}$ and $n^{-a_1(\beta)}$, respectively, 
where 
\[
a_0(\beta) = \min\{(1 - 2 \beta), 4 \beta\},  \;\;   a_1(\beta) = \min\{(1 - 2 \beta), 6 \beta\}, \;\;  \mbox{and $a_1(\beta) > a_0(\beta)$ when $\beta < 1/6$}. 
\] 
Therefore, when $0 < \beta < 1/6$, the rate of R-SCORE is faster than that of SCORE, and two rates are the same when 
$1/6 < \beta < 1/2$ (the interesting range of $\beta$ is $0 < \beta < 1/2$; when $\beta > 1/2$, the signal-noise ratio is 
so low that no procedure could succeed).   
  
We have the following contributions:   (a)  propose the logit-DCBM as an extension of the LSM, $\beta$-model, and DCBM, and as a more appealing network model,  (b) introduce a cancellation trick and use it to solve an open problem for the $\beta$-model, as well as for an open problem for the logit-DCBM in the idealized case where $\Pi$ is known, and (c) propose R-SCORE as recursive approach to community detection with the logit-DCBM, 
(d) for the first time, we derive upper bounds for the Hamming error rates of SCORE and R-SOCRE 
for the logit-DCBM (which is a nonlinear network model),   and (e) show that the rate of R-SCORE is faster than that of SCORE in 
a specific sparse region, and is at least as fast outside the region.  

In summary,  we propose the logit-DCBM as a nonlinear variant of DCBM, so it will be
more broadly accepted. The nonlinear factors make the logit-DCBM harder to fit,
but with a cancellation trick, we can successfully convert the model back to DCBM approximately,
so we can continue to enjoy all nice properties the DCBM has.  We also propose R-SCORE as a fast spectral approach
where the error rate is faster than that of applying SCORE directly to the logit-DCBM.

{\bf Content and notation}. Section \ref{sec:idea} introduces the cancellation trick. Section \ref{sec:main} 
introduces the R-SCORE algorithm and theoretical analysis. Section \ref{sec:simul} contains some numerical 
study. Section \ref{sec:Discu} discusses connections to other problems. 
In this paper, $\circ$ and $\oslash$ denote the entry-wise product and division, respectively.  For $1 \leq k \leq K$, we use $e_k$ to denote the $k$-th standard  basis vector of $\mathbb R^{K}$. For any $n \geq 2$,  $I_n$ denotes the $n \times n$ identity matrix and ${\bf 1}_n \in \mathbb R^{n}$ denotes the vector of all ones.    For any two sequence of non-negative numbers $\{a_n\}$ and $\{b_n\}$, we write $a_n\gg b_n$  if $b_n/a_n = o(1)$ (similar for $a_n \ll b_n$),  and we write $a_n\asymp b_n$ if $c_0 b_n\leq a_n\leq c_1 b_n$ for some constants $c_1 > c_0 > 0$. 
 We use $C$ to  stand for a generic constant, which may vary from one occasion to another.

\section{An idea for cancelling nonlinear terms in big sums} \label{sec:idea}  
We introduce the cancellation trick by considering two seemingly new problems.  Although it seems a digression from our original purposes, the two problems are interesting in their own right, and provide the foundation for the refitting step of R-SCORE below.    
Consider the first problem. Suppose we have a matrix $A \in \mathbb{R}^{n_1, n_2}$ with independent Bernoulli entries, 
where $\Omega_{ij} = \mathbb{P}(A_{ij} = 1) = x_0 N_{ij}   \theta_i \theta_j$,  $x_0 > 0$, $\theta_i > 0$, with $N_{ij} = [1 + x_0 \theta_i \theta_j]^{-1}$.  Here,  $\theta_i$ are known but $x_0$ is not, and the interest is to estimate $x_0$. We may estimate $x_0$ by the maximum likelihood estimate (MLE), but it does not have a closed form and may be computationally slow, so we desire a new approach. 
\begin{lemma}  \label{lemma:P}  
We have $(I) = x_0  (II)$, where $(I) =  \sum_{i, j}    \Omega_{ij}$ and   $(II) =  \sum_{i, j}   \theta_i \theta_j (1 - \Omega_{ij})$. 
\end{lemma}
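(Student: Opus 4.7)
The plan is to observe that the claimed equality already holds termwise before summing, and that the nonlinear factor $N_{ij}$ is exactly the complement $1-\Omega_{ij}$. Concretely, from the definitions
\[
\Omega_{ij} \;=\; x_0\, N_{ij}\, \theta_i \theta_j \;=\; \frac{x_0\, \theta_i \theta_j}{1 + x_0\, \theta_i \theta_j},
\]
one computes directly
\[
1 - \Omega_{ij} \;=\; \frac{1}{1+ x_0\, \theta_i \theta_j} \;=\; N_{ij}.
\]
This is the crucial observation: the nonlinear factor $N_{ij}$ in $\Omega_{ij}$ is literally the ``probability of a non-edge'' at $(i,j)$.

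Using this, I would multiply both sides by $\theta_i \theta_j$ to get the pointwise identity
\[
\theta_i \theta_j\, (1-\Omega_{ij}) \;=\; \theta_i \theta_j \, N_{ij},
\]
and then multiply by $x_0$ to obtain
\[
x_0 \, \theta_i \theta_j\, (1-\Omega_{ij}) \;=\; x_0 \, \theta_i \theta_j\, N_{ij} \;=\; \Omega_{ij}.
\]
Thus each summand of $(I)$ equals $x_0$ times the corresponding summand of $(II)$. Summing both sides over $1 \le i \le n_1$ and $1 \le j \le n_2$ yields $(I) = x_0 (II)$, which is exactly the claimed identity.

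There is no real obstacle here; the lemma is an algebraic identity rather than an estimate. The only thing worth emphasizing in the write-up is the cancellation itself: in the ratio $(I)/(II)$, the awkward nonlinear denominators $1 + x_0 \theta_i \theta_j$ appearing in $\Omega_{ij}$ are absorbed by the complementary factors $(1-\Omega_{ij})$ on the other side, so that the ratio collapses to the single scalar $x_0$. This is the prototype of the ``cancellation trick'' that will later be applied to cycle counts, where each individual sum is intractable but their ratio admits a closed form.
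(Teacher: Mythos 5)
Your proof is correct and follows essentially the same route as the paper: the paper's one-line argument rewrites $(I)=x_0\sum_{i,j}N_{ij}\theta_i\theta_j$ and $(II)=\sum_{i,j}N_{ij}\theta_i\theta_j$, which relies on exactly the identity $1-\Omega_{ij}=N_{ij}$ that you verify explicitly. Your write-up simply makes that termwise cancellation explicit before summing, which is a presentational difference only.
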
 
{\bf Proof}. As  $(I) = x_0  \sum_{i, j} N_{ij} \theta_i \theta_j$ and $(II) = 
\sum_{i, j}   N_{ij} \theta_i \theta_j$, the claim follows.   $\square$ 

The key  is, due to the non-linear terms $N_{ij}$, it is hard  to derive 
a closed-form formula for $(I)$ or $(II)$, but 
by our careful design,   the ratio  of $(I) / (II)$   has a very simple form. 
Now, to estimate $x_0$, let  
$\psi_n^{(1)} =  \sum_{i=1}^{n_1} \sum_{j = 1}^{n_2}   A_{ij}$ and $\psi_n^{(2)} =   \sum_{i=1}^{n_1} \sum_{j = 1}^{n_2}   \theta_i \theta_j (1 - A_{ij})$. 
By Lemma \ref{lemma:P},  
$x_0 = \mathbb{E}[\psi_n^{(1)}] / \mathbb{E}[\psi_n^{(2)}]$, so a convenient estimate for $x_0$ is (note: the computational cost is $O(n_1 n_2)$): 
\begin{equation}\label{x0estimate} 
\hat{x}_0 = \psi_n^{(1)} / \psi_n^{(2)} =  [\sum_{i=1}^{n_1} \sum_{j = 1}^{n_2}   A_{ij}] /  [\sum_{i=1}^{n_1} \sum_{j = 1}^{n_2}   \theta_i \theta_j (1 - A_{ij})]. 
\end{equation} 

\vspace{-1.5 em} 

Consider the second problem. Suppose we have a network adjacency matrix $A \in \mathbb{R}^{n_1, n_2}$ satisfying the $\beta$-model.    That is,  the upper triangle of $A$ are independent Bernoulli satisfying $\Omega_{ij} \equiv  \mathbb{P}(A_{ij} = 1) = N_{ij}  \theta_i \theta_j$ where  $N_{ij} = [1 + \theta_i \theta_j]^{-1}$, $1 \leq i \neq j \leq n_1$. The parameters $\theta_i > 0$ are unknown and the interest is to estimate them.  
Due to the nonlinear terms $N_{ij}$, the problem remains a difficult open problem in the literature, where  
classical approaches such as the MLE face grand challenges (e.g., \cite{goldenberg2010survey, Karwa:Slakovic:2016, Rinaldo2010mle}).  We propose a new approach, motivated by the following lemma. For any $1 \leq i \leq n_1$, let $S_i = \{1, 2, \ldots, n_1\} \setminus \{i\}$. 
\begin{lemma}  \label{lemma:theta} 
Fix an odd number $m \geq 3$.    We have (dist below stands for distinct) 
\begin{equation} \label{theta10} 
\frac{\sum_{i_2, \ldots, i_m \in S_{i_1} (dist)} \Omega_{i_1 i_2}(1 - \Omega_{i_2 i_3}) \ldots 
\Omega_{i_{m-2} i_{m-1}} (1 - \Omega_{i_{m-1} i_m})  \Omega_{i_m i_1}}{\sum_{i_2, \ldots, i_m \in S_{i_1} (dist)} (1 - \Omega_{i_1 i_2})  \Omega_{i_2 i_3}  \ldots 
(1 - \Omega_{i_{m-2} i_{m-1}})  \Omega_{i_{m-1} i_m} (1 - \Omega_{i_m i_1})}  = \theta_{i_1}^2. 
\end{equation} 
\end{lemma}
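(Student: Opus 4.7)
The plan is to exploit the identities $\Omega_{ij} = N_{ij}\theta_i\theta_j$ and $1-\Omega_{ij} = N_{ij}$, which follow directly from the definition $\Omega_{ij} = \theta_i\theta_j/(1+\theta_i\theta_j)$ and $N_{ij} = 1/(1+\theta_i\theta_j)$. With these two identities, every factor in both the numerator and denominator summands contributes exactly one copy of $N_{i_k i_{k+1}}$ (indices taken cyclically modulo $m$), differing only in whether a corresponding $\theta_i\theta_j$ is attached. So the heart of the argument is a bookkeeping claim: the $N$-factors produce the same product $\prod_{k=1}^m N_{i_k i_{k+1}}$ in every single summand of both sums, and the only difference between a numerator summand and its denominator counterpart is the $\theta$-factors.

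Next I would do the $\theta$-counting. Write each numerator summand and each denominator summand, fix a tuple $(i_2,\ldots,i_m)$, and count the multiplicity of each $\theta_{i_k}$. Since $m$ is odd, the numerator alternates $\Omega,1-\Omega,\Omega,\ldots,\Omega,1-\Omega,\Omega$, so the $\Omega$-slots are the pairs $(i_1,i_2), (i_3,i_4), \ldots, (i_{m-2},i_{m-1}), (i_m,i_1)$. The vertex $i_1$ appears in two $\Omega$-slots (the first and the last), so it contributes $\theta_{i_1}^2$; for each $k\geq 2$, the vertex $i_k$ is incident to exactly one $\Omega$-slot and one $(1-\Omega)$-slot in the cyclic sequence, so it contributes a single $\theta_{i_k}$. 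The denominator is the shifted alternation $1-\Omega,\Omega,\ldots,\Omega,1-\Omega$, so now $i_1$ appears only in $(1-\Omega)$-slots (contributing nothing) while each $i_k$, $k\geq 2$, still appears in exactly one $\Omega$-slot and contributes $\theta_{i_k}$.

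Combining the two counts, every numerator summand equals
\begin{equation*}
\theta_{i_1}^2\,\prod_{k=2}^m \theta_{i_k}\,\prod_{k=1}^m N_{i_k i_{k+1}},
\end{equation*}
while the corresponding denominator summand equals
\begin{equation*}
\prod_{k=2}^m \theta_{i_k}\,\prod_{k=1}^m N_{i_k i_{k+1}}.
\end{equation*}
These are proportional with ratio $\theta_{i_1}^2$, independent of the tuple $(i_2,\ldots,i_m)$. Since the sums are over the same index set and each term has the same ratio, the ratio of the sums is $\theta_{i_1}^2$, as claimed.

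The main obstacle, which is really just a notational nuisance rather than a mathematical difficulty, is handling the parity argument cleanly for general odd $m$: one must verify that the alternation carries through correctly so that the very last factor $\Omega_{i_m i_1}$ in the numerator (resp.\ $1-\Omega_{i_m i_1}$ in the denominator) is consistent with the pattern, which requires $m$ to be odd. Sanity-checking with $m=3$ (and, if desired, $m=5$) confirms that the count of $\theta_{i_1}$ factors is exactly two in each numerator summand and zero in each denominator summand, and the cancellation of $N$-factors is perfect. No additional assumptions on the $\theta_i$ or on distinctness of the indices are needed beyond what is already stated in the lemma.
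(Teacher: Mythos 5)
Your proposal is correct and follows essentially the same route as the paper: using $\Omega_{ij}=N_{ij}\theta_i\theta_j$ and $1-\Omega_{ij}=N_{ij}$, each numerator summand equals the common factor $\prod_k N_{i_k i_{k+1}}\prod_{k\geq 2}\theta_{i_k}$ times $\theta_{i_1}^2$ while the matching denominator summand equals that common factor alone, so the termwise (hence overall) ratio is $\theta_{i_1}^2$. Your parity bookkeeping for general odd $m$ just spells out the same cancellation the paper states directly.
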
 
{\bf Proof}.  Note that 
%
\begin{align*} 
\Omega_{i_1 i_2}(1 - \Omega_{i_2 i_3}) \ldots 
\Omega_{i_{m-2} i_{m-1}} (1 - \Omega_{i_{m-1} i_m})  \Omega_{i_m i_1} &  = N_{i_1 i_2} N_{i_2 i_3} \ldots 
N_{i_m i_1} \theta_{i_1}^2 \theta_{i_2} \ldots \theta_{i_m} \\ 
(1 - \Omega_{i_1 i_2})  \Omega_{i_2 i_3}  \ldots 
(1 - \Omega_{i_{m-2} i_{m-1}})  \Omega_{i_{m-1} i_m}   (1 - \Omega_{i_m i_1})  & = N_{i_1 i_2} N_{i_2 i_3} \ldots 
N_{i_m i_1}   \theta_{i_2} \ldots \theta_{i_m}.  
\end{align*} 
Comparing the RHS, the only difference is the term $\theta_i^2$. Since on both the numerator and denominator of (\ref{theta10}), the sum is only over $i_2, i_3, \ldots, i_m$ with $i_1$ being fixed, the claim follows.  $\square$

Similarly,  due to the non-linear terms $N_{ij}$, it is hard to derive 
a closed-form formula for  both the numerator and denominator of (\ref{theta10}), but 
by our design,  the ratio in (\ref{theta10})  has a very simple form. 
Let $\phi_{n,m}^{(1)}(i_1)  = \sum_{i_2, \ldots, i_m \in S_{i_1} (dist)} A_{i_1 i_2}(1 - A_{i_2 i_3}) \ldots  A_{i_{m-2} i_{m-1}} (1 - A_{i_{m-1} i_m}) A_{i_m i_1}$ and     
$\phi_{n,m}^{(2)}(i_1) =  \sum_{i_2, \ldots, i_m \in S_{k, i_1} (dist)}   (1 - A_{i_1 i_2}) A_{i_2 i_3}  \ldots (1 - A_{i_{m-2} i_{m-1}})   A_{i_{m-1} i_m} (1 -  A_{i_m i_1})$.   
By Lemma \ref{lemma:theta},  
\[
\frac{\mathbb{E}[\phi_{n, m}^{(1)}(i_1)]}{\mathbb{E}[\phi_{n, m}^{(2)}(i_1)]}  = \frac{\sum_{i_2, \ldots, i_m \in S_{i_1}} \Omega_{i_1 i_2}(1 - \Omega_{i_2 i_3}) \ldots 
\Omega_{i_{m-2} i_{m-1}} (1 - \Omega_{i_{m-1} i_m})  \Omega_{i_m i_1}}{\sum_{i_2, \ldots, i_m \in S_{i_1}} (1 - \Omega_{i_1 i_2})  \Omega_{i_2 i_3}  \ldots 
(1 - \Omega_{i_{m-2} i_{m-1}})  \Omega_{i_{m-1} i_m} (1 - \Omega_{i_m i_1})}  = \theta_{i_1}^2.
\] 
Therefore, a reasonable estimator for $\theta_{i_1}$ is 
$\hat{\theta}_{i_1} = \sqrt{\phi_{n, m}^{(1)}(i_i)  / \phi_{n, m}^{(2)}(i_1)}$. This solves the open problem aforementioned (see also Section \ref{sec:main}).  
Especially, we may take $m = 3$  and  estimate $\theta_i$ by  
\begin{equation} \label{thetaestimate} 
\hat{\theta}_{i} = \sqrt{\phi_{n, 3}^{(1)}(i)  / \phi_{n,3}^{(2)}(i)} = \sqrt{\frac{\sum_{j, k  \in S_{i}, j \neq k} A_{ij}(1 - A_{jk}) A_{ki}}{\sum_{j, k \in S_i, j \neq k} (1 - A_{ij})  A_{jk} (1 - A_{ki})}}. 
\end{equation} 
 Alternatively, we may use a larger $m$, but the numerical performance is similar, while the analysis is much longer. For each fixed $m$, the computational cost   is $O(n^2 d)$ (e.g., \cite{JKL2021}), where $d$ is the maximum node degree.

{\bf Remark 3}. Lemma \ref{lemma:theta} is readily extendable. For example, if there are positive functions $g$ and $h$ such that $g(\Omega_{ij}) = \theta_i \theta_j  \pi_i' P \pi_j h(\Omega_{ij})$ for all $i, j$,   then similarly 
\[ 
\frac{\sum_{i_2, \ldots, i_m \in S_{i_1} (dist)} g(\Omega_{i_1 i_2}) h(\Omega_{i_2 i_3}) \ldots 
g(\Omega_{i_{m-2} i_{m-1}}) h(\Omega_{i_{m-1} i_m})  g(\Omega_{i_m i_1})}{\sum_{i_2, \ldots, i_m \in S_{i_1} (dist)}  h(\Omega_{i_1 i_2})  g(\Omega_{i_2 i_3})  \ldots 
h(\Omega_{i_{m-2} i_{m-1}})  g(\Omega_{i_{m-1} i_m}) h(\Omega_{i_m i_1})})  = \theta_{i_1}^2. 
\] 

In summary, the two problems above (especially the second one) are difficult. In these problems, the quantities of interest are hidden in some big sums. Due to the nonlinear factor $N_{ij}$ , it is hard to derive a closed-form formula for such big sums. However, if we can carefully construct two big sums, then we can cancel the nonlinear terms $N_{ij}$ by considering the ratio of the two big sums, and derive a closed-form formula for the quantity of interest. Such a cancellation trick gives rises to a convenient approach to solving the two problems above, and is readily extendable to many other settings (e.g., analysis of the p1 model for directed networks \cite{p1model}, analysis of tensor and hyper-graphs \cite{Yuan2018TestHyper}).

Below in Section \ref{sec:main}, we introduce R-SCORE as a recursive algorithm, where the ideas above play a key role in the refitting steps of R-SCORE. 
For space reasons, we defer the analysis of the above idea (i.e., $\hat{x}_0$ and $\hat{\theta}_i$) to 
the supplement; see Sections C.2-C.3 of the supplement for details. 
%
%

%
%
%
%

\section{Community detection by R-SCORE for the logit-DCBM}  \label{sec:main}  
We propose {\it Recursive-SCORE (R-SCORE}) for community detection, 
where the key is to use the ideas above in the refitting step; see Algorithm (\ref{alg:RSCORE}).   
The number of iteration is not critical, so we set $M = 10$ (R-SCORE typically converges in very few iterations).
In each iteration, R-SCORE consists a community detection step by SCORE (the SCORE step)  and a refitting step. 
We choose SCORE for it is fast, competitive in real data analysis, and with fast error rates (e.g.,  \cite{SCORE, SCORE+}), but we can also view our algorithm as a {\it generic algorithm},   where we can replace the SCORE by any other community detection approaches that are provably effective for DCBM.

We now discuss the SCORE step and refitting step of Algorithm \ref{alg:RSCORE} in detail.  Consider the SCORE step  \citep{SCORE} first. In this step, for an input matrix $A$ or $A \oslash \widehat{N}$,   
let $\hat{\xi}_1, \ldots, \hat{\xi}_K$ be the first $K$ eigenvectors,  and let $\widehat{R} = [\hat{\xi}_2/ \hat{\xi}_1, \ldots,\hat{\xi}_K / \hat{\xi}_1]$, where $\xi / \eta$ denotes the vector of entry-wise ratios. We cluster by applying the $k$-means to the $n$ rows of $\widehat{R}$, and let $\hat{\pi}_i$ be the estimated community label of node $i$. Let $\widehat{\Pi} = [\hat{\pi}_1, \ldots, \hat{\pi}_n]'$.  
Note that $\hat{\pi}_i$ takes values in $e_1, e_2, \ldots, e_K$ ($e_k$: $k$-th 
standard basis vector of $\mathbb{R}^K$).  
%
%
%
 
\begin{algorithm}[htb!]
\caption{The Recursive SCORE (R-SCORE)}  \label{alg:RSCORE}
{\bf Input}: $A$ and $K$.
Initialize with an estimate $\widehat{\Pi}$ by SCORE.  For $m=1,2,\ldots,M$, 
\begin{itemize}\itemsep 0pt
\item {\it Refitting}. Update $\widehat{N}$ using $A$, $\widehat{\Pi}$ in the most recent step, and the refitting step below.  
\item {\it SCORE}.  Update $\widehat{\Pi}$ by applying SCORE to $A \oslash \widehat{N}$ with the most recent $\widehat{N}$.  
\end{itemize}
{\bf Output}: $\widehat{\Pi} = [\hat{\pi}_1, \ldots, \hat{\pi}_n]'$.  
\end{algorithm}

%
%

Consider the refitting step. Let $\widehat{\Pi} = [\hat{\pi}_1,\ldots, \hat{\pi}_n]'$ be the estimated $\Pi$ in the current iteration. Recall that even in the idealized case of $\widehat{\Pi} = \Pi$, refitting (i.e., how to estimate $N$)  is a difficult and open problem. We tackle this with the idea in Section \ref{sec:idea}. 
In detail, let ${\cal C}_k = \{1 \leq i \leq n: \hat{\pi}_i = e_k\}$ be the $k$-th estimated 
community, $1 \leq k \leq K$, and let $\hat{S}_{k, i} = \widehat{{\cal C}}_k  \setminus \{i\}$.  By Lemma \ref{lemma:theta} and especially (\ref{thetaestimate}),   we propose to estimate $\theta_i$ by 
\begin{equation} \label{theta2} 
\hat{\theta}_i = \sqrt{(\sum_{j \neq k \in \hat{S}_{k,i}}  A_{ij} (1 - A_{jk})    A_{ki}) / (\sum_{j \neq k \in \hat{S}_{k,i}}  (1 - A_{ij})  A_{jk} (1 - A_{ki}))}, \qquad \mbox{if $i \in \widehat{{\cal C}}_k$}. 
\end{equation} 
This corresponds to the case of $m = 3$ of our idea in Section \ref{sec:idea}, but we can also use a larger $m$. 
Also, inspired by Lemma \ref{lemma:P}, we can estimate the matrix $P$ by 
\begin{equation} \label{P2} 
\widehat{P}_{k\ell} =  [\sum_{i \in \widehat{{\cal C}}_k}  \sum_{j  \in {\widehat{\cal C}}_{\ell}}  A_{ij}] /  [\sum_{i \in \widehat{{\cal C}}_k}  \sum_{j  \in {\widehat {\cal C}}_{\ell}}\hat{\theta}_i \hat{\theta}_j (1 - A_{ij})], \qquad 1 \leq k, \ell \leq K.  
\end{equation} 
To appreciate the idea, consider the sub-matrix of $A$ by restricting the rows and columns to $\widehat{{\cal C}}_k$ and $\widehat{{\cal C}}_{\ell}$. In the idealized case where $\hat{\theta}_i = \theta_i$ and $\widehat{{\cal C}}_k = {\cal C}_k$,  $1 \leq i \leq n$,  $1 \leq k \leq K$,  the mean of the sub-matrix satisfies the condition of Lemma \ref{lemma:P} of Section \ref{sec:idea}.  This gives rises to the estimates above. 
Finally, we update our estimate of $N$ by letting 
$\widehat{N}_{ij} = (1+\hat{\theta}_i \hat{\theta}_j \widehat{P}_{k \ell})^{-1}$ if $i \in \widehat{{\cal C}}_k$ and $j \in \widehat{{\cal C}}_{\ell}$.  

Note that by the discussion in Section \ref{sec:idea}, the computational cost of the refitting step is no more than $O(n^2 d)$, where $d$ is the maximum node degree. As a result, the computational cost of R-SCORE is no more than $O(n^2 d)$. 

{\bf Remark 4}. One may want to replace the refitting step by a simpler step, but 
how to do so remains unclear. Recall that even when $\Pi$ is known, how to 
estimate $(\Theta, P)$ is a challenging problem (e.g., \cite{goldenberg2010survey, Karwa:Slakovic:2016, Rinaldo2010mle}).

\subsection{Theoretical results} \label{subsec:theory} 
 For any community detection procedure, let $\widehat{\Pi}$ be the resultant estimate for $\Pi$, where each row of $\widehat{\Pi}$ takes values 
in $\{e_1, e_2, \ldots, e_K\}$.   We measure the performance by the Hamming error rate: 
\[
r_n(\widehat{\Pi})   = \frac{1}{n}  \min_{{\cal P}} \sum_{i = 1}^n 1\{\hat{\pi}_i \neq {\cal P} \pi_i\}, \qquad 
\mbox{where ${\cal P}$ is any permutation in $\{1, 2, \ldots, K\}$}.  
\] 
Let $\widehat{\Pi}^{score}$ and $\widehat{\Pi}^{rscore}$ be the $\widehat{\Pi}$ for 
applying SCORE and R-SCORE to the adjacency matrix $A$.   

Note that under the logit-DCBM model,  
\begin{equation} \label{logit-DCBM3} 
A = \Omega - \diag(\Omega) + W, \;  \mbox{where $\Omega = N  \circ \widetilde{\Omega}$, \;  $\widetilde{\Omega} =  \Theta \Pi P \Pi' \Theta$ and $P$ has unit diagonal entries}.  
\end{equation} 
The last item is a well-known {\it identifiability condition} \citep{JKLW2022}.   Since in most real networks,  $K$ is relatively small, so we suppose that $K$ is fixed (this is only for technical simplicity and can be relaxed). 
Let $n_k$ be the number of nodes in the $k$-th community,  $1 \leq k \leq K$. We assume 
\begin{align}\label{asm:Pi}
\min_{k} \{n_k\} \geq c_0 n, \quad \text{ for some constant $0<c_0\leq 1/K$}.  
\end{align}
This is a frequently used and  {\it mild balance condition among the $K$ communities} (e.g., \cite{SCORE, JKL2021}).  Also, we assume  that there exists constants $c_2\geq c_1>0$ such that 
\begin{align}\label{asm:Theta}
\bar{\theta} \goto 0, \qquad \mbox{and} \qquad  \mbox{$c_1\bar \theta \leq \theta_i \leq c_2 \bar \theta$ for all $1\leq i \leq n$},  \qquad \mbox{where $\bar \theta = \sum_{i=1}^n \theta_i/n$}.  
\end{align} 
This condition is also 
only for technical reasons, and can be largely relaxed. 
Furthermore, we assume  there exists an constant $c_3>0$, such that
\begin{align}\label{asm:P}
 \sqrt{ n} \bar \theta \cdot | \lambda_{\min}(P)| \geq c_3 \log(n), \qquad \mbox{$\lambda_{min}(P)$:  smallest eigenvalue of $P$ in magnitude}.  
\end{align}
In the special case where $A$ satisfies a DCBM, $\Omega = \widetilde{\Omega}$, and SCORE 
was analyzed before (e.g.,  \cite{SCORE, SCORE+}), where it is known that the signal-noise ratio (SNR) is 
given by $|\lambda_K(\widetilde{\Omega})| / \lambda_1^{1/2}(\widetilde{\Omega})$ ($|\lambda_K(\widetilde{\Omega})|$ and $\lambda_1^{1/2}(\widetilde{\Omega})$ represent the 
signal and noise level respectively). In order for the Hamming error rate of SCORE tends to $0$, it is necessary that the 
SNR $\goto \infty$. Condition (\ref{asm:P}) is necessary for otherwise the SNR may tend to $0$. 
%
%
%
Also,  here $\lambda_{\min}(P)$ measures community dissimilarity. In the special case of $P = b{\bf 1}_K {\bf 1}_K + (1- b) I_K$,   $0<b<1$,   $\lambda_{\min}(P) = 1-b$. Therefore, if $\lambda_{\min}(P) \to 0$, then  $b \to 1$, and all $K$ communities are very similar. Condition (\ref{asm:P}) defines a class of {\it weak signal settings} where the problem of community detection is challenging.  
Lastly, consider $P \Pi '\Theta^2 \Pi$. Let $\eta$ be the first right eigenvector of $P \Pi '\Theta^2 \Pi$, we assume that $\eta$ is a positive vector and 
\begin{equation} \label{asm:eta} 
\lambda_1(P \Pi '\Theta^2 \Pi) -| \lambda_2(P \Pi '\Theta^2 \Pi)|\geq c_4 \lambda_1(P \Pi '\Theta^2 \Pi), \qquad  \max_{i}\eta(i)/\min_{i}\eta(i)  \leq c
\end{equation} 
This condition is necessary to guarantee that the first eigenvector is well-separated from the others and  the SCORE normalization by the first eigenvector is well-defined, since the reciprocal of each entry of the first eigenvector cannot blow up. It is a mild condition by Perron's theorem on non-negative matrices. Similar condition can be found in \cite{JKLW2022}.

Note that while SCORE was analyzed before for the DCBM, 
it was not analyzed for the logit-DCBM,  where the analysis 
is  expected to be much harder. 
In the logit-DCBM,  we have $\Omega = \widetilde{\Omega} + (N - {\bf 1} {\bf 1}') \circ \widetilde{\Omega}$. 
To avoid that the nonlinearity completely ruins the low-rank structure,  we need 
\begin{equation} \label{asm:score-main} 
\|(N - {\bf 1}_n{\bf 1}_n') \circ \widetilde{\Omega})\| /| \lambda_K(\widetilde{\Omega})| \to 0 ; 
\end{equation} 
Recall that ${\rm SNR} = |\lambda_K(\widetilde{\Omega})| / \lambda_1^{1/2}(\widetilde{\Omega})$. The following theorem 
is proved in the supplement.
%
\begin{theorem} \label{thm:SCORE}
Let $\widehat{\Pi}^{score}$ be the resultant estimate for $\Pi$ when we apply SCORE directly to $A$ and suppose (\ref{asm:Pi})-(\ref{asm:eta}) and (\ref{asm:score-main}) hold. 
With probability $1- o(n^{-3})$,
\[
r_n(\widehat{\Pi}^{score}) \leq  C [\| (N-{\bf 1}_n{\bf 1}_n')  \circ \widetilde \Omega\|^2  + \lambda_1(\widetilde \Omega)] / \lambda_K^2(\widetilde \Omega). 
\]
In the special case where $A$ satisfies the DCBM, $N = {\bf 1}_n{\bf 1}_n'$, and 
$r_n(\widehat{\Pi}^{score})  \leq C \lambda_1(\widetilde{\Omega}) / \lambda_K^2(\widetilde{\Omega})$. 
\end{theorem}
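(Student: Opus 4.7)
The plan is to treat SCORE applied to $A$ as SCORE applied to the rank-$K$ matrix $\widetilde{\Omega}$ under an additive perturbation. Writing
\begin{equation*}
\Delta := A - \widetilde{\Omega} = (N - {\bf 1}_n {\bf 1}_n') \circ \widetilde{\Omega} - \diag(\Omega) + W ,
\end{equation*}
the claim follows from the usual SCORE schema once we control $\|\Delta\|$: bound the eigenvector perturbation by a Davis-Kahan type argument, pass to a bound on the ratio matrix $\widehat{R}$, and convert this to a Hamming rate via a $k$-means argument. The key observation is that $\widetilde{\Omega} = \Theta \Pi P \Pi' \Theta$ already carries the exact low-rank DCBM structure on which SCORE is built, so the entire analysis reduces to treating $\Delta$ as the perturbation.

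First I would bound $\|\Delta\|$ piece by piece. The nonlinear term $(N - {\bf 1}_n {\bf 1}_n') \circ \widetilde{\Omega}$ is deterministic and appears as is. The diagonal term is negligible: $\|\diag(\Omega)\| \leq \max_i \Omega_{ii} \leq C \bar{\theta}^2 \ll \sqrt{\lambda_1(\widetilde{\Omega})}$ under (\ref{asm:Pi})-(\ref{asm:P}). For the stochastic term, a matrix Bernstein bound applied to the independent mean-zero rank-one pieces $W_{ij}(e_i e_j' + e_j e_i')$, together with $\max_{i,j}\Omega_{ij} \asymp \bar{\theta}^2$ and $\lambda_1(\widetilde{\Omega}) \asymp n \bar{\theta}^2$, yields $\|W\| \leq C \sqrt{\lambda_1(\widetilde{\Omega})}$ with probability $1 - o(n^{-3})$, where the $\log n$ factor from Bernstein is absorbed by condition (\ref{asm:P}). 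Consequently $\|\Delta\|^2 \leq C (\|(N - {\bf 1}_n {\bf 1}_n') \circ \widetilde{\Omega}\|^2 + \lambda_1(\widetilde{\Omega}))$ on the same high-probability event.

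Second, since $\widetilde{\Omega}$ has the DCBM structure, its leading eigenvector $\xi_1$ is componentwise positive and bounded below by Perron-Frobenius together with (\ref{asm:eta}), and the ratio matrix $R = [\xi_2/\xi_1, \ldots, \xi_K/\xi_1]$ has exactly $K$ distinct rows, one per community, with pairwise separation of order $1$. A Davis-Kahan bound with eigengap $|\lambda_K(\widetilde{\Omega})| \gg \|\Delta\|$ (ensured by (\ref{asm:score-main}) and the above bound on $\|W\|$) gives a canonical-angle estimate of order $\|\Delta\| / |\lambda_K(\widetilde{\Omega})|$; propagating this through the SCORE normalization step exactly as in \cite{SCORE, SCORE+} yields, up to an orthogonal alignment on the top-$(K-1)$ subspace,
\begin{equation*}
\|\widehat{R} - R\|_F^2 \leq C \|\Delta\|^2 / \lambda_K^2(\widetilde{\Omega}) .
\end{equation*}
A standard $k$-means perturbation bound (as in Lei-Rinaldo, also used in \cite{SCORE}) then converts this embedding error into a Hamming bound of the same order, which is the claim; the DCBM special case $N = {\bf 1}_n {\bf 1}_n'$ kills the nonlinear term and recovers the classical $\lambda_1(\widetilde{\Omega}) / \lambda_K^2(\widetilde{\Omega})$ rate.

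The main obstacle is that the nonlinear piece $(N - {\bf 1}_n {\bf 1}_n') \circ \widetilde{\Omega}$ is deterministic, so its spectral norm cannot be tamed by concentration and must appear directly in the rate. Assumption (\ref{asm:score-main}) is precisely the price for guaranteeing that this perturbation does not overwhelm the eigengap, and it is the only place where the classical DCBM analysis of SCORE genuinely breaks down; everything else is a careful adaptation of the standard argument to carry the extra deterministic term through the Davis-Kahan and $k$-means steps.
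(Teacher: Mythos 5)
Your overall schema coincides with the paper's (write $A=\widetilde{\Omega}+\Delta$, control $\|\Delta\|$, apply a sine-theta/Davis--Kahan bound, and convert the error on the ratio matrix into a Hamming rate by a $k$-means perturbation argument; this is exactly Lemmas B.1--B.2 of the appendix), but the step you compress into ``propagating this through the SCORE normalization step exactly as in \cite{SCORE,SCORE+}'' is where the real work lies, and as stated it is a genuine gap. SCORE divides entrywise by the first empirical eigenvector $\tilde{\xi}_1$, so to bound $\|\widehat{R}\mathcal{O}-R\|_F$ by $C\sqrt{n}\,\|\widetilde{\Xi}_1\mathcal{O}-\Xi_1\|_F$ (plus a $\tilde{\xi}_1$-term) one needs $\min_i|\tilde{\xi}_1(i)|\geq c/\sqrt{n}$, i.e.\ an entrywise bound $\|\tilde{\xi}_1-\xi_1\|_{\max}\ll 1/\sqrt{n}$. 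This does \emph{not} follow from Davis--Kahan: the available $\ell_2$ bound is of order $[\|(N-{\bf 1}_n{\bf 1}_n')\circ\widetilde{\Omega}\|+\sqrt{\lambda_1(\widetilde{\Omega})}]/\lambda_1(\widetilde{\Omega})\gtrsim 1/(\sqrt{n}\,\bar{\theta})\gg 1/\sqrt{n}$, so an $\ell_2$-small perturbation could still annihilate individual entries of $\tilde{\xi}_1$. Nor can the entrywise control be quoted from the DCBM analyses of SCORE/SCORE+, because the perturbation here contains the deterministic matrix $(N-{\bf 1}_n{\bf 1}_n')\circ\widetilde{\Omega}$; handling it is precisely the new part of the paper's proof, namely a leave-one-out construction ($\tilde{\xi}_1^{(i)}$ built from $A$ with the $i$-th row and column of $W$ zeroed out) combined with Bernstein bounds on $e_{n,i}'W\tilde{\xi}_1^{(i)}$, which yields $\|\tilde{\xi}_1-\xi_1\|_{\max}\leq C\|(N-{\bf 1}_n{\bf 1}_n')\circ\widetilde{\Omega}\|/(\sqrt{n}\,n\bar{\theta}^2)+C\sqrt{\log n}/(n\bar{\theta})\ll 1/\sqrt{n}$. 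Without an argument of this type your passage from the subspace bound to $\|\widehat{R}-R\|_F^2\leq C\|\Delta\|^2/\lambda_K^2(\widetilde{\Omega})$ is unjustified.

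A second, smaller issue: matrix Bernstein on the rank-one pieces gives only $\|W\|\leq C\sqrt{n\bar{\theta}^2\log n}$, and the stray $\log n$ cannot be ``absorbed by condition (\ref{asm:P})''---it would propagate into the final rate as $C\lambda_1(\widetilde{\Omega})\log n/\lambda_K^2(\widetilde{\Omega})$, strictly weaker than the stated bound, since there is no way to dominate $\lambda_1(\widetilde{\Omega})\log n$ by $\|(N-{\bf 1}_n{\bf 1}_n')\circ\widetilde{\Omega}\|^2+\lambda_1(\widetilde{\Omega})$ in general. The paper instead invokes the sharp non-asymptotic bound of \citep{bandeira2016sharp} to get $\|W\|\leq C\sqrt{n\bar{\theta}^2}=C\sqrt{\lambda_1(\widetilde{\Omega})}$ with probability $1-o(n^{-3})$, which is what produces the clean $\lambda_1(\widetilde{\Omega})$ term in the theorem.
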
 

Next, we consider R-SCORE. Since R-SCORE is a recursive algorithm, it is useful to present 
a result that is {\it applicable in general cases}. Consider an estimate for $\widetilde \Omega$ in the form of  $\widehat{\widetilde \Omega} = \widehat \Theta \widehat \Pi \widehat P \widehat \Pi' \widehat \Theta$. By our construction, $\widehat{N}_{ij} = 1 /(1 + \widehat{\widetilde{\Omega}}_{ij})$.  Suppose that with probability $1 - o(n^{-3})$, 
\begin{equation} \label{asm:rscore1a} 
\|\widehat P - P\|_{\max}\ll \min\{ 1, |\lambda_{\min}(P) |\bar \theta^{-1}\}, \quad  \|\widehat \Pi - \Pi\| (\sqrt{n}\, |\lambda_{\min}(P)|)^{-1} \bar\theta \to 0, 
\end{equation} 
and 
\begin{equation} \label{asm:rscore1b} 
\|(N \oslash \widehat N  - {\bf 1}_n{\bf 1}_n') \circ \widetilde \Omega\|  = o(|\lambda_{K} ( \widetilde \Omega)| ), \qquad \|(N \oslash \widehat N  - {\bf 1}_n{\bf 1}_n') \|_{F} = o( \lambda_1(\widetilde \Omega) )\,. 
\end{equation} 
The following lemma is proved in the supplement. 
\begin{lemma}\label{lemma:RSCORE}
Suppose (\ref{asm:Pi})-(\ref{asm:eta}) hold.  Let $\widehat{\Pi}$ be the 
result of applying SCORE to $A \oslash \widehat{N}$ where (\ref{asm:rscore1a})-(\ref{asm:rscore1b}) and $\hat \theta_i <C \bar \theta$ hold. With probability $1- o(n^{-3})$, 
\[
r_n(\widehat{\Pi})  \leq   \frac{C[\| (N\oslash \widehat N - {\bf 1}_n{\bf 1}_n')  \circ \widetilde \Omega\|^2  +\tau_n^2+   { \lambda_1(\widetilde \Omega)}]}{\lambda_K^2(\widetilde \Omega) },   \mbox{where 
$\tau_n =   \sqrt{n}\bar \theta^3 [\sqrt{n} \| \widehat P - P\|_{\max}  +   \| \widehat \Pi - \Pi\|]$}.
\] 
\end{lemma}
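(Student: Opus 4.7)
The idea is to treat $A \oslash \widehat N$ as a noisy version of the rank-$K$ matrix $\widetilde \Omega$ and then reuse the SCORE machinery from Theorem \ref{thm:SCORE}, with extra care for the data-dependent denominator $\widehat N$. Starting from the decomposition
\[
A \oslash \widehat N \;=\; \widetilde \Omega \;+\; \widetilde \Omega \circ (N \oslash \widehat N - \mathbf{1}_n\mathbf{1}_n') \;-\; \diag(\Omega \oslash \widehat N) \;+\; W \oslash \widehat N,
\]
the first piece is the low-rank signal, the second is the systematic bias that is already controlled by the first half of (\ref{asm:rscore1b}), and the diagonal correction has operator norm $O(\bar\theta^2)$ and is of lower order. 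It therefore suffices to bound the ``noise'' piece $W \oslash \widehat N$ and to assemble a total operator-norm budget.

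The assumption $\hat\theta_i \leq C\bar\theta$ together with $\|\widehat P - P\|_{\max} \ll 1$ forces $\widehat{\widetilde \Omega}_{ij} = O(\bar\theta^2)$, so $\widehat N_{ij} = 1/(1 + \widehat{\widetilde \Omega}_{ij})$ lies in $[1 - O(\bar\theta^2), 1]$. I would split
\[
W \oslash \widehat N \;=\; W \;+\; W \circ \widehat{\widetilde \Omega},
\]
bound the first term by the standard Bernstein-type inequality for sparse Bernoulli matrices as $\|W\|^2 \leq C\lambda_1(\widetilde \Omega)$ on a $1 - o(n^{-3})$ event, and bound the second by a Frobenius-norm estimate after expanding $\widehat{\widetilde \Omega} = \widetilde \Omega + (\widehat{\widetilde \Omega} - \widetilde \Omega)$ and controlling the residual through $\|\widehat P - P\|_{\max}$, $\|\widehat \Pi - \Pi\|$, and $\|\Pi\| = O(\sqrt n)$. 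Careful tracking of the powers of $\bar\theta$ introduced by the factors $\widehat \Theta$, $\Theta$ produces precisely the $\tau_n^2$ contribution, with $\tau_n = \sqrt n\,\bar\theta^3 [\sqrt n \|\widehat P - P\|_{\max} + \|\widehat \Pi - \Pi\|]$.

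With $\|A \oslash \widehat N - \widetilde \Omega\|^2 \leq C[\|(N \oslash \widehat N - \mathbf{1}_n\mathbf{1}_n') \circ \widetilde \Omega\|^2 + \tau_n^2 + \lambda_1(\widetilde \Omega)]$ in hand, I would apply the Davis-Kahan $\sin\Theta$ theorem to get a top-$K$ eigenspace perturbation of order $\|E\|/|\lambda_K(\widetilde \Omega)|$; the second half of (\ref{asm:rscore1b}) and the hypotheses of Theorem \ref{thm:SCORE} keep this ratio small. Assumption (\ref{asm:eta}) then guarantees that the leading eigenvector of $\widetilde \Omega$ has entries uniformly bounded away from zero by Perron's theorem, so the SCORE ratio matrix $\widehat R = [\hat\xi_2/\hat\xi_1,\ldots,\hat\xi_K/\hat\xi_1]$ is well-defined and stable under perturbation. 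An entrywise eigenvector argument in the spirit of \cite{SCORE,SCORE+} then converts $\|\widehat R - R\|_F^2$ into the claimed Hamming bound, with the denominator $\lambda_K^2(\widetilde \Omega)$ arising from the minimum cluster separation after SCORE normalization. The main obstacle I anticipate is the data-dependence of $W \oslash \widehat N$: its entries are not independent, so matrix concentration cannot be applied directly; the decomposition above resolves this by isolating the stochasticity inside the plain Bernoulli matrix $W$, leaving only a high-probability deterministic bound on the second-order piece $W \circ \widehat{\widetilde \Omega}$.
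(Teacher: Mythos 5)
Your overall route is the same as the paper's: decompose $A \oslash \widehat N = \widetilde\Omega + (N\oslash\widehat N - {\bf 1}_n{\bf 1}_n')\circ\widetilde\Omega - \diag(N\circ\widetilde\Omega\oslash\widehat N) + W\oslash\widehat N$, isolate the randomness in a matrix with independent entries ($W$, or equivalently $W\oslash N$), bound the data-dependent remainder $W\circ(\widehat\Theta\widehat\Pi\widehat P\widehat\Pi'\widehat\Theta - \Theta\Pi P\Pi'\Theta)$ by Frobenius/row-norm estimates in terms of $\|\widehat P-P\|_{\max}$ and $\|\widehat\Pi-\Pi\|$ (exactly the source of $\tau_n$), apply the sine-theta theorem with gap $|\lambda_K(\widetilde\Omega)|$, and convert $\|\widehat R\mathcal O - R\|_F^2$ into the Hamming error via the k-means argument (Lemma~\ref{lem: Hamming-R}). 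So the skeleton matches the paper's proof.

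The genuine gap is the entrywise control of the leading empirical eigenvector, which is the bulk of the paper's argument. To make the SCORE normalization $\diag(\hat\xi_1)^{-1}$ harmless you need $\|\hat\xi_1-\xi_1\|_{\max}\ll 1/\sqrt n$, hence $|\hat\xi_1(i)|\asymp 1/\sqrt n$ for every $i$; this does not follow from Davis--Kahan (an $\ell_2$ bound) together with Perron and (\ref{asm:eta}) (which control only the population $\xi_1$), and your claim that the decomposition ``resolves'' the data-dependence is true only for the operator-norm budget. In the entrywise expansion the problematic term is $|e_{n,i}'(W\oslash\widehat N)\hat\xi_1|$, where $\hat\xi_1$ depends on the $i$-th row of $W$, so Bernstein cannot be applied directly; the paper handles this with a leave-one-out eigenvector $\hat\xi_1^{(i)}$ of $A^{(i)}\oslash N$ (the $i$-th row and column of $W$ zeroed out), a sine-theta bound on $\|\hat\xi_1-\hat\xi_1^{(i)}\|$, and a self-bounding rearrangement in $\|\hat\xi_1-\xi_1\|_{\max}$; it is also in this step that $\|N\oslash\widehat N-{\bf 1}_n{\bf 1}_n'\|_F$, i.e.\ the second half of (\ref{asm:rscore1b}), is actually needed. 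Your one-sentence appeal to an entrywise argument ``in the spirit of SCORE/SCORE+'' points in the right direction, but as written the plan misattributes why $\widehat R$ is stable and omits the leave-one-out device that makes that step go through; note also that converting $\|\widehat R-R\|_F^2$ into the Hamming bound is the job of the separate k-means lemma (requiring $\|\widehat R-R\|_F^2=o(n)$ and constant-order separation of the $K$ distinct rows of $R$), not of the entrywise eigenvector argument itself.
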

 
We now show that (\ref{asm:rscore1a})-(\ref{asm:rscore1b}) hold in many settings. 
Our numerical study shows that R-SCORE typically converges in just one iteration, so for convenience in analysis, we consider R-SCORE with one iteration from now on in this section.  Write for short $\delta_n = \max\{\|(N - {\bf 1}_n {\bf 1}_n') \circ 
\widetilde{\Omega}\|^2, \lambda_1(\widetilde{\Omega})\} / \lambda_K^2(\widetilde{\Omega})$.  
Theorem \ref{thm:main}  is proved in the supplement. 
\begin{theorem}  \label{thm:main} 
Suppose (\ref{asm:Pi})-(\ref{asm:eta}) hold and  
$\delta_n / \min\{\bar \theta^2, \bar \theta |\lambda_{\min} (P)|,  |\lambda_{\min}(P)|^2 /\bar \theta^2\} \goto 0$.    
Let $\widehat{\Pi}^{rscore}$ be the estimate for $\Pi$ by applying R-SCORE to $A$, and 
let $(\widehat{\Theta}, \widehat{P}, \widehat{\Omega}, \widehat{N})$ be the corresponding 
estimates for $(\Theta, P, \Omega, N)$ in the refitting step of R-SCORE. 
We have that (\ref{asm:rscore1a})-(\ref{asm:rscore1b}) hold and that 
with probability $1 - o(n^{-3})$,  
\[
r_n(\widehat{\Pi}^{rscore})  \leq \frac{C}{\lambda_K^2(\widetilde \Omega)} \Big( \lambda_1(\widetilde{\Omega}) + n\bar \theta^4 \log(n) + n^2 \bar \theta^2 \delta_n^2 + n^2 \bar \theta^6 \delta_n\Big). 
\]
\end{theorem}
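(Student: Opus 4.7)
The plan is to bootstrap through the pipeline of R-SCORE, using Theorem~\ref{thm:SCORE} to control the initial SCORE estimate, then controlling the refitting step using the cancellation trick analysis, and finally appealing to Lemma~\ref{lemma:RSCORE} to convert improved estimates of $N$ into the advertised Hamming rate. Concretely, I would proceed in four stages.

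First, let $\widehat{\Pi}^{(0)}$ denote the initial SCORE output used to seed the recursion. Theorem~\ref{thm:SCORE} applied to $A$ directly (whose hypotheses are implied by (\ref{asm:Pi})--(\ref{asm:eta}) together with the assumption $\delta_n\to 0$) gives $r_n(\widehat{\Pi}^{(0)})\leq C\delta_n$ with probability $1-o(n^{-3})$. From here, after an appropriate label permutation, all but $Cn\delta_n$ nodes have $\hat\pi_i=\pi_i$, and the estimated communities $\widehat{\mathcal C}_k$ agree with $\mathcal C_k$ up to a set of total size $Cn\delta_n$; I would absorb this mismatch into a small perturbation term throughout.

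Second, I would control the refitting outputs $\hat\theta_i$ and $\widehat P_{k\ell}$ built from $\widehat{\mathcal C}_k$ via (\ref{theta2}) and (\ref{P2}). Using the cancellation identity of Lemma~\ref{lemma:theta} and the concentration results of Sections~C.2--C.3 of the supplement (invoked for the ``idealized'' $\widehat \Pi=\Pi$ case), together with the fact that on the high-probability event the symmetric difference $\widehat{\mathcal C}_k\triangle\mathcal C_k$ has size $O(n\delta_n)$, I expect bounds of the form
\begin{equation*}
|\hat\theta_i - \theta_i|\ \lesssim\ \bar\theta\bigl(\sqrt{\log(n)/(n\bar\theta^2)} + \delta_n\bigr),\qquad
\|\widehat P - P\|_{\max}\ \lesssim\ \sqrt{\log(n)/(n^2\bar\theta^2)} + \bar\theta^2\delta_n .
\end{equation*}
The smallness hypothesis $\delta_n/\min\{\bar\theta^2,\bar\theta|\lambda_{\min}(P)|,|\lambda_{\min}(P)|^2/\bar\theta^2\}\to 0$ together with (\ref{asm:P}) is precisely what is needed so that these per-entry bounds translate into the first half of (\ref{asm:rscore1a}) and so that $\hat\theta_i\leq C\bar\theta$ uniformly. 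For the second half of (\ref{asm:rscore1a}), I would use $\|\widehat\Pi-\Pi\|^2\leq 2Kn\cdot r_n(\widehat\Pi^{(0)})\leq Cn\delta_n$.

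Third, I would translate these estimates into control of $N\oslash\widehat N-{\bf 1}_n{\bf 1}_n'$. A first-order Taylor expansion of $x\mapsto(1+x)/(1+y)$ around $x=y$ gives, entrywise,
\begin{equation*}
|(N\oslash\widehat N)_{ij}-1|\ \lesssim\ |\hat\theta_i\hat\theta_j\widehat P_{k\ell}-\theta_i\theta_j P_{k\ell}|,
\end{equation*}
so after multiplication by $\widetilde\Omega_{ij}=\Theta_{ii}\Theta_{jj}\pi_i'P\pi_j$ and summing entrywise, a crude Frobenius bound yields $\|(N\oslash\widehat N-{\bf 1}_n{\bf 1}_n')\circ\widetilde\Omega\|^2\lesssim n^2\bar\theta^2\delta_n^2 + n^2\bar\theta^6\delta_n + n\bar\theta^4\log(n)$, and similarly the second part of (\ref{asm:rscore1b}) holds under the stated smallness hypotheses. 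These are the three additive contributions appearing in the final display of the theorem.

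Fourth and last, with (\ref{asm:rscore1a})--(\ref{asm:rscore1b}) verified, Lemma~\ref{lemma:RSCORE} applies and yields
\begin{equation*}
r_n(\widehat{\Pi}^{rscore})\ \leq\ \frac{C}{\lambda_K^2(\widetilde\Omega)}\Bigl(\lambda_1(\widetilde\Omega)+\tau_n^2+\|(N\oslash\widehat N-{\bf 1}_n{\bf 1}_n')\circ\widetilde\Omega\|^2\Bigr),
\end{equation*}
and substituting the bounds on $\tau_n$ (driven by $\sqrt{n}\bar\theta^3[\sqrt{n}\|\widehat P-P\|_{\max}+\|\widehat\Pi-\Pi\|]$) and on the Frobenius-type term gives exactly the claimed rate. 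The main obstacle I anticipate is the second stage: tracking how the $O(n\delta_n)$ clustering errors of the SCORE initialization propagate through the nonlinear cycle-count ratio (\ref{theta2}), since the numerator and denominator are each sums of $\Theta(n^2)$ weakly correlated terms built from nonlinear factors $N_{ij}$, and showing that the cancellation identity is only perturbed by $O(\delta_n)$ (not by something much worse) when $\widehat{\mathcal C}_k$ differs from $\mathcal C_k$ requires a careful splitting of the sum according to which indices are ``well-classified''. Everything else is routine concentration plus bookkeeping of the three additive error contributions.
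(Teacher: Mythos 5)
Your proposal follows essentially the same route as the paper: bound the initial SCORE error by $\delta_n$ via Theorem~\ref{thm:SCORE}, control the refitted $\hat\theta_i$ and $\widehat P$ by splitting the cycle-count sums according to well- vs.\ mis-classified indices (the paper's Lemmas in Sections C.3--C.4), convert these into bounds on $N\oslash\widehat N-{\bf 1}_n{\bf 1}_n'$ to verify (\ref{asm:rscore1a})--(\ref{asm:rscore1b}), and finish with Lemma~\ref{lemma:RSCORE} and $\tau_n$. The only discrepancy is that your stated intermediate rates are somewhat more optimistic than what the paper's crude misclassification bounds deliver (the paper gets $|\hat\theta_i-\theta_i|\lesssim\sqrt{\log n/n}+\delta_n/\bar\theta$ and $\|\widehat P-P\|_{\max}\lesssim\sqrt{\log n/(n\bar\theta^2)}+\delta_n/\bar\theta^2$, not $\bar\theta\delta_n$ and $\bar\theta^2\delta_n$), but since you route the dominant contributions through the crude Frobenius bound, which matches the theorem's three additive terms, the final conclusion is unaffected.
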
 

\begin{corollary} \label{cor:main} 
Suppose (\ref{asm:Pi})-(\ref{asm:eta}) hold,  $|\lambda_{\min} (P)| \geq C$ for a constant $C > 0$,  and 
$n \bar{\theta}^4  \goto \infty$. Let $\widehat{\Pi}^{score}$ and $\widehat{\Pi}^{rscore}$ be the 
estimates for $\Pi$ by applying SCORE and R-SCORE to the adjacency matrix $A$, respectively. 
With probability $1 - o(n^{-3})$, 
\[
r_n(\widehat{\Pi}^{score})  \leq C \Big( \frac{1}{n\bar \theta^2} + \bar \theta^4\Big), \qquad r_n(\widehat{\Pi}^{rscore})  \leq C\Big( \frac{1}{n\bar \theta^2} + \bar\theta^6  +   \frac{\log(n)}{n}\Big). 
\]
\end{corollary}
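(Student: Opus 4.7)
The plan is to obtain Corollary \ref{cor:main} as a direct consequence of Theorem \ref{thm:SCORE} and Theorem \ref{thm:main}, by specializing the abstract quantities $\lambda_1(\widetilde\Omega)$, $\lambda_K(\widetilde\Omega)$, $\|(N-{\bf 1}_n{\bf 1}_n')\circ\widetilde\Omega\|$, and $\delta_n$ to the regime where $\theta_i\asymp\bar\theta$, the $K$ communities are balanced, and $|\lambda_{\min}(P)|\geq C$. First I would bound the spectrum of $\widetilde\Omega=\Theta\Pi P\Pi'\Theta$: since $\Pi'\Theta^2\Pi$ is diagonal with entries $\asymp n\bar\theta^2$ by (\ref{asm:Pi}) and (\ref{asm:Theta}), and $|\lambda_{\min}(P)|\asymp\lambda_1(P)\asymp 1$, a standard similarity argument gives $\lambda_1(\widetilde\Omega)\asymp|\lambda_K(\widetilde\Omega)|\asymp n\bar\theta^2$.

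Next I would estimate the nonlinear perturbation matrix $M:=(N-{\bf 1}_n{\bf 1}_n')\circ\widetilde\Omega$. Since $N_{ij}=(1+\widetilde\Omega_{ij})^{-1}$ and $\widetilde\Omega_{ij}\leq C\bar\theta^2\to0$, a first-order Taylor expansion gives $N_{ij}-1=-\widetilde\Omega_{ij}+O(\widetilde\Omega_{ij}^2)$, so that
\[
M_{ij}=-\widetilde\Omega_{ij}^2\bigl(1+o(1)\bigr),\qquad |M_{ij}|\asymp\bar\theta^4.
\]
The matrix $-\widetilde\Omega\circ\widetilde\Omega$ is, up to constant factors on the $K^2$ blocks determined by $\Pi$, of the form $\Theta^2\Pi \tilde P\Pi'\Theta^2$ with $\tilde P_{k\ell}=P_{k\ell}^2$, hence has rank at most $K$ and spectral norm $\asymp n\bar\theta^4$. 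Therefore $\|M\|\asymp n\bar\theta^4$ and $\|M\|_F\asymp n\bar\theta^4$ as well. Plugging these in gives
\[
\delta_n=\frac{\max\{\|M\|^2,\lambda_1(\widetilde\Omega)\}}{\lambda_K^2(\widetilde\Omega)}\asymp\frac{n^2\bar\theta^8+n\bar\theta^2}{n^2\bar\theta^4}\asymp\bar\theta^4+\frac{1}{n\bar\theta^2}.
\]

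For the SCORE bound, substituting directly into Theorem \ref{thm:SCORE} yields $r_n(\widehat\Pi^{score})\leq C\delta_n\leq C(\bar\theta^4+1/(n\bar\theta^2))$, which is the first inequality. For R-SCORE, I would first check that the hypothesis $\delta_n/\min\{\bar\theta^2,\bar\theta|\lambda_{\min}(P)|,|\lambda_{\min}(P)|^2/\bar\theta^2\}\to0$ of Theorem \ref{thm:main} reduces, under $|\lambda_{\min}(P)|\geq C$ and $\bar\theta\to 0$, to $\delta_n/\bar\theta^2\to0$; since $\delta_n/\bar\theta^2\asymp\bar\theta^2+1/(n\bar\theta^4)$, this follows from $n\bar\theta^4\to\infty$. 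Then I would apply Theorem \ref{thm:main} and simplify each of its four terms after dividing by $\lambda_K^2(\widetilde\Omega)\asymp n^2\bar\theta^4$: the term $\lambda_1(\widetilde\Omega)/\lambda_K^2(\widetilde\Omega)$ contributes $1/(n\bar\theta^2)$, the term $n\bar\theta^4\log n$ contributes $\log(n)/n$, the term $n^2\bar\theta^2\delta_n^2$ contributes $\bar\theta^6+1/(n^2\bar\theta^6)$, and the term $n^2\bar\theta^6\delta_n$ contributes $\bar\theta^8+1/n$. Using $n\bar\theta^4\to\infty$, the term $1/(n^2\bar\theta^6)$ is dominated by $1/(n\bar\theta^2)$, and $\bar\theta^8$ by $\bar\theta^6$, so collecting the survivors gives the stated bound $C(1/(n\bar\theta^2)+\bar\theta^6+\log(n)/n)$.

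I do not expect a serious obstacle here; the main care is in the spectral-norm estimate for $M$, where one has to exploit the block/low-rank structure of $\widetilde\Omega\circ\widetilde\Omega$ rather than use the naive Frobenius bound (which would only give $\|M\|\leq n\bar\theta^4$ but is already adequate), and in verifying that every error term produced by Theorem \ref{thm:main} is absorbed into one of the three listed terms under the assumption $n\bar\theta^4\to\infty$.
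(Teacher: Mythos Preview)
Your proposal is correct and follows the same overall route as the paper: establish $\lambda_1(\widetilde\Omega)\asymp|\lambda_K(\widetilde\Omega)|\asymp n\bar\theta^2$, bound the nonlinear perturbation, check that the hypothesis of Theorem~\ref{thm:main} reduces to $\delta_n\ll\bar\theta^2$ (which follows from $n\bar\theta^4\to\infty$), and then read off both inequalities from Theorems~\ref{thm:SCORE} and~\ref{thm:main}.

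The one place where you diverge from the paper is in bounding $\|(N-{\bf 1}_n{\bf 1}_n')\circ\widetilde\Omega\|$. The paper uses its Lemma~\ref{lem: Relation}, a Perron-theorem argument giving the one-line estimate $\|(N-{\bf 1}_n{\bf 1}_n')\circ\widetilde\Omega\|\leq\|N-{\bf 1}_n{\bf 1}_n'\|_{\max}\,\lambda_1(\widetilde\Omega)\leq C\bar\theta^2\cdot n\bar\theta^2$. You instead expand $N_{ij}-1=-\widetilde\Omega_{ij}/(1+\widetilde\Omega_{ij})$ and identify the leading part of $M$ as $-\widetilde\Omega\circ\widetilde\Omega=\Theta^2\Pi\tilde P\Pi'\Theta^2$, a rank-$K$ matrix with norm $\asymp n\bar\theta^4$. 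Both yield the same bound $Cn\bar\theta^4$; your argument pins down the exact order at the cost of handling the Taylor remainder, while the paper's is a clean one-liner once the Perron lemma is available. One small slip: the term $n^2\bar\theta^6\delta_n/\lambda_K^2(\widetilde\Omega)=\bar\theta^2\delta_n$ contributes $\bar\theta^6+1/n$, not $\bar\theta^8+1/n$; this does not affect the conclusion since $\bar\theta^6$ already appears from the $\delta_n^2$ term and $1/n\leq\log(n)/n$.
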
 
With a more careful analysis, we conjecture that the condition of $n \bar{\theta}^4$ can be 
removed, and the rate of R-SCORE is at least as fast as that of SCORE in the whole range 
of interest (note that the proof of Theorem \ref{thm:main} and Corollary \ref{cor:main} 
is already hard and relatively long).  If we calibrate $\bar{\theta} = n^{-\beta}$ for a constant $\beta > 0$, then in order for the SNR $\goto \infty$  (e.g., see  (\ref{asm:score-main})), we must have $0 < \beta < 1/2$.  In this range, 
$r_n(\Pi^{score}) \leq C n^{-a_0(\beta)}$ and $r_n(\Pi^{rscore}) \leq C n^{-a_1(\beta)}$, where 
\[
a_0(\beta) = \left\{
\begin{array}{ll} 
4\beta,   &   0 <  \beta  \leq  1/6, \\
1 - 2 \beta, &   1/6 < \beta < 1/2,   
\end{array} 
\right.  \qquad   a_1(\beta) =  \left\{
\begin{array}{ll} 
6 \beta,    &   0 <  \beta  \leq  1/8, \\ 
(1 - 2 \beta), &  1/8 < \beta \leq 1/6, \\ 
(1 - 2 \beta), &   1/6 < \beta < 1/2. 
\end{array} 
\right.; 
\] 
see Figure \ref{fig:add}. Therefore, when $0 < \beta < 1/6$, the Hamming error rate of R-SCORE is faster than that of SCORE.    
When $\beta > 1/6$, such a conclusion may also be true, as the current bound may be 
conservative: the Hamming error rate for R-SCORE depends on a complicated data dependent matrix $\widehat{N}$, 
and the error bound can hopefully be improved with more careful analysis. 

{\bf Remark 5}. The proof of Theorem \ref{thm:main} and Corollary \ref{cor:main} is hard, for 
$\widehat{N}$ has a complicated form: recall that $\widehat{N}_{ij} = 1 / (1 + \widehat{\widetilde{\Omega}}_{ij})$ and 
$\widehat{\widetilde\Omega} = \widehat \Theta \widehat \Pi \widehat P \widehat \Pi' \widehat \Theta$, where 
$\widehat{\Pi}$ is the SCORE estimate for $\Pi$, and $(\widehat{\Theta}, \widehat{P})$ are 
constructed using $\widehat{\Pi}$, $A$, and a cancellation trick. Note that even when $\Pi$ is known, 
how to estimate $(\Theta, P)$ is a nontrivial problem and we resolve it with a cancellation trick.

\begin{figure}[htbp]
\centering
\includegraphics[width=0.5\linewidth, height = 0.30\linewidth]{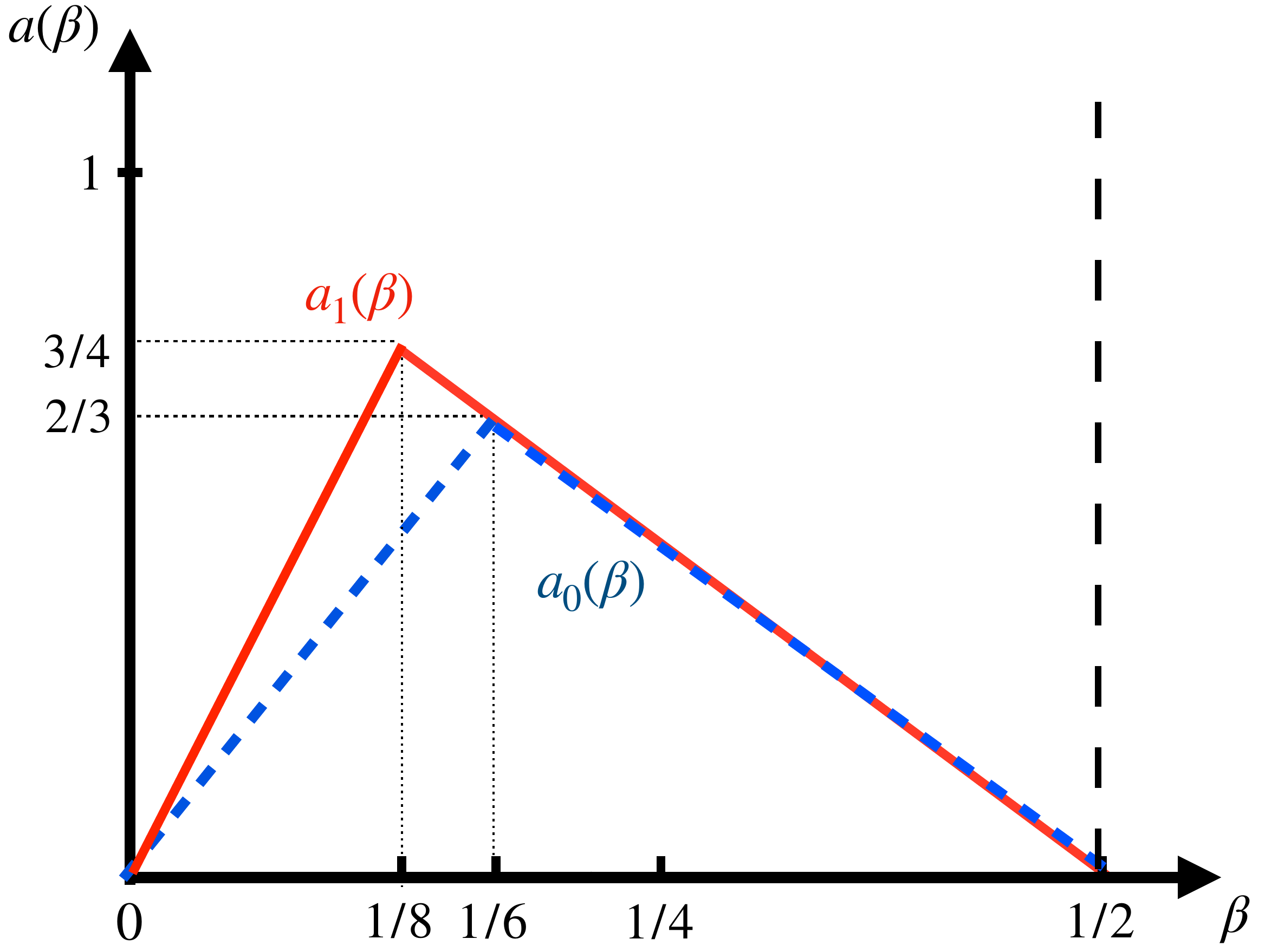}
\caption{Comparison of error rates ($x$-axis: $\beta$. $y$-axis: $a_0(\beta)$ (blue) and $a_1(\beta)$ (red)) .}
\label{fig:add} 
\end{figure}

\section{Simulation results} \label{sec:simul}  
We compare R-SCORE with SCORE and a non-convex penalization MLE-based approach 
by \citep{ma2020universal}, which we refer to as npMLE.  We compare with npMLE for \citep{ma2020universal} 
deals with the LSM and is probably the closest related work to our paper. 
Our study contains $3$ experiments.  In Experiment 1, we compare R-SCORE with SCORE (which is viewed as a benchmark).  In Experiment 2, we study how the error rates of R-SCORE and npMLE change across different iterations (both algorithms are recursive). In Experiment 3, we compare the errors of R-SCORE and npMLE. 

{\bf Experiment 1.} {\it R-SCORE vs. SCORE}.  
The networks are simulated as follows: fixing $(n, K)$, we first simulate an $n \times n$ matrix $\Omega$ as in the logit-DCBM model as follows. 
  Let $P = \beta{\bf 1}_K {\bf 1}_K + (1- \beta) I_K$ and
 $\Pi = [\pi_1, \pi_2, \ldots, \pi_n]'$, where $\pi_i = e_k$ for $n_0 = n/K$ different $i$ (recall that 
 $e_k$ is the $k$-th Euclidean basis vector of $\mathbb{R}^K$).  Moreover, fixing a parameter $b_n > 0$, we generate $\theta$ as follows.  We first draw $\theta_1^0, \theta_2^0, \ldots, \theta_n^0$ i.i.d. from a fixed distribution $F(\cdot)$, and then renormalize $\theta_0$ to get $\theta_i = b_n\cdot \theta_i/\|\theta\|$ for $1\leq i\leq n$.  
Finally, we let $\widetilde{\Omega} = \Theta \Pi P \Pi' \Theta$ and $\Omega_{ij} = \widetilde{\Omega}_{ij} / (1 + \widetilde{\Omega}_{ij})$, $1 \leq i, j \leq n$. Once we have such a matrix $\Omega$, we use it to generate a binary adjacency  matrix $A$. 
 
In such settings,   approximately,   the Signal-to-Noise ratio (SNR)  is $b_n(1- \beta)$ (e.g., see \cite{JKL2021}). 
It is desirable to choose settings that the SNR is neither too large or too small.   
Consider four settings (A), (B), (C) and (D).  In Setting (A), we fix $(n, K) = (2400, 3)$ and  $F = {\rm Uniform} (0.01, 2)$.  We choose 
$b_n = 60$ and $\beta = 23/30$ (and this way,  ${\rm SNR} = 14$). 
In Setting (B), we fix $(n, K) = (2500, 5)$ and  $F = {\rm Uniform} (0.1, 0.8)$.  We choose $b_n = 70$ and $\beta = 0.65$ (and so ${\rm SNR} = 24.5$). In Setting (C), we fix $(n, K) = (2400, 3)$ and  $F = {\rm Pareto} (10, 1)$. To avoid extremely severe degree heterogeneity, we truncate each $\theta_i^0$ at $200$.  We choose 
$b_n = 70$ and $\beta = 0.55$ (and this way,  ${\rm SNR} = 31.5$). 
In Setting (D), we fix $(n, K) = (2500, 5)$ and  $F = {\rm Pareto} (10, 1)$ with truncation at $100$.  We choose $b_n = 50$ and $\beta = 0.55$ (and so ${\rm SNR} = 22.5$).

The results are in Figure \ref{fig:Pi_error}, where the $x$-axis is the $\#$ of iterations $m$, and 
the $y$-axis is the corresponding error rate by R-SCORE (green dashed line:  error rate for 
R-SCORE with $m = 0$, same as that of applying SCORE directly to the adjacency matrix $A$).      In all settings above, the performance of directly applying SCORE ($m = 0$) is unsatisfactory. The improvements achieved by R-SCORE are significant, with substantially reduced error rates. This suggests that (a) the R-SCORE is successful as we expect, and (b) the iteration algorithm typically converges in very few iterations. Based on the numerical results, we believe that the refitting procedure steps  in R-SCORE are effective: by re-normalization, they reduce 
the logit-DCBM model approximately to a low-rank model. 
 
{\bf Experiment 2.} {\it Error rates of R-SCORE and npMLE in different iterations}. 
Fix $(n, K) = (5400, 6)$. Let $\Pi$ be generated similarly to Experiment 1 except that $\pi_ i = e_k$ for $n_k$ different $i$. Let $P = \left[\begin{array}{ll}
P_1 & P_2 \\ 
P_2 & P_1 \\  
\end{array} \right]$, where  $P_1= 0.5  \beta_1{\bf 1}_{K/2} {\bf 1}_{K/2} + (1- 0.5 \beta_1) I_{K/2}$ and $P_2 = 0.5 (\beta_1 + \beta_2){\bf 1}_{K/2} {\bf 1}_{K/2}$.  We generate $\theta_i$ in the same manner as in Experiment 1 and similarly construct $\widetilde \Omega$ and $\Omega$   to generate a binary adjacency matrix $A$.  
In this experiment, we choose $(n_1, n_2, \cdots, n_K) = 200 \cdot(5, 1.5, 6, 3, 7.5, 4)$, $F = {\rm Uniform} (0.01, 2)$, $b_n= 80$, and $(\beta_1, \beta_2)= (0.9, 0.6 )$. Under this setting, the SNR is given by $b_n |\lambda_{\min}(P)|= 28$. 
For each $m = 1, 2, \ldots, 1000$, we apply R-SCORE and npMLE with $m$ iterations (SCORE is also included for comparison, which is the same as R-SCORE with $m = 0$).  The result is in Figure~ \ref{fig:compare_nonconvex} (left).  
We observed that (a) R-SCORE converges much more rapidly than the npMLE, and (b) 
the error rates of R-SCORE is significantly lower than that of SCORE, and is slightly lower 
than that of npMLE.  
%
%

%
%

\begin{figure}[htbp]
\centering
\includegraphics[width=\linewidth, height = 0.28 \linewidth]{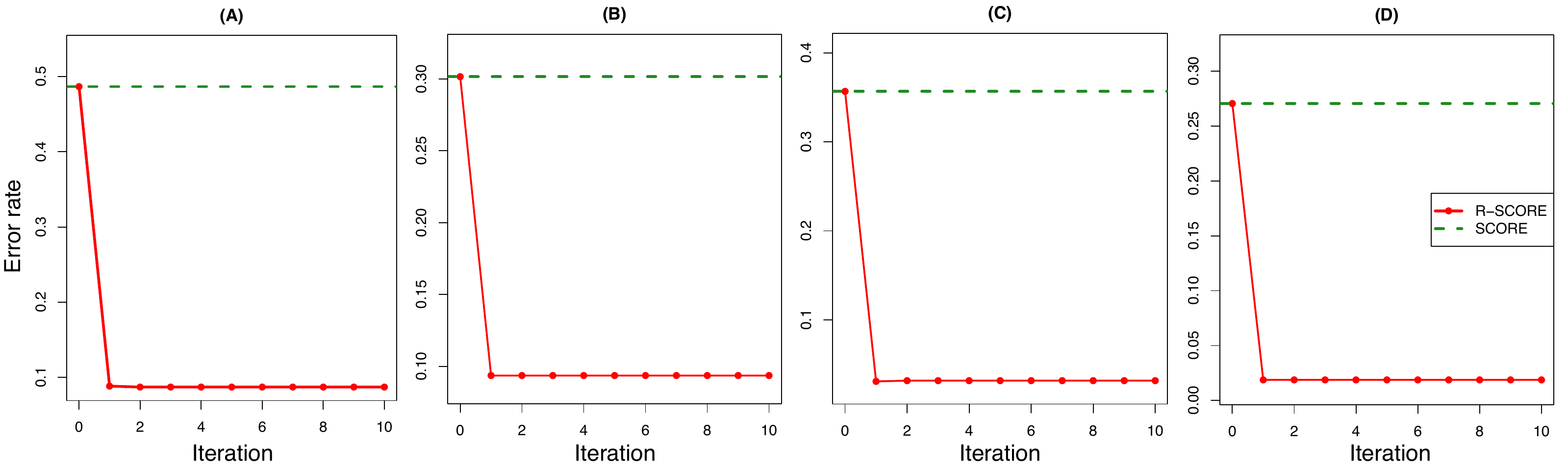}
\caption{The error rates of R-SCORE for different $m$ ($\#$ of iterations). See Experiment 1 for setting details. 
 SCORE is also included for comparision as a benchmark, which corresponding to R-SCORE with $m = 0$ ($x$-axis: $m$; $y$-axis: error rate).}
\label{fig:Pi_error} 
\end{figure}

{\bf Experiment 3.}  {\it R-SCORE vs. npMLE}.   Consider the same setting as in Experiment 2, but we let $\beta_2$ vary: we  
set $b_n = 30$ and let $\beta_2$ range from $0.58$ to $0.7$ with a step size $0.02$ (other parameters remain the same).  
The SNR of the simulated network hinges on the smallest eigenvalue of $P$, which in turn hinges on $\beta_2$.  
%
%
%
The results (based on $20$ repetitions) are in Figure~ \ref{fig:compare_nonconvex} (right), 
which suggest that R-SCORE steadily outperforms npMLE for $\beta_2$ in the entire range. 
Also,  R-SCORE is much faster than npMLE. In each repetition, it takes R-SCORE only 6 seconds, whereas it takes the npMLE more than 300 seconds (more than 50 times longer).  This shows that 
R-SCORE not only is significantly faster than npMLE, but may also outperform the npMLE in 
many network settings. 
\begin{figure}[htbp]
\centering
\includegraphics[width=0.8\linewidth, height = 0.28\linewidth]{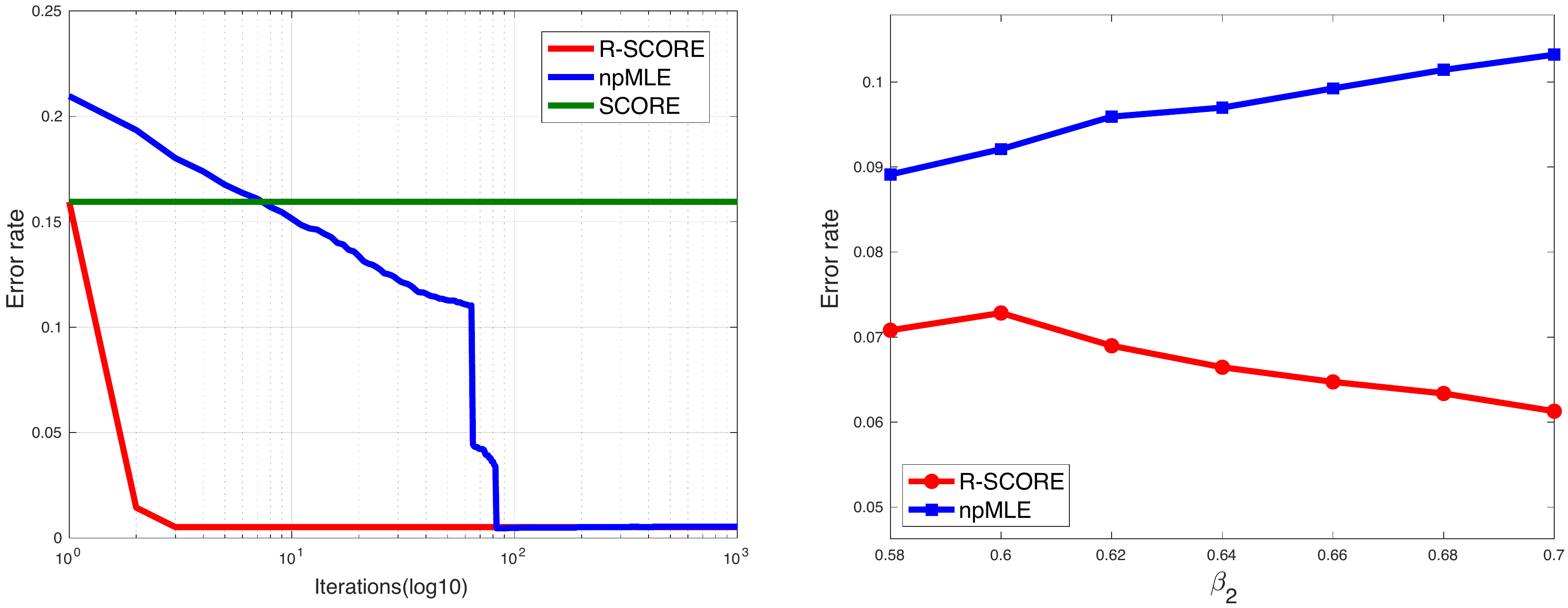}
\caption{Left panel: The error rates  by R-SCORE and npMLE for different $m$ ($\#$ of iterations). See Experiment 2  ($x$-axis: log$_{10}(m)$, $y$-axis: error rates). SCORE is also included for comparison, which corresponds to R-SCORE with $m = 0$.  Right panel: The error rates  by R-SCORE and npMLE for different $\beta_2$. See Experiment 3 for setting details ($x$-axis: $\beta_2$; $y$-axis: error rate).}
\label{fig:compare_nonconvex} 
\end{figure}

\section{Discussion} \label{sec:Discu} 
In this paper, we have made a three-fold contribution to the area of network community detection. 
First, we  propose the logit-DCBM as a new network model. We argue that the logit-DCBM is more 
reasonable than the popular DCBM, but also poses a challenge.  
Second, to overcome the challenge, we propose a trick that can effectively cancel  the effect of the nonlinear factors of the logit-DCBM model in some statistics (especially the ratio of two cycle-counts).  Last, we propose R-SCORE as a new algorithm for community detection, and show that it can significantly improve over existing spectral approaches including   the SCORE.  Our idea is generalizable to many other settings.  For example, the 
$p_1$ model by Holland and Leinhardt \cite{p1model} is one of the most popular models for 
directed networks with $1$  community.  Following the idea here,   we can generalize it to a model with multiple communities, 
and extend R-SCORE for community detection with the new model. 
Also for example, the cancellation trick can be extended to many other settings (e.g.,  analysis of the $p_1$ model for directed networks \cite{p1model}, 
text analysis \cite{Ke-AR},  tensor analysis \cite{Yuan2018TestHyper}) where the data matrix $A$ satisfies $\mathbb{E}[A] = N \circ \widetilde{\Omega}$ for a simple low-rank matrix $\widetilde{\Omega}$ and a matrix $N$ consisting nonlinear factors. 
Given that nonlinear models become increasingly more important in statistics and machine learning, the trick (and its extended form) may find increasingly more uses  in many applications in the near future.  
 
The cancellation trick is especially useful.  In machine learning,  we have many nonlinear latent variable models 
spreading in many areas (e.g., cancer clustering \citep{IFPCA},  
text analysis \citep{Ketopic},    and empirical finance).  Due to the nonlinearity, 
how to analyze such models is a challenging problem. 
In this paper, we propose an interesting cancellation trick using which 
we can effectively remove the nonlinear factor in some latent variable models.   
For space reasons,  we only showcase this trick with a network setting, but the idea 
is extendable to other nonlinear latent space models. For this reason, our work 
may spark new research in many different directions in machine learning.

%

\appendix

\section{Notations}

Throughout this supplementary material, we will adopt the following notions. (1) Let $\{e_k\}_{k=1}^K$ be the standard basis of $\mathbb R^{K}$. To distinguish, we use $\{e_{n,i}\}_{i=1}^n$ to denote the standard basis of $\mathbb R^{n}$. (2) We write ${\bf 1}_m$ the all-one vector of dimension $m$. (3) For  two sequence of numbers $a_n, b_n>0$ depending on $n$, we write $a_n \gg b_n$ or $b_n\ll a_n$ if  $b_n/a_n = o(1)$ as $n\to \infty$; and $a_n \asymp b_n$ is there exists constants $C,c>0$ such that $cb_n<a_n \leq Cb_n $.  (4) Let $O(K-1)$ be group of all  $(K-1)\times (K-1)$ orthogonal matrices. (5) For any matrix $M\in \mathbb R^{m\times m}$, let its SVD be $M = UDV'$. We adopt the notion that ${\rm sgn}(M) = UV'$. (6) We denote the $(i,j)$-th entry of a matrix $M$ as $M(i,j)$ or $M_{ij}$, and  the $i$-th component of a vector $u$ as $u(i)$ or $u_i$.  (7) We denote $c, C$ the generic constants which may vary from line to line.

\section{The error rate of SCORE}
\subsection{Proof of Theorem~\ref {thm:SCORE}}
The proof of Theorem~\ref {thm:SCORE} can be separated into two parts. First, we connect the Hamming error with the error rates of SCORE vectors $R$ where 
\[
R = {\rm diag} (\xi_1)^{-1} (\xi_2, \cdots, \xi_K)
\]
with $\xi_k$ denoted as the eigenvector  associated with the $k$-th largest eigenvalue ( in magnitude) of $\widetilde \Omega$, for $1\leq k \leq K$. The result is collected in the following lemma and the proof is postponed to next subsection. 

\begin{lemma} \label{lem: Hamming-R}
Let $\widehat R$ be the SCORE vectors obtained from the observed network (either $A$ or $A\oslash \widehat N$). Denote by $R$, the counterpart for $\widetilde\Omega$. Suppose that $\min_{\mathcal O \in O(K-1)} \| \widehat R \mathcal O - R\|_F^2 = o(n)$. Then, the Hamming error $r_n$ satisfies 
\[
r_n = n^{-1} \sum_{i=1}^n \|\hat \pi_i - \pi_ i \|_1 \leq n^{-1}  \min_{\{\mathcal O \in O(K-1)\}} \| \widehat R \mathcal O - R\|_F^2
\]
\end{lemma}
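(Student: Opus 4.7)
The plan is the classical $k$-means perturbation argument. Since $\widetilde\Omega = \Theta\Pi P\Pi'\Theta$ has rank $K$ and column space equal to that of $\Theta\Pi$, its top $K$ eigenvectors admit the representation $[\xi_1,\ldots,\xi_K] = \Theta\Pi B$ for some invertible $K\times K$ matrix $B$. The SCORE normalization cancels the shared $\Theta$-factor, yielding
\[
R = \Pi V, \qquad V\in\mathbb{R}^{K\times(K-1)}, \qquad V_{k,j-1} = B_{k,j}/B_{k,1}\quad (2\le j\le K),
\]
so the row $r_i$ depends only on the community label of $i$ and takes exactly $K$ distinct values $v_1,\ldots,v_K$. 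The first step is therefore to exhibit this block structure and show a uniform \emph{inter-community gap} $\min_{k\neq \ell}\|v_k - v_\ell\|_2 \geq \delta_0$ for some constant $\delta_0>0$. The gap follows from invertibility of $B$, combined with the positivity and well-conditioning of the first eigenvector that is guaranteed by \eqref{asm:Pi}, \eqref{asm:Theta}, \eqref{asm:P}, and \eqref{asm:eta}; this is parallel to the vertex-hunting lemma of \citep{SCORE}.

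Second, I would fix $\mathcal O^*$ attaining $\min_{\mathcal O\in O(K-1)}\|\widehat R\mathcal O - R\|_F$ and analyze the $k$-means step applied to $\widehat R$, which is rotationally equivalent to running $k$-means on $\widehat R\mathcal O^*$. Under the hypothesis $\|\widehat R\mathcal O^* - R\|_F^2 = o(n)$, together with the constant gap $\delta_0$ and the balance condition \eqref{asm:Pi}, a standard centroid-consistency argument shows that the $K$ centroids $\widehat v_1,\ldots,\widehat v_K$ returned by $k$-means satisfy $\max_k\|\widehat v_k - v_{\mathcal P(k)}\|_2 \leq \delta_0/4$ for a permutation $\mathcal P$, once $n$ is large. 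For a node $i$ in community $k$ that is misclassified, the triangle inequality then forces
\[
\|(\widehat R\mathcal O^*)_{i,\cdot} - v_k\|_2 \geq \delta_0/4,
\]
so each such node contributes at least $\delta_0^2/16$ to $\|\widehat R\mathcal O^* - R\|_F^2$.

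Summing across all misclassified nodes and using $\|\hat\pi_i - \pi_i\|_1 = 2\cdot\mathbf{1}\{\hat\pi_i\neq \mathcal P\pi_i\}$ for vectors in $\{e_1,\ldots,e_K\}$ yields
\[
\sum_{i=1}^n\|\hat\pi_i - \pi_i\|_1 \leq \frac{32}{\delta_0^2}\,\|\widehat R\mathcal O^* - R\|_F^2,
\]
which is the stated claim up to the absolute constant $32/\delta_0^2$ that is absorbed (as is standard in the SCORE literature). The main obstacle I anticipate is the third step: establishing the centroid consistency $\max_k\|\widehat v_k - v_{\mathcal P(k)}\|_2 \leq \delta_0/4$ is where the small-error hypothesis $\|\widehat R\mathcal O^* - R\|_F^2 = o(n)$ really gets used, and it silently relies on a constant-factor (polynomial-time) approximation to the NP-hard $k$-means objective, which is the form invoked by SCORE. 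Verifying that $\delta_0$ is uniformly bounded below --- rather than shrinking with $n$ through $\lambda_{\min}(P)$ or $\bar\theta$ --- is itself delicate and is precisely the purpose of the well-conditioning assumption \eqref{asm:eta}.
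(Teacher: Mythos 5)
Your proposal is correct and follows essentially the same route as the paper's proof: exploit that $R=\Pi V$ has exactly $K$ distinct rows separated by a constant gap (guaranteed by \eqref{asm:Pi}--\eqref{asm:eta} via the matrix $B$ with $[\xi_1,\ldots,\xi_K]=\Theta\Pi B$), then run a $k$-means perturbation argument in which each misclassified node is forced to contribute a constant amount to $\|\widehat R\mathcal O - R\|_F^2$, with the hypothesis $\min_{\mathcal O}\|\widehat R\mathcal O - R\|_F^2=o(n)$ and the balance condition used exactly where you place them (to rule out an empty/unmatched cluster). The only cosmetic difference is bookkeeping: the paper argues through the good set $\mathcal I$ of nodes whose rows and assigned centers are both close to the truth and the assigned-center matrix $M$ (using $\|\widehat R-M\|_F^2\le\|\widehat R-R\|_F^2$ and $\|R-M\|_F^2\le 4\|\widehat R-R\|_F^2$), rather than your global centroid-consistency statement, and both arguments yield the bound only up to a multiplicative constant, as you note.
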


Next, we claim the error rate $n^{-1}  \min_{\{\mathcal O \in O(K-1)\}} \| \widehat R \mathcal O - R\|_F^2$ by applying SCORE algorithm. The key technical component is to conduct delicate eigenvector analysis and especially employ leave-one-out technique to derive sharp entry-wise eigenvector bounds for $\xi_1$. We present the result below, and  the proof is relegated into Section~\ref{sub:R-SCORE-rate}.

\begin{lemma}\label{lem: RSCORE}
Let $\widetilde R$ denote the SCORE vectors by employing SCORE directly on $A$. Under the assumptions in Theorem~\ref {thm:SCORE},
it holds that with probability $1- o(n^{-3})$,
\[
n^{-1}\min_{\mathcal O \in O(K-1)} \| \widetilde R \mathcal O - R\|_F^2 \leq C\frac{  \| (N-{\bf 1}_n{\bf 1}_n')  \circ \widetilde \Omega\|^2  +   { \lambda_1(\widetilde \Omega) } }{\big| \lambda_K(\widetilde \Omega) \big|^2}
\]
\end{lemma}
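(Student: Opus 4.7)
\textbf{Proof plan for Lemma \ref{lem: RSCORE}.} The plan is to control the SCORE ratio matrix $\widetilde R$ by combining a Davis--Kahan-type bound on the top-$K$ eigenspace of $A$ with an entrywise lower bound on the leading eigenvector $\hat\xi_1$. Start from the decomposition
\[
A - \widetilde\Omega \;=\; E_1 + E_2 + W, \qquad E_1 := (N-\mathbf{1}_n\mathbf{1}_n')\circ\widetilde\Omega, \qquad E_2 := -\mathrm{diag}(\Omega).
\]
Here $\|E_1\|$ appears directly in the statement, $\|E_2\|\le \max_i\Omega_{ii}\lesssim \bar\theta^2$ (negligible by (\ref{asm:Theta})--(\ref{asm:P})), and for the stochastic piece $W$, because $W$ is a symmetric matrix of independent centered Bernoulli entries with $\mathrm{Var}(W_{ij})\le \Omega_{ij}\lesssim \bar\theta^2$, matrix Bernstein yields $\|W\|\lesssim \sqrt{n\bar\theta^2\log n}\lesssim \sqrt{\lambda_1(\widetilde\Omega)}$ with probability $1-o(n^{-3})$, using $\lambda_1(\widetilde\Omega)\asymp n\bar\theta^2$ from (\ref{asm:Pi})--(\ref{asm:Theta}). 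Hence $\|A-\widetilde\Omega\|^2 \lesssim \|E_1\|^2 + \lambda_1(\widetilde\Omega)$.

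Next, let $\Xi=(\xi_1,\ldots,\xi_K)$ and $\widehat\Xi=(\hat\xi_1,\ldots,\hat\xi_K)$ collect the top-$K$ eigenvectors of $\widetilde\Omega$ and $A$ respectively. Assumption (\ref{asm:P}) together with (\ref{asm:eta}) ensures a uniform eigengap of order $|\lambda_K(\widetilde\Omega)|\asymp n\bar\theta^2|\lambda_{\min}(P)|$ separating the signal block from the null space. Condition (\ref{asm:score-main}) guarantees $\|A-\widetilde\Omega\| = o(|\lambda_K(\widetilde\Omega)|)$, so Davis--Kahan (sin-$\Theta$) applies and gives an orthogonal $\mathcal O_*\in O(K)$ with
\[
\|\widehat\Xi\mathcal O_* - \Xi\|_F \;\le\; \frac{C\,\|A-\widetilde\Omega\|}{|\lambda_K(\widetilde\Omega)|} \;\lesssim\; \frac{\sqrt{\|E_1\|^2 + \lambda_1(\widetilde\Omega)}}{|\lambda_K(\widetilde\Omega)|}.
\]

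The main obstacle, and what SCORE really needs, is an \emph{entrywise} lower bound on $\hat\xi_1$, since $\widetilde R$ divides by $\hat\xi_1$ coordinatewise. By (\ref{asm:eta}) and Perron's theorem applied to $\widetilde\Omega=\Theta\Pi P\Pi'\Theta$, the leading eigenvector $\xi_1$ satisfies $\xi_1(i)\asymp \theta_i\,\eta(\ell)/\|\Theta\Pi\eta\|$, hence $\min_i|\xi_1(i)|\asymp 1/\sqrt n$. To transfer this to $\hat\xi_1$, I would run a standard leave-one-out construction: for each $i$, build $A^{(i)}$ by zeroing the $i$-th row/column of $W$ (and removing $i$ from $E_2$), whose leading eigenvector $\hat\xi_1^{(i)}$ is independent of the $i$-th row of $W$; combine (a) a Davis--Kahan bound comparing $\hat\xi_1^{(i)}$ to $\hat\xi_1$, with (b) a Bernstein bound on $|(\hat\xi_1^{(i)} - \xi_1)(i)|$ using the independence. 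This yields $\|\hat\xi_1-\xi_1\|_\infty = o(1/\sqrt n)$ under (\ref{asm:P})--(\ref{asm:score-main}), hence $\min_i|\hat\xi_1(i)|\gtrsim 1/\sqrt n$ on the good event. This leave-one-out step is the most delicate piece because it must absorb the deterministic nonlinear perturbation $E_1$ in addition to the stochastic $W$; I would handle $E_1$ by treating it as a fixed bias whose contribution is already $O(\|E_1\|/|\lambda_K(\widetilde\Omega)|)$ in $\ell_2$ and then pushing it through the row-wise argument via $\|E_1\|_{2\to\infty}\le \|E_1\|$.

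Finally, on this good event, with $\mathcal O$ the lower-right $(K-1)\times(K-1)$ block of (a suitable representative of) $\mathcal O_*$, the coordinatewise identity
\[
(\widetilde R\mathcal O - R)(i,k) \;=\; \frac{(\hat\xi_{k+1}\mathcal O - \xi_{k+1})(i)}{\hat\xi_1(i)} \;-\; \frac{\xi_{k+1}(i)\,(\hat\xi_1-\xi_1)(i)}{\hat\xi_1(i)\,\xi_1(i)}
\]
combined with $|\hat\xi_1(i)|,|\xi_1(i)|\gtrsim 1/\sqrt n$ and $\sum_{i,k}\xi_{k+1}(i)^2\le K-1$ yields
\[
\|\widetilde R\mathcal O - R\|_F^2 \;\lesssim\; n\,\|\widehat\Xi\mathcal O_* - \Xi\|_F^2 \;\lesssim\; n\cdot \frac{\|E_1\|^2 + \lambda_1(\widetilde\Omega)}{|\lambda_K(\widetilde\Omega)|^2},
\]
which after dividing by $n$ gives exactly the stated bound. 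The principal technical obstacle is step three (the entrywise/leave-one-out analysis of $\hat\xi_1$ in the presence of the nonlinear bias $E_1$); once that is in place, steps one, two, and four are essentially bookkeeping.
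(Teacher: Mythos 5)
Your plan follows the same architecture as the paper's proof: decompose $A-\widetilde\Omega$ into the nonlinear bias $E_1=(N-{\bf 1}_n{\bf 1}_n')\circ\widetilde\Omega$, the diagonal term, and the noise $W$; apply a sin-theta bound to the relevant eigenvectors; establish an entrywise bound on the leading empirical eigenvector by a leave-one-out argument so that the SCORE ratios are well defined; and then assemble the Frobenius bound. Two of your steps, however, do not go through as written. The smaller issue is the spectral norm of $W$: matrix Bernstein only gives $\|W\|\lesssim\sqrt{n\bar\theta^2\log n}$, and your claim that this is $\lesssim\sqrt{\lambda_1(\widetilde\Omega)}\asymp\sqrt{n\bar\theta^2}$ is false; carried through, it would put a spurious $\log n$ on the $\lambda_1(\widetilde\Omega)$ term of the stated rate. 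The paper instead uses the log-free bound $\|W\|\le C\sqrt{n\bar\theta^2}$ of Bandeira--van Handel, which applies because (\ref{asm:P}) forces $n\bar\theta^2\gtrsim\log^2 n$.

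The more serious gap is your treatment of $E_1$ inside the entrywise/leave-one-out step, which you yourself flag as the delicate part. Bounding the $i$-th row of $E_1$ via $\|E_1\|_{2\to\infty}\le\|E_1\|$ produces a contribution of order $\|E_1\|/\lambda_1(\widetilde\Omega)\asymp\|E_1\|/(n\bar\theta^2)$ to $\|\hat\xi_1-\xi_1\|_{\max}$, with no $1/\sqrt n$ gain. Under the hypotheses of Theorem~\ref{thm:SCORE} one only knows $\|E_1\|=o(|\lambda_K(\widetilde\Omega)|)$, and since all entries of $E_1$ are of order $\bar\theta^4$ one typically has $\|E_1\|\asymp n\bar\theta^4$; then $\|E_1\|/(n\bar\theta^2)\asymp\bar\theta^2$, which is $\gg 1/\sqrt n$ whenever $\bar\theta\gg n^{-1/4}$ (precisely the sparse regime $\beta<1/4$ that contains the paper's regime of interest). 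So your argument would not deliver $\min_i|\hat\xi_1(i)|\gtrsim 1/\sqrt n$, which is exactly what the division by $\hat\xi_1$ requires. The paper avoids this by exploiting the entrywise smallness of the bias: $|(E_1)_{ij}|=(1-N_{ij})\widetilde\Omega_{ij}\le\widetilde\Omega_{ij}^2\le C\bar\theta^4$, so the $i$-th row of $E_1$ has $\ell_1$-norm $O(n\bar\theta^4)$ and $|e_{n,i}'E_1\hat\xi_1|/\lambda_1(\widetilde\Omega)\lesssim\bar\theta^2\|\hat\xi_1\|_{\max}$, a term that is absorbed since $\bar\theta\to0$; the operator norm $\|E_1\|$ enters the max-norm bound only through $|\xi_1'\hat\xi_1-1|$ and $\|\Xi_1'\hat\xi_1\|$, which come multiplied by $\|\xi_1\|_{\max}\asymp 1/\sqrt n$. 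With these two repairs (plus the minor point that your single $K\times K$ rotation must be argued nearly block-diagonal, which the paper sidesteps by treating $\hat\xi_1$ and the remaining $K-1$ eigenvectors separately), your plan coincides with the paper's proof.
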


Therefore, Theorem~\ref {thm:SCORE} follows directly from Lemmas~\ref{lem: Hamming-R} and \ref{lem: RSCORE}. In particular, if $A$ satisfies DCBM, by definition, $N- {\bf 1}_n {\bf 1}_n'$. Then, $\| (N-{\bf 1}_n{\bf 1}_n')  \circ \widetilde \Omega\| = 0$, which yields that 
\[
r_n(\widehat{\Pi}^{score}) \leq C \frac{\lambda_1(\widetilde{\Omega})}{|\lambda_K(\widetilde{\Omega})|^2}. 
\]

To complete the proof, we show the proofs of Lemmas~\ref{lem: Hamming-R} and \ref{lem: RSCORE} in the subsequent two subsections. 
\subsection{Proof of Lemma~\ref{lem: Hamming-R}}
The proof is similar to  the proof of Theorem 2.2 SCORE  \citep{SCORE}, we provide the details below for readers' convenience. 
 Without loss of generality, let us assume $\mathcal O = I_K$ for simplicity. According to  \citep{SCORE}, $R$ contains exactly $K$ distinct rows. Let $r_{(1)}, \cdots, r_{(K)}$ be the $K$ distinct rows in $R$.  To claim the bound, we first show that 
 \[
 \| r_{(k)} - r_{(\ell)}\| \geq c_1
 \]
 for some constant $c_1>0$. To see this, we note that $(\xi_1, \xi_2, \cdots, \xi_K)= (\xi_1, \Xi_1) = \Theta \Pi B$ for some $B = (b_1, b_2, \cdots, b_K) \in \mathbb R^{K\times K}$. Then, it follows that 
 $BB' =( \Pi'\Theta^2 \Pi)^{-1}  = \mathcal P{\rm diag} ([\sum_{i\in \mathcal C_1}\theta_i^2]^{-1}, \cdots, [\sum_{i\in \mathcal C_K}\theta_i^2]^{-1}) \mathcal P' $ for some permutation matrix $\mathcal P $. Thanks to the conditions that $\theta_i \asymp \bar \theta$ and $n_k \asymp n$ for all $1\leq k \leq K$,   the conditional number of $BB'$ is constant and 
 \[
 \lambda_{\min} (BB')\asymp \lambda_{\max} (BB')\asymp \frac{1}{n\bar \theta^2}
 \]
 In particular, as $\xi_1 = \Theta \Pi b_1 $, it is not hard to derive from $\Theta \Pi P \Pi' \Theta \xi_1 = \lambda_1 \Theta \Pi b_1$ that 
 \[
 P \Pi' \Theta^2 \Pi b_1 = \lambda_1b_1
 \]
 Therefore, $b_1 $ is the first right eigenvector of $P (\Pi' \Theta^2 \Pi)  $.
 Under  the condition in  (\ref{asm:eta}), $|b_1(k) | \asymp 1/\sqrt{n\bar \theta^2}$ for all $1\leq k \leq K$. As a result, 
 \begin{align}\label{bdd:eigenvector}
 |\xi_1(i)|\asymp \theta_i /\sqrt{n\bar \theta^2} \asymp 1/\sqrt n, \qquad \|e_{n,i}'\Xi_1 \|\leq C/\sqrt n
 \end{align}
  and 
 \[
\lambda_{\min}( {\rm diag} (b_1)^{-1}  B ) \geq c_0 
 \]
 for some $c_0>0$. 
Notice that 
\[
 R = \Pi [{\rm diag} (b_1)^{-1}  (b_2, \cdots, b_K ) ]= \Pi (r_{(1)}', \cdots, r_{(K)}')'
 \]
Therefore, 
\[
\| r_{(i)} - r_{(j)} \| = \| e_i'  {\rm diag} (b_1)^{-1}  B-e_j' {\rm diag} (b_1)^{-1}  B \| \geq \sqrt 2\lambda_{\min}( {\rm diag} (b_1)^{-1}  B ) \geq  \sqrt 2c_0
\]
for some $c_0>0$. 
We define $c_1 = 2c_0$. Let $V_1, \cdots, V_K$ denote the disjoint index sets corresponding to $r_{(1)}, \cdots, r_{(K)}$. The K-means algorithm aims to find a partition of the nodes $S^* = (S_1, S_2, \cdots, S_K)$ such that 
\[
S^* = {\rm argmin} \sum_{k=1}^K \sum_{i \in S_k } \|\hat r_i - m_k\|^2, \quad m_k = \sum_{i\in S_k}\hat r_i  \text{ for $1\leq k \leq K$}.
\]
Define the output centers are $m_1^*, \cdots, m_K^*$. We introduce a matrix $M = (m_1', \cdots, m_n')' $ such that 
\[
m_i= m_k^* ,\qquad \text{ if \quad $i \in S_k$ }
\]
Thus, 
\[
\|\widehat R - M \|_F^2 \leq \|\widehat R - R \|_F^2 \quad \text{and } \quad \|R - M \|_F^2 \leq 4 \|\widehat R - R \|_F^2
\]
Let $\mathcal I : = \{ i: \|\hat r_i - r_i \|\leq \sqrt 2 c_0/8,  \|m_i - r_i\|\leq  \sqrt 2 c_0/8 \}$ and $\mathcal I_k = \mathcal I \cap V_k$ for $1\leq k \leq K$. We first prove that the nodes in $\mathcal I$ are correctly recovered. It suffices to show that for any $i\in \mathcal I_k, j\in \mathcal I_\ell$, 
\begin{align}\label{2024092201}
m_i = m_j  \quad \text{if and only if} \quad k = \ell.
\end{align}
To see this, consider $k\neq \ell$, then $\|r_k - r_\ell \|\geq \sqrt 2c_0$. It further yields that for $i\in \mathcal I_k, j\in \mathcal I_\ell$
\[
\|m_i - m_j\|  \geq \|r_i - r_j\| - \|m_i - r_i \| - \|m_j- r_j\| \geq \sqrt 2c_0 \cdot 3/4
\]
Suppose that  $\mathcal I_k \neq \emptyset$ for all $1\leq k\leq K$,  then for every $k$, we select a point $i_k$ and its corresponding $m_{i_k}$. It follows that 
 \[
\| m_{i_k} - m_{i_\ell} \|\geq \|r_{i_k} - r_{i_{\ell}}\| - \|m_{i_k} - r_{i_k} \| - \|m_{i_\ell}- r_{i_\ell}\| \geq \sqrt 2c_0 \cdot 3/4
 \] 
By doing so, we fix the $K$ distinct rows in $M$. Thus, based on the above arguments,  for any two rows in $M$, their $\ell_2$ norm distance is either $0$ or larger than $\sqrt 2c_0 \cdot 3/4$. For any $i,j \in \mathcal I_k$, since
 \[
 \|m_i - m_j \|\leq \|m_i - r_k \| + \|m_j - r_k\| \leq \sqrt 2c_0 /4,
 \]
 it must hold that $m_i = m_j$. 
 
 To complete the proof of (\ref{2024092201}), we need to  claim $\mathcal I_k \neq \emptyset$ for all $1\leq k\leq K$.  We will prove by contradiction.  Suppose there exist $k_0$ such that $\mathcal I_{k_0} = \emptyset$. It follows that 
 \[
 \sum_{i \in  V_{k_0}}\|\hat r_ i - r_i\|^2 + \|m_i - r_i\|^2\geq |V_{k_0}| c_0/32 \geq \tilde c n
 \]
 under the assumption that $n_k \asymp n$ for all $1\leq k\leq K$. This implies that $\|\widehat R  - R \|_F^2 + \|M - R\|_F^2\geq \tilde c n$. Moreover, 
 \[
 \|\widehat R  - R \|_F^2\geq \tilde c n/5
 \]
 which contradicts to $\|\widehat R  - R \|_F^2\ll n$. As a result, $\mathcal I_k \neq \emptyset$ for all $1\leq k\leq K$. We thus finish the proof that nodes in $\mathcal I $ are exactly recovered. 
 
 Next, to finish the proof,  we show that 
 \[
 |\mathcal I^c| \leq \|\widehat R  - R \|_F^2
 \]
 Note that for $i \in \mathcal I^c$, either $\|\hat r_i - r_i \|> \sqrt 2 c_0/8$ or $ \|m_i - r_i\|>  \sqrt 2 c_0/8$. Since $\|M - R\|_F^2\leq 4\|\widehat R  - R \|_F^2$, we can obtain that 
 \[
  |\mathcal I^c| \leq  \frac{\|\widehat R  - R \|_F^2}{( \sqrt 2 c_0/8)^2 } +  \frac{\|M - R\|_F^2}{( \sqrt 2 c_0/8)^2} \leq  \frac{160}{c_0^2}\|\widehat R  - R \|_F^2
 \]
 Consequently, 
 \[
 r_n = \sum_{i\in \mathcal I} \|\hat \pi_ i - \pi_ i \|_1 +  \sum_{i\in \mathcal I^c} \|\hat \pi_ i - \pi_ i \|_1  \leq 2|\mathcal I^c| \leq C_1 \|\widehat R  - R \|_F^2
 \]
 We thereby conclude the proof.

\subsection{Proof of Lemma~\ref{lem: RSCORE}} \label{sub:R-SCORE-rate}
We define $(\tilde  \lambda_k,\tilde \xi_k)$ be the $k$-th largest eigen-pair of $A$ (in magnitude) for $1\leq k \leq K$. For simplicity, write $\widetilde \Xi_1 := (\tilde \xi_2, \cdots, \tilde \xi_K)$. Without loss of generality, we assume that ${\rm sgn}(\tilde \xi_1'\xi_1)= 1$.  Let $\mathcal O: = {\rm sgn} (\widetilde \Xi_1' \Xi_1)$. By definition, 
\begin{align}\label{R:2024092101}
\|\widetilde R\mathcal O - R \|_F^2 &=\|  {\rm diag}(\tilde \xi_1)^{-1} (\widetilde \Xi_1\mathcal O - \Xi_1) -  \big[ {\rm diag}(\tilde \xi_1)^{-1}  - {\rm diag}( \xi_1)^{-1} \big] \Xi_1\|^2_F \notag\\
& \leq C\Big(\|  {\rm diag}(\tilde \xi_1)^{-1} (\widehat \Xi_1 \mathcal O - \Xi_1) \|_F^2 + \|  \big[ {\rm diag}(\tilde \xi_1)^{-1}  - {\rm diag}( \xi_1)^{-1} \big] \Xi_1\|_F^2\big) \notag\\
& \leq C\Big( \|  {\rm diag}(\tilde \xi_1)^{-1} (\widetilde \Xi_1 \mathcal O - \Xi_1) \|_F^2 + \|\tilde \xi_1 - \xi_1\|^2\|{\rm diag}(\tilde \xi_1)^{-1}   {\rm diag}( \xi_1)^{-1}\Xi_1\|_{2\to \infty}^2 \Big) 
\end{align}
According to the RHS, we need to prove an upper bounds for $\|\widetilde \Xi_1 \mathcal O - \Xi_1 \|_F$ and $\|\tilde \xi_1 - \xi_1\|$, and further show that $|\tilde \xi_1(i) |\asymp 1/\sqrt n$ for $1\leq i \leq n$.

First, we claim upper bounds for $\|\widetilde \Xi_1 \mathcal O - \Xi \|_F$ and $\|\tilde \xi_1 - \xi_1\|$. Using sine-theta theorem \citep{sin-theta, yu2015useful}, we have 
\[
\min\{\| \tilde \xi_1 - \xi_1\|, \|\Xi_1'\tilde \xi_1\|\}\leq C\frac{\|A - \widetilde \Omega\|}{\lambda_1(\widetilde \Omega) } 
\qquad
 \| \widetilde \Xi_1 O- \Xi_1 \|_F \leq C\frac{\|A - \widetilde \Omega\|}{ |\lambda_K(\widetilde \Omega) |}
\]
Since $A = \widetilde \Omega  +  (N -{\bf 1}_n {\bf 1}_n') \circ \widetilde \Omega - {\rm diag} (N\circ \widetilde \Omega) + W =  \widetilde \Omega + \widetilde W$, we thus bound 
\[
\|A - \widetilde \Omega\| \leq \| (N -{\bf 1}_n {\bf 1}_n') \circ \widetilde \Omega \| + \|{\rm diag} (N\circ \widetilde \Omega) \| + \| W\| \leq  \| (N -{\bf 1}_n {\bf 1}_n') \circ \widetilde \Omega \|  + C\sqrt{n\bar \theta^2}
\]
with probability $1- o(n^{-3})$.
Here we used the derivation
\[
\|{\rm diag} (N\circ \widetilde \Omega) \| \leq \|{\rm diag} (\widetilde \Omega) \| \leq C\bar \theta^2, \qquad \|W\|\leq C\sqrt{n\bar \theta^2}
\]
by the non-asymptotic bounds on the norm of random matrices in \citep{bandeira2016sharp}. It is worth mentioning that $\lambda_1(\widetilde \Omega) = \lambda_1(P (\Pi'\Theta^2 \Pi))\asymp n\bar \theta^2$. As a result, 
\begin{align}\label{20240922001}
&\min\{\| \tilde \xi_1 - \xi_1\|, \|\Xi_1'\tilde \xi_1\|\}\leq C\frac{ \| (N -{\bf 1}_n {\bf 1}_n') \circ \widetilde \Omega \| + \sqrt{\lambda_1(\widetilde \Omega)}}{\lambda_1(\widetilde \Omega) } 
\notag\\
& \| \widetilde \Xi_1 O- \Xi_1 \|_F \leq C\frac{ \| (N -{\bf 1}_n {\bf 1}_n') \circ \widetilde \Omega \| + \sqrt{\lambda_1(\widetilde \Omega)}}{ |\lambda_K(\widetilde \Omega) |}
\end{align}

Next, we aim to show that $|\tilde \xi_1(i) |\asymp 1/\sqrt n$ for $1\leq i \leq n$. Given that $| \xi_1(i) |\asymp 1/\sqrt n$ for $1\leq i \leq n$, it suffices to show that $\|\tilde \xi_1 - \xi_1\|_{\max} \ll 1/\sqrt n$. To see this, we consider the eigen-perturbation that 
\begin{align*}
 \tilde \xi_1 - \xi_1=( \tilde \lambda_1^{-1} \lambda_1 \xi_1' \tilde \xi_1 - 1)\xi_1 + \tilde \lambda_1^{-1} \Xi_1 {\rm diag} (\lambda_2, \cdots, \lambda_K)  \Xi_1' \tilde \xi_1 +  \tilde \lambda_1^{-1}\widetilde W \tilde \xi_1\,.
\end{align*}
 By the first inequality in (\ref{20240922001}) and the Weyl's inequality, we bound 
 \begin{align*}
 | \tilde \lambda_1^{-1} \lambda_1 \xi_1' \tilde \xi_1 - 1| &\leq C\Big(   \frac{|\tilde \lambda_1 - \lambda_1|}{\lambda_1} + |  \xi_1' \tilde \xi_1 - 1| \Big) \leq C\Big( \frac{\|A - \widetilde\Omega \|}{\lambda_1} + \|\tilde \xi_1- \xi_1\|^2\Big) \notag\\
 & \leq C\frac{ \| (N -{\bf 1}_n {\bf 1}_n') \circ \widetilde \Omega \| + \sqrt{\lambda_1(\widetilde \Omega)}}{\lambda_1(\widetilde \Omega) } 
 \end{align*}
 and 
  \begin{align*}
 \| \tilde \lambda_1^{-1} {\rm diag} (\lambda_2, \cdots, \lambda_K)  \Xi_1' \tilde \xi_1\| &\leq \|\Xi_1' \tilde \xi_1\| \leq C\frac{ \| (N -{\bf 1}_n {\bf 1}_n') \circ \widetilde \Omega \| + \sqrt{\lambda_1(\widetilde \Omega)}}{\lambda_1(\widetilde \Omega) } 
 \end{align*}
 Based on these, we arrive at 
 \[
| \tilde \xi_1(i)  - \xi_1(i) | \leq \frac{C}{\sqrt n} \frac{ \| (N -{\bf 1}_n {\bf 1}_n') \circ \widetilde \Omega \| + \sqrt{\lambda_1(\widetilde \Omega)}}{\lambda_1(\widetilde \Omega) }  + \frac{ \big| e_{n,i} ' \widetilde W \tilde \xi_1 \big| }{n\bar \theta^2}
 \]
 for $1\leq i \leq n$,
 due to the fact that $|\tilde \lambda_1- \lambda_1|\leq \|A - \widetilde \Omega\|\ll \lambda_1\asymp n\bar \theta^2$ and $\max{\|\xi_1\|_{\max}, \max_i \|e_{n,i}'\Xi\|}\leq C/\sqrt n $. Then, it suffices to derive an upper bound for  $e_{n,i} ' \widetilde W \tilde \xi_1$. We first decompose 
 \[
 |e_{n,i}  ' \widetilde W \tilde \xi_1| \leq  |e_{n,i} ' (N -{\bf 1}_n {\bf 1}_n') \circ \widetilde \Omega \tilde \xi_1| +|e_{n,i}'  {\rm diag} (N\circ \widetilde \Omega)\tilde \xi_1|  + |e_{n,i}' W \tilde \xi_1| \leq \bar \theta^2 | \tilde \xi_1(i) | +  |e_{n,i}' W \tilde \xi_1|
 \]
 Let $ \tilde \xi_1^{(i)}$ be the first eigenvector of $A^{(i)} = \Omega - {\rm diag} (\Omega) + W^{(i)}$ where $W^{(i)}$ is obtained by zeroing out the $i$-th row and column of $W$. Then, 
 \begin{align}\label{202409220004}
 |e_{n,i}'  W \tilde \xi_1| \leq |e_{n,i}'  W \tilde \xi^{(i)}_1 | + \sqrt{n\bar \theta^2} \| \tilde \xi_1 - \tilde \xi_1^{(i)}\|
 \end{align}
 By Bernstein inequality, we bound 
 \begin{align}\label{202409220006}
  |e_{n,i}'  W \tilde \xi^{(i)}_1 | &\leq C(\sqrt{\bar\theta^2\| \tilde \xi^{(i)}_1 \|^2 \log(n) } + \|\tilde \xi_1^{(i)}\|_{\max} \log(n) )   \notag\\
  & \leq C(  \bar \theta \sqrt{\log(n)} + \|\tilde \xi_1\|_{\max} \log(n) + \|\tilde \xi_1^{(i)} - \tilde \xi_1\| \log(n)  ) 
 \end{align}
 simultaneously for all $1\leq i \leq n$, with probability $1- o(n^{-3})$.
 
To proceed, we analyze $\|\tilde \xi_1^{(i)} - \tilde \xi_1\|$ below. By sine-theta theorem,
\begin{align}\label{202409220005}
\|\tilde \xi_1^{(i)} - \tilde \xi_1\|& \leq C \frac{\|(A^{(i)} - A)\tilde \xi_1\|}{n\bar \theta^2}\leq C \frac{\|e_{n,i} e_{n,i}'W\tilde \xi_1\|}{n\bar \theta^2} +C \frac{\|We_{n,i} e_{n,i}'\tilde \xi_1\|}{n\bar \theta^2}\notag\\
 & \leq C\frac{| e_{n,i} 'W\tilde \xi_1|}{n\bar \theta^2} + C\frac{\sqrt{n\bar \theta^2} |\tilde \xi_1(i)|}{n\bar \theta^2}
\end{align}
Combining (\ref{202409220004})-(\ref{202409220005}) gives 
\[
 |e_{n,i} '  W \tilde \xi_1| \leq  C(  \bar \theta \sqrt{\log(n)} + \|\tilde \xi_1\|_{\max} \log(n) ) 
\]
Consequently, 
 \begin{align*}
| \tilde \xi_1(i)  - \xi_1(i) | &\leq \frac{C}{\sqrt n} \frac{ \| (N -{\bf 1}_n {\bf 1}_n') \circ \widetilde \Omega \| + \sqrt{\lambda_1(\widetilde \Omega)}}{\lambda_1(\widetilde \Omega) }  + 
\frac{ C| \tilde \xi_1(i) |}{n} \notag\\
& \qquad + \frac{C}{n\bar \theta^2}(  \bar \theta \sqrt{\log(n)} + \|\tilde \xi_1\|_{\max} \log(n)  ) 
 \end{align*}
By decomposing $| \tilde \xi_1(i) | \leq | \xi_1(i) | + | \tilde \xi_1(i) - \xi_1(i)  |$ and $\|\tilde \xi_1\|_{\max}\leq \| \xi_1\|_{\max} + \|\tilde \xi_1 - \xi_1\|_{\max}$, we further have 
\[
| \tilde \xi_1(i)  - \xi_1(i) | \leq \frac{C}{\sqrt n} \frac{\| (N -{\bf 1}_n {\bf 1}_n') \circ \widetilde \Omega \|}{n\bar \theta^2 }   + \frac{C \sqrt{\log(n)}}{n\bar \theta} + \|\tilde \xi_1 - \xi_1\|_{\max} \frac{\log(n)}{n\bar \theta^2} 
\]
Taking maximum and rearranging both sides, it follows that with probability $1- o(n^{-3}) $, 
\[
\|\tilde \xi_1 - \xi_1\|_{\max}\leq \frac{C}{\sqrt n} \frac{\| (N -{\bf 1}_n {\bf 1}_n') \circ \widetilde \Omega \|}{n\bar \theta^2 }   + \frac{C \sqrt{\log(n)}}{n\bar \theta} \ll 1/\sqrt n
\]
under the condition that $\sqrt {n\bar \theta^2} \geq C\log(n)$ and $\|(N - {\bf 1}_n{\bf 1}_n') \circ \widetilde \Omega\|\ll n\bar \theta^2 $. This completes the proof of $|\tilde \xi_1(i) |\asymp 1/\sqrt n$ for $1\leq i \leq n$. 

Therefore, we deduce from (\ref{R:2024092101}), (\ref{20240922001}) that with probability $1-o(n^{-3})$, 
\begin{align*}
\|\widetilde R\mathcal O - R \|_F^2 \leq Cn \Big( \|  (\widetilde \Xi_1 \mathcal O - \Xi) \|_F^2 + \|\tilde \xi_1 - \xi_1\|^2\Big)\leq Cn \frac{ \| (N -{\bf 1}_n {\bf 1}_n') \circ \widetilde \Omega \|^2 + {\lambda_1(\widetilde \Omega)}}{ |\lambda_K(\widetilde \Omega) |^2}
\end{align*}
We thus finish the proof.

\subsection{A remark on $\|(N -{\bf 1}_n {\bf 1}_n') \circ \widetilde \Omega \|$}
We discuss the relation of $\|(N -{\bf 1}_n {\bf 1}_n') \circ \widetilde \Omega \|$ with the eigenvalues of $\widetilde \Omega$ and $\Omega$ in the following lemma. 
\begin{lemma}\label{lem: Relation}
The following inequalities hold.
\begin{align*}
& \| (N-{\bf 1}_n{\bf 1}_n')  \circ \widetilde \Omega\| \leq \|N-{\bf 1}_n{\bf 1}_n'\|_{\max}\lambda_1(\widetilde \Omega) \notag\\
 & |\lambda_{K+1}({\Omega})| \leq  \| (N-{\bf 1}_n{\bf 1}_n')  \circ \widetilde \Omega\|
\end{align*}
\end{lemma}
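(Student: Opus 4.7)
The plan is to treat the two inequalities independently; the first exploits the fact that $\widetilde{\Omega}$ is entrywise non-negative, and the second is a direct consequence of Weyl's inequality together with the rank bound on $\widetilde{\Omega}$.

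For the first bound, set $M := N - {\bf 1}_n {\bf 1}_n'$ and observe that $\widetilde{\Omega} = \Theta \Pi P \Pi' \Theta$ is entrywise non-negative, since $\Theta$ is diagonal with positive entries and $P = \exp(Q)$ entrywise is strictly positive. Entrywise we then have $|M \circ \widetilde{\Omega}|_{ij} = |M_{ij}|\, \widetilde{\Omega}_{ij} \leq \|M\|_{\max}\, \widetilde{\Omega}_{ij}$, so $|M \circ \widetilde{\Omega}| \leq \|M\|_{\max}\, \widetilde{\Omega}$ entrywise. I would then chain two standard facts: (a) $\|A\| \leq \| |A| \|$ for any real matrix (from $|x'Ay| \leq |x|' |A| |y|$ with $|x|,|y|$ unit vectors), and (b) Perron--Frobenius monotonicity of the operator norm, which states that if $0 \leq B_1 \leq B_2$ entrywise, then $\|B_1\| \leq \|B_2\|$ (because the norm of a non-negative matrix is attained on non-negative vectors). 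Applying these in sequence gives $\|M \circ \widetilde{\Omega}\| \leq \|M\|_{\max}\, \|\widetilde{\Omega}\|$, and since $\widetilde{\Omega}$ is symmetric and non-negative, Perron--Frobenius identifies $\|\widetilde{\Omega}\| = \lambda_1(\widetilde{\Omega})$, closing the argument.

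For the second inequality, the approach is to write $\Omega = \widetilde{\Omega} + (N - {\bf 1}_n {\bf 1}_n') \circ \widetilde{\Omega}$ and regard the second summand as a symmetric perturbation of the low-rank matrix $\widetilde{\Omega}$. Since $\widetilde{\Omega}$ has rank at most $K$, we have $\lambda_{K+1}(\widetilde{\Omega}) = 0$. Weyl's inequality applied to the two symmetric matrices then gives $|\lambda_{K+1}(\Omega) - \lambda_{K+1}(\widetilde{\Omega})| \leq \|(N - {\bf 1}_n {\bf 1}_n') \circ \widetilde{\Omega}\|$, which rearranges to the stated bound.

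The only delicate point is step (b) above: the inequality $\|A \circ B\| \leq \|A\|_{\max} \|B\|$ does \emph{not} hold for general $B$; the Hadamard-matrix example $\|H_n \circ H_n\| = n$ versus $\|H_n\|_{\max}\,\|H_n\| = \sqrt{n}$ shows the bound can fail by a factor of $\sqrt{n}$ without an extra hypothesis. The proof therefore genuinely depends on entrywise non-negativity of $\widetilde{\Omega}$, which is a structural feature of the logit-DCBM parameterization rather than a generic Hadamard-product fact.
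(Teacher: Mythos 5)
Your proposal is correct, and your second inequality is exactly the paper's argument: write $\Omega = \widetilde{\Omega} + (N-{\bf 1}_n{\bf 1}_n')\circ\widetilde{\Omega}$, apply Weyl's inequality, and use $\lambda_{K+1}(\widetilde{\Omega})=0$ from $\mathrm{rank}(\widetilde{\Omega})=K$. For the first inequality the paper takes a slightly different, more hands-on route: since $N_{ij}<1$ and $\widetilde{\Omega}$ is entrywise non-negative, the matrix $({\bf 1}_n{\bf 1}_n'-N)\circ\widetilde{\Omega}$ has non-negative entries, so by Perron's theorem its operator norm equals its top eigenvalue and is attained at a positive eigenvector $u_1$; the paper then bounds the quadratic form $\sum_{i,j}u_1(i)u_1(j)(1-N_{ij})\widetilde{\Omega}_{ij} \le \|N-{\bf 1}_n{\bf 1}_n'\|_{\max}\, u_1'\widetilde{\Omega}u_1 \le \|N-{\bf 1}_n{\bf 1}_n'\|_{\max}\lambda_1(\widetilde{\Omega})$. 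You instead invoke the two general facts $\|A\|\le\||A|\|$ and entrywise monotonicity of the operator norm on non-negative matrices; this is the same Perron--Frobenius idea packaged as abstract lemmas, and the paper's quadratic-form computation is essentially a direct proof of the monotonicity fact in the case at hand. Your version is marginally more general in that it never uses $N_{ij}\le 1$ (you take absolute values of $N-{\bf 1}_n{\bf 1}_n'$ rather than relying on its sign), while the paper's is more self-contained; your closing remark that the generic bound $\|A\circ B\|\le\|A\|_{\max}\|B\|$ fails without non-negativity of $B$ correctly identifies the structural hypothesis both proofs depend on.
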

\begin{proof}
Notice that $c<\min_{i,j}N(i,j) \leq \|N\|_{\max} <1$ for some constant $c>0$. Then, $({\bf 1}_n{\bf 1}_n' - N ) \circ \widetilde \Omega$ is a  symmetric matrix with positive entries. By Perron's theorem (see \citep{HornJohnson} for example), the first eigenvector, denoted by $u_1$,  is a positive vector and $\lambda_1(({\bf 1}_n{\bf 1}_n' - N)  \circ \widetilde \Omega) = \| (N-{\bf 1}_n{\bf 1}_n')  \circ \widetilde \Omega \|$. It follows that 
\begin{align*}
 \| (N-{\bf 1}_n{\bf 1}_n')  \circ \widetilde \Omega \| &= u_1' ({\bf 1}_n{\bf 1}_n' - N)  \circ \widetilde \Omega u_1 = \sum_{i,j}u_1(i) u_1(j) (1- N_{ij}) \widetilde \Omega_{ij}  \notag\\
 &\leq \|N-{\bf 1}_n{\bf 1}_n'\|_{\max} \cdot \sum_{i,j}u_1(i) u_1(j)  \widetilde \Omega_{ij} 
 \notag\\
 & \leq \|N-{\bf 1}_n{\bf 1}_n'\|_{\max}   \lambda_1(\widetilde \Omega) \,. 
\end{align*}
Next, we show the second inequality. Recall  the decomposition 
\[
\Omega = N\circ \widetilde \Omega = \widetilde \Omega + (N - {\bf 1}_n{\bf1}_n')\circ \widetilde \Omega\,. 
\]
By Weyl's inequality (see \citep{HornJohnson} for example), 
\[
|\lambda_{K+1} (\Omega) - \lambda_{K+1}(\widetilde \Omega) |\leq \| \Omega  - \widetilde \Omega  \| \leq \|(N - {\bf 1}_n{\bf1}_n')\circ \widetilde \Omega\|
\]
Since $ \lambda_{K+1}(\widetilde \Omega) = 0$, we conclude that 
\[
 |\lambda_{K+1}({\Omega})| \leq  \| (N-{\bf 1}_n{\bf 1}_n')  \circ \widetilde \Omega\|
\]

\end{proof}

\section{The error rate of R-SCORE}
In this section, we mainly  prove Lemma~\ref{lemma:RSCORE} and Theorem~\ref{thm:main}. We streamline the proofs as follows:
\begin{itemize}
 \item [(1)] We show the error rate of SCORE vectors by R-SCORE, i.e., $\| \widehat R - R\|_F^2$ up to some orthogonal transformation. This, together with Lemma~\ref{lem: Hamming-R} concludes the proof of Lemma~\ref{lemma:RSCORE} (see Section~\ref{subsec:thmSCORE2});
 \item  [(2)] We prove the error rate of refitting $\theta$ and $P$ (see Sections~\ref{subsec:refit_theta} and \ref{subsec:refit_P}); 
 \item [(3)] Third, we investigate the error rate of $N$, more precisely,  $\|( N\oslash \widehat N - {\bf 1}_n{\bf 1}_n' ) \circ \widetilde \Omega \|$ and $\|N\oslash \widehat N - {\bf 1}_n{\bf 1}_n'  \|_F$ (see Section~\ref{subsec:refit_N});
 \item [(4)] Combining all the previous results, together with  Lemma~\ref{lem: Hamming-R}, we complete the proof of Theorem~\ref{thm:main} (see Section~\ref{subsec:main_thm} );
  \item [(5)] We also provide the brief proof of the Corollary~\ref{cor:main}, as it follows simply from Theorem~\ref{thm:main}. 
 \end{itemize}
 The details are provided in the subsequent subsections. 
 
 \subsection{Proof of Lemma~\ref{lemma:RSCORE}} \label{subsec:thmSCORE2}

Recall the assumption that $\widehat N$ satisties 
\[
{\bf1}_n {\bf 1}_n' \oslash \widehat N - {\bf1}_n {\bf 1}_n' = \widehat \Theta \widehat \Pi \widehat P \widehat \Pi' \widehat \Theta  
\]
such that with probability $1- o(n^{-3})$, 
\[
\|\widehat P - P\|_{\max}\ll \min\{ 1, \lambda_{\min}(P) \bar \theta^{-1}\}, \quad  \|\widehat \Pi - \Pi\| (\sqrt{n}\, \lambda_{\min}(P))^{-1} \bar\theta \to 0, 
\]
and 
\[
 \hat \theta_i \leq C\bar \theta, \quad  \bar \theta = o(1) 
\]
for some constant $c, C>0$. It follows that $\widehat N_{ij} = (1+ \hat \theta_i \hat \theta_j \hat \pi_i' \widehat P \hat \pi_j )^{-1} >C$ for some constant $0<C<1$  and $\widehat N_{ij} \leq 1$ for all $0<i,j<n $.

Let  $(\hat \lambda_k, \hat \xi_k)$ be the $k$-th largest eigen-pair of $A\oslash \widehat N$ (in magnitude) for $1\leq k \leq K$. For brevity, write $\widehat \Xi_1: = (\hat \xi_2, \cdots, \hat \xi_K)$. Denote by $(\lambda_k, \xi_k)$ and $\Xi_1$ the counterparts for the low-rank matrix $\widetilde \Omega = \Theta \Pi P \Pi'\Theta$. Without loss of generality, we assume both $\hat \xi_1$ and $\xi_1$ are positive. Under these notations, the SCORE vectors of $A$ and $\widetilde \Omega$ are defined as
\begin{align*}
\widehat R = (\hat r_1, \hat r_2, \cdots, \hat r_n)' = {\rm diag}(\hat \xi_1)^{-1} \widehat \Xi_1, \qquad R = (r_1, r_2, \cdots, r_n)' = {\rm diag}( \xi_1)^{-1}  \Xi_1
\end{align*}
We bound the error of $\widehat R - R$ by the eigenvalues of $A$. Consider the SVD of $\widehat \Xi_1' \Xi = UDV'$. Define $ O :={\rm sgn}( \widehat \Xi_1' \Xi_1 )= UV'$. Our model assumptions  gives that $\lambda_1(\widetilde \Omega) - |\lambda_2(\widetilde \Omega) | \geq c \lambda_1(\widetilde \Omega)$.  Applying sine-theta theorem \citep{sin-theta, yu2015useful}, we have 
\[
\min\{\| \hat \xi_1 - \xi_1\|, \|\Xi_1'\hat \xi_1\|\}\leq C\frac{\|A\oslash \widehat N - \widetilde \Omega\|}{\lambda_1(\widetilde \Omega) } 
\qquad
 \| \widehat \Xi_1 O- \Xi_1 \|_F \leq C\frac{\|A\oslash \widehat N - \widetilde \Omega\|}{ |\lambda_K(\widetilde \Omega) |}
\]

We write  
\[
A\oslash \widehat N =\widetilde \Omega +  (N\oslash \widehat N -{\bf 1}_n {\bf 1}_n') \circ \widetilde \Omega - {\rm diag} (N\circ \widetilde \Omega \oslash \widehat N) + W\oslash \widehat N : = \widetilde \Omega + \widetilde W
\]
It follows that 
\begin{align*}
\|\widetilde W\| & \leq \| (N\oslash \widehat N -{\bf 1}_n{\bf 1}_n') \circ \widetilde \Omega \| + \|{\rm diag} (N\circ \widetilde \Omega \oslash \widehat N) \| + \| W\oslash \widehat N\| \notag\\
& \leq  \| (N\oslash \widehat N -{\bf 1}_n{\bf 1}_n') \circ \widetilde \Omega \|   +  \| W\oslash N \circ(N\oslash \widehat N - {\bf 1}_n {\bf 1}_n')\|+C\sqrt{n\bar \theta^2}
\end{align*}
To obtain the RHS, we bound
\begin{align*}
\|{\rm diag} (N\circ \widetilde \Omega \oslash \widehat N) \|_F \leq C\|\widetilde \Omega\|_{\max} \leq C  \bar \theta^2 
\end{align*}
and 
\begin{align*}
\| W\oslash \widehat N\| &\leq  \| W\oslash  N\| +  \| W\oslash N \circ(N\oslash \widehat N - {\bf 1}_n {\bf 1}_n')\|   \notag\\
&\leq C\sqrt{n \bar \theta^2} + \| W\oslash N \circ(N\oslash \widehat N - {\bf 1}_n {\bf 1}_n')\|  
\end{align*}
where we used non-asymptotic bounds on the norm of random matrices in \citep{bandeira2016sharp} since $W\oslash N$ is a symmetric random matrix with independent upper triangular entries and each entry in $N$ of constant order. 
We further study $ \| W\oslash N \circ(N\oslash \widehat N - {\bf 1}_n {\bf 1}_n')\|$ as follows. Notice that 
\[
N\oslash \widehat N - {\bf 1}_n{\bf 1}_n' = ( \widehat \Theta \widehat \Pi \widehat P \widehat \Pi' \widehat \Theta  - \Theta \Pi P \Pi'\Theta) \circ N 
\]
by the definition of $N$ and $\widehat N$. Therefore, it suffices to bound 
\[
\|W\circ ( \widehat \Theta \widehat \Pi \widehat P \widehat \Pi' \widehat \Theta  - \Theta \Pi P \Pi'\Theta)\|
\]
Next, we decompose 
\begin{align*}
& \quad \|W\circ (\widehat \Theta \widehat \Pi \widehat P \widehat \Pi' \widehat \Theta  - \Theta \Pi P \Pi'\Theta)\|  \notag\\
&  =  \| W\circ \widehat \Theta \widehat \Pi (\widehat P - P)  \widehat \Pi' \widehat \Theta \|
+  \|W\circ  \widehat \Theta (\widehat \Pi - \Pi) P \widehat \Pi'\widehat \Theta \| + \| W\circ\widehat \Theta \Pi P (\widehat \Pi - \Pi)'\widehat \Theta \|  \notag\\
& \qquad   + \| W\circ  ( \widehat \Theta - \Theta)  \Pi P  \Pi'\widehat \Theta\|  + \|  W\circ \Theta \Pi P \Pi'( \widehat \Theta - \Theta) \|\notag\\
& =: \mathcal T_1 +  \mathcal  T_2 + \mathcal T_3 +  \mathcal  T_4+ \mathcal T_5 
\end{align*}
We bound each term separately below. 
For $\mathcal T_1$, we have 
\begin{align*}
\mathcal T_1 = \| \widehat \Theta (  W\circ  \widehat \Pi (\widehat P - P)  \widehat \Pi' ) \widehat \Theta   \| \leq C\bar \theta^2 \| W\circ  \widehat \Pi (\widehat P - P)  \widehat \Pi'  \| &\leq C\bar \theta^2 \| \widehat P - P\|_{\max} \|W\|_F \notag\\
&\leq n\bar \theta^3 \| \widehat P - P\|_{\max} 
\end{align*}
where $\|W\|_F \leq \sqrt n \|W\|\leq C n\bar \theta$ with probability $1- o(n^{-3})$. 

The analysis for bounding  $\mathcal T_2$ and  $\mathcal T_3$ is similar,  we provide the details for $\mathcal T_2$ only. 
\begin{align*}
\mathcal T_2 = \|  \widehat \Theta( W\circ  (\widehat \Pi - \Pi) P \widehat \Pi' ) \widehat \Theta \|  &\leq C\bar \theta^2 \| W\circ  (\widehat \Pi - \Pi) P \widehat \Pi'  \| \leq C\bar \theta^2 \| W\circ  (\widehat \Pi - \Pi) P \widehat \Pi'  \|_F \notag\\
& \leq C\bar \theta^2\sqrt{\sum_{i: \hat \pi_i \neq \pi_i} \sum_j W_{ij}^2[ (\hat \pi_i - \pi_i)'P \hat \pi_j ]^2}\notag\\
& \leq C\bar \theta^2\sqrt{\sum_{i: \hat \pi_i \neq \pi_i} \sum_j W_{ij}^2 \cdot 4\|P\|_{\max}^2}\notag\\
 & \leq C\bar \theta^2\|W\|_{2\to \infty} \| \widehat \Pi - \Pi\|  \notag\\
& \leq  \bar \theta^2\sqrt{n\bar \theta^2} \| \widehat \Pi - \Pi\|
\end{align*}
The last step is due to fact that $\|e_{n,i}' W\| \leq \sqrt{n\bar \theta^2}$ simultaneously for all $1\leq i \leq n$ with probability $1- o(n^{-3})$ by Bernstein inequality. From here to the end of this subsection, with a slight abuse of notation, we will use $\{e_i\}_{i=1}^n$ to denote the standard basis of $\mathbb R^n$ for simplicity. 

Next, for $\mathcal T_4$ and $\mathcal T_5$, the analysis is also analogous, and we show the bound for $\mathcal T_4$ in detail and omit the proof for $\mathcal T_5$. 
\begin{align*}
\mathcal T_4 =\| ( \widehat \Theta - \Theta) ( W\circ    \Pi P  \Pi') \widehat \Theta\|   \leq C\bar \theta^2 \|   W\circ    \Pi P  \Pi' \| \leq C\bar \theta^2 \sqrt{n\bar\theta^2} 
\end{align*}
where we bound $ \|   W\circ    \Pi P  \Pi' \|\leq \sqrt{n\bar \theta^2} $ by non-asymptotic bound for random matrices since $\|\Pi P  \Pi' \|_{\max} \leq C$ and $W\circ    \Pi P  \Pi' $ is symmetric random matrix with independent upper triangular entries.  

Combining the discussions above, we have 
\[
\|W\circ (\widehat \Theta \widehat \Pi \widehat P \widehat \Pi' \widehat \Theta  - \Theta \Pi P \Pi'\Theta)\|  \leq C\Big(  n\bar \theta^3 \| \widehat P - P\|_{\max}  +  \bar \theta^2\sqrt{n\bar \theta^2} \| \widehat \Pi - \Pi\| + \bar \theta^2 \sqrt{n\bar\theta^2}\Big)
\]
This further gives rise to 
\begin{align*}
\|\widetilde W\| & \leq  \| (N\oslash \widehat N -{\bf 1}_n{\bf 1}_n') \circ \widetilde \Omega \|   +  \|W\circ (\widehat \Theta \widehat \Pi \widehat P \widehat \Pi' \widehat \Theta  - \Theta \Pi P \Pi'\Theta)\| +C\sqrt{n\bar \theta^2} \notag\\
& \leq  \| (N\oslash \widehat N -{\bf 1}_n{\bf 1}_n') \circ \widetilde \Omega \|   +C\Big(  n\bar \theta^3 \| \widehat P - P\|_{\max}  +  \bar \theta^2\sqrt{n\bar \theta^2} \| \widehat \Pi - \Pi\| +  \sqrt{n\bar\theta^2}\Big) \notag\\
&\ll | \lambda_{K} (\widetilde \Omega) |
\end{align*}
under the assumptions that 
\[
\|\widehat P - P \|_{\max} \ll \lambda_{\min} (P)/\bar \theta \quad \| \widehat \Pi - \Pi\| \ll \sqrt{n} \lambda_{\min}(P)/\bar \theta\quad   \| (N\oslash \widehat N -{\bf 1}_n{\bf 1}_n') \circ \widetilde \Omega \| \ll | \lambda_{K} (\widetilde \Omega) |
\]
and $\sqrt{n\bar \theta^2} \lambda_{\min} (P)  \geq c_3 \log(n) $\,. 

We note that $\lambda_1(\widetilde \Omega)\asymp {n\bar\theta^2} $ and $|\lambda_K(\widetilde \Omega) | \asymp n\bar \theta^2 |\lambda_{\min}(P)|$. In addition, we have the decomposition $\Omega \oslash \widehat N  = \widetilde \Omega + (N\oslash \widehat N -{\bf 1}_n{\bf 1}_n') \circ \widetilde \Omega$. Since  $\| (N\oslash \widehat N -{\bf 1}_n{\bf 1}_n') \circ \widetilde \Omega \| \ll | \lambda_{K} (\widetilde \Omega) |$ with high probability, we obtain that 
\[
\lambda_k(\Omega \oslash \widehat N)  = \lambda_k(\widetilde \Omega) (1+ o(1)), \quad \text{ for $1\leq k \leq K$.} 
\]
Consequently, recall the definition that $\tau_n =   n\bar \theta^3 \| \widehat P - P\|_{\max}  +  \bar \theta^2\sqrt{n\bar \theta^2} \| \widehat \Pi - \Pi\| $, we obtain 
\begin{align} \label{2024092100}
&\min\{\| \hat \xi_1 - \xi_1\|, \|\Xi_1'\hat \xi_1\|\}\leq C\frac{  \| (N\oslash \widehat N -{\bf 1}_n{\bf 1}_n') \circ \widetilde \Omega \| +\tau_n  + \sqrt{\lambda_1(\widetilde \Omega) }}{\lambda_1(\widetilde \Omega)} = o(1) \notag\\
&\| \widehat \Xi_1 O- \Xi \|_F \leq C\frac{  \| (N\oslash \widehat N -{\bf 1}_n{\bf 1}_n') \circ \widetilde \Omega \| +\tau_n   + \sqrt{\lambda_1(\widetilde \Omega ) }}{\big| \lambda_K(\widetilde \Omega) \big|} = o(1)
\end{align}
with probability $1- o(n^{-3})$.

To proceed, we need to study the entry-wise error for $\hat \xi_1 - \xi_1$. By $(A\oslash \widehat N ) \hat \xi_1 = \hat \lambda_1 \hat \xi_1$ and $A\oslash \widehat N = \widetilde \Omega + \widetilde W$, we derive 
\begin{align*}
 \hat \xi_1 - \xi_1=( \hat \lambda_1^{-1} \lambda_1 \xi_1' \hat \xi_1 - 1)\xi_1 + \hat \lambda_1^{-1} \Xi_1 {\rm diag} (\lambda_2, \cdots, \lambda_K)  \Xi_1' \hat \xi_1 + \hat \lambda_1^{-1}\widetilde W \hat \xi_1\,.
\end{align*}
We  can bound 
\begin{align*}
| \hat \lambda_1^{-1} \lambda_1 \xi_1' \hat \xi_1 - 1| &\leq C (|\lambda_1^{-1}(\hat \lambda_1- \lambda_1 )| + |\xi_1'\hat \xi_1- 1|) \leq C\Big( \frac{\|\widetilde W\|}{\lambda_1(\widetilde \Omega)} +  \frac{\|\widetilde W\|^2}{\lambda_1(\widetilde \Omega)^2} \Big)  \notag\\
& \leq C\frac{  \| (N\oslash \widehat N -{\bf 1}_n{\bf 1}_n') \circ \widetilde \Omega \| +\tau_n  + \sqrt{\lambda_1(\widetilde \Omega) }}{\lambda_1(\widetilde \Omega)} 
\end{align*}
and 
\begin{align*}
 \| \hat \lambda_1^{-1}  {\rm diag} (\lambda_2, \cdots, \lambda_K)  \Xi_1' \hat \xi_1 \|\leq \|  \Xi_1' \hat \xi_1\| \leq C\frac{  \| (N\oslash \widehat N -{\bf 1}_n{\bf 1}_n') \circ \widetilde \Omega \| +\tau_n  + \sqrt{\lambda_1(\widetilde \Omega) }}{\lambda_1(\widetilde \Omega)} 
\end{align*}
These give rise to 
\begin{align}\label{2024092104}
|\hat \xi_1(i) - \xi_1(i) |\leq C\frac{  \| (N\oslash \widehat N -{\bf 1}_n{\bf 1}_n') \circ \widetilde \Omega \| +\tau_n  + \sqrt{\lambda_1(\widetilde \Omega) }}{\lambda_1(\widetilde \Omega)}   \cdot \frac{1}{\sqrt n} + \frac{|e_i' \widetilde W \hat \xi_1|}{n\bar \theta^2}
\end{align}
since $\|\Xi_1(i)\| \leq 1/\sqrt n$ and $\xi_1(i) \asymp 1/\sqrt n$ following from the assumptions on $\widetilde \Omega$ (see (\ref{bdd:eigenvector})) .

What remains to bound ${|e_i' \widetilde W \hat \xi_1|}/{n\bar \theta^2}$. Using the definition of $\widetilde W$, we first have 
\begin{align*}
& \quad {|e_i' \widetilde W \hat \xi_1|} \notag\\
&\leq  \big| e_i' [(N\oslash \widehat N -{\bf 1}_n {\bf 1}_n') \circ \widetilde \Omega]\hat \xi_1  \big| +  \big| (N\circ \widetilde \Omega \oslash \widehat N)_{ii} \hat \xi_1(i)   \big|+ \big| e_i' (W\oslash \widehat N )\hat \xi_1\big| \notag\\
& \leq  \|e_i' [(\widehat \Theta \widehat \Pi \widehat  P \widehat \Pi'  \widehat \Theta - \Theta \Pi P \Pi' \Theta) \circ \widetilde \Omega \circ N ]\|_1 \|\hat \xi_1\|_{\max}+ \bar \theta^2\Big(\frac{1}{\sqrt n}+ |\hat \xi_1(i) - \xi_1(i) | \Big) + \big| e_i ( W\oslash \widehat N) \hat \xi_1\big| \notag \\
& \leq n\bar \theta^4 \Big(\frac{1}{\sqrt n} +  \|\hat \xi_1 - \xi \|_{\max}\Big) + \bar \theta^2\Big(\frac{1}{\sqrt n}+ |\hat \xi_1(i) - \xi_1(i) | \Big) + \big| e_i ( W\oslash \widehat N) \hat \xi_1\big| 
\end{align*}
Here we crudely bound 
\[
\|e_i' [(N\oslash \widehat N -{\bf 1}_n {\bf 1}_n') \circ \widetilde \Omega]\|_1 \leq  \|e_i' [(\widehat \Theta \widehat \Pi \widehat  P \widehat \Pi'  \widehat \Theta - \Theta \Pi P \Pi' \Theta) \circ \widetilde \Omega \circ N ]\|_1 \leq C n\bar \theta^4\,. 
\]
and $\|N\circ \widetilde \Omega \oslash \widehat N \|_{\max} \leq C\| \widetilde \Omega\|_{\max}\leq C \bar \theta^2$ by the fact that all entries in $N$ and $\widehat N$ are lowered bound by a positive constant. 

%
%
Regarding  the last term on the RHS, i.e., $\big| e_i ( W\oslash \widehat N) \hat \xi_1\big| $, we have 
\begin{align*}
\big| e_i ( W\oslash \widehat N) \hat \xi_1\big| &\leq \big| e_i ( W\oslash  N) \hat \xi_1\big| + \big| e_i'W\oslash N \circ(N\oslash \widehat N - {\bf 1}_n {\bf 1}_n')\hat \xi_1\big| 
\end{align*}
We study the second term below. Note that $W\oslash N \circ(N\oslash \widehat N - {\bf 1}_n {\bf 1}_n') = W\circ (\widehat \Theta \widehat \Pi \widehat P \widehat \Pi' \widehat \Theta  - \Theta \Pi P \Pi'\Theta)$. We bound
\begin{align*}
\big| e_i'W\oslash N \circ(N\oslash \widehat N - {\bf 1}_n {\bf 1}_n')\hat \xi_1\big| &\leq 
 | e_i'( W\circ \widehat \Theta \widehat \Pi (\widehat P - P)  \widehat \Pi' \widehat \Theta) \hat \xi_1 |
+ | e_i'( W\circ  \widehat \Theta (\widehat \Pi - \Pi) P \widehat \Pi'\widehat \Theta)  \hat \xi_1 | \notag\\
& \quad  + | e_i'(W\circ\widehat \Theta \Pi P (\widehat \Pi - \Pi)'\widehat \Theta ) \hat \xi_1 |  + | e_i'( W\circ  ( \widehat \Theta - \Theta)  \Pi P  \Pi'\widehat \Theta ) \hat \xi_1 |  \notag\\
& \quad +| e_i'(  W\circ \Theta \Pi P \Pi'( \widehat \Theta - \Theta) ) \hat \xi_1 |\notag\\
\end{align*}
For each term, we further have 
\begin{align*}
 & | e_i'( W\circ \widehat \Theta \widehat \Pi (\widehat P - P)  \widehat \Pi' \widehat \Theta) \hat \xi_1 |\leq  | e_i' \widehat \Theta ( W\circ \widehat \Pi (\widehat P - P)  \widehat \Pi' ) \widehat \Theta\hat \xi_1 | \leq \bar \theta^2 \|e_i'W\circ \widehat \Pi (\widehat P - P)  \widehat \Pi'  \|_1 \|\hat \xi_1\|_{\max}\notag\\
 & \qquad \qquad \qquad  \qquad \qquad \qquad \leq \bar \theta^2\|e_i'W\|\|e_i' \widehat \Pi (\widehat P - P)  \widehat \Pi' \|\|\hat \xi_1\|_{\max}  \leq n\bar \theta^3 \|\widehat P - P \|_{\max}  \|\hat \xi_1\|_{\max} \notag\\
&  | e_i'( W\circ  \widehat \Theta (\widehat \Pi - \Pi) P \widehat \Pi'\widehat \Theta)  \hat \xi_1 | \leq C \bar \theta \|e_i'W\| \|e_i'  (\widehat \Pi - \Pi) P \widehat \Pi'\widehat \Theta) \circ  \hat \xi_1\| \leq  C \bar \theta^2\|e_i'W\| \|(\widehat \Pi - \Pi) P \widehat \Pi'\|_{\max} \|\hat \xi_1\| 
\notag\\
& \qquad \qquad \qquad  \qquad \qquad \qquad  \leq   C \bar \theta^2\|e_i'W\| \leq C\bar \theta^2 \sqrt{n\bar \theta^2} \notag\\
 &  | e_i'(W\circ\widehat \Theta \Pi P (\widehat \Pi - \Pi)'\widehat \Theta ) \hat \xi_1 |\leq C \bar \theta \|e_iW\| \|e_i'  \Pi P (\widehat \Pi - \Pi)'\widehat \Theta) \circ  \hat \xi_1\|
 \leq   C \bar \theta^2\|e_iW\| \leq C\bar \theta^2 \sqrt{n\bar \theta^2} \notag\\
 &  | e_i'( W\circ  ( \widehat \Theta - \Theta)  \Pi P  \Pi'\widehat \Theta ) \hat \xi_1 | \leq |\hat \theta_i - \theta_i |  | e_i'( W\circ   \Pi P  \Pi') \widehat \Theta \hat \xi_1 | \leq C\bar \theta^2 \|e_i'( W\circ   \Pi P  \Pi')\|\leq C\bar \theta^2 \sqrt{n\bar \theta^2} \notag\\
  & | e_i'(  W\circ \Theta \Pi P \Pi'( \widehat \Theta - \Theta) ) \hat \xi_1 | \leq C\theta_i   | e_i'( W\circ   \Pi P  \Pi') ( \widehat \Theta - \Theta) \hat \xi_1 | \leq C\bar \theta^2 \|e_i'( W\circ   \Pi P  \Pi')\|\leq C\bar \theta^2 \sqrt{n\bar \theta^2}
\end{align*}
Combining all these inequalities, we arrive at 
\begin{align}\label{2024092105}
\frac{|e_i' \widetilde W \hat \xi_1|}{n\bar \theta^2} &\leq\frac Cn \Big(\frac{1}{\sqrt n}+ |\hat \xi_1(i) - \xi_1(i) | \Big) +C(\bar \theta^2 + \bar \theta\|\widehat P - P\|_{\max} )\Big(\frac{1}{\sqrt n} +  \|\hat \xi_1 - \xi_1 \|_{\max}\Big) \notag\\
& \qquad +\frac{C\bar \theta}{\sqrt n}+ \frac{\big| e_i' ( W\oslash  N) \hat \xi_1\big|}{n\bar \theta^2}
\end{align}

In the sequel, we analyze $\big| e_i' ( W\oslash  N) \hat \xi_1\big|$ by leave-one-out technique. Let $\hat \xi_1^{(i)}$ be the first eigenvector of 
\[
A^{(i)} \oslash N = \widetilde \Omega  - {\rm diag} ( \widetilde \Omega) + W^{(i)}\oslash N
\]
where $W^{(i)}$ is obtained by zeroing out the $i$-th row and column of $W$.
Without loss of generality, we assume ${\rm sgn}(\hat \xi_1' \hat \xi_1^{(i)}) = 1$. Thus, 
\begin{align*}
\big| e_i' ( W\oslash  N) \hat \xi_1\big| &\leq \big| e_i' ( W\oslash  N) \hat \xi_1^{(i)}\big| + \|e_i ( W\oslash  N)\| \| \hat \xi_1 -  \hat \xi_1^{(i)} \| \notag\\
& \leq  C\Big( \bar \theta \sqrt{\log(n)} + \|\hat \xi_1^{(i)}\|_{\max} \log(n) + \sqrt{n\bar \theta^2} \| \hat \xi_1 -  \hat \xi_1^{(i)} \| \Big) \notag\\
& \leq C\Big( \bar \theta \sqrt{\log(n)} + \|\hat \xi_1\|_{\max} \log(n) + \sqrt{n\bar \theta^2} \| \hat \xi_1 -  \hat \xi_1^{(i)} \| \Big)
\end{align*}
where we applied Bernstein inequality on $e_i' ( W\oslash  N) \hat \xi_1^{(i)}$ as $e_i' ( W\oslash  N) $ is independent of $\hat \xi_1^{(i)}$. And the last step is due to the derivation 
\[
\|\hat \xi_1^{(i)}\|_{\max} \log(n) \leq \big(\|\hat \xi_1\|_{\max}+ \|\hat \xi_1^{(i)} -\hat \xi_1\|_{\max} \big) \log(n) \leq \|\hat \xi_1\|_{\max} \log(n) + \|\hat \xi_1^{(i)} -\hat \xi_1\| \sqrt{n\bar \theta^2} 
\]
under the condition that $\sqrt{n\bar \theta^2}\lambda_{\min} (P) \geq c_3\log(n)$. Next, by sine-theta theorem, we bound 
\begin{align}\label{2024092101}
 \| \hat \xi_1 -  \hat \xi_1^{(i)} \| &\leq \frac{\| \big(\widetilde W - W^{(i)}\oslash N +  {\rm diag} ( \widetilde \Omega) \big) \hat \xi_1\|}{ n\bar \theta^2} \notag\\
& \leq \frac{\| e_ie_i'( W\oslash  N) \hat \xi_1+ ( W\oslash  N) e_ie_i' \hat \xi_1\|}{n\bar \theta^2} + \frac{\|  \big((N\oslash \widehat N -{\bf 1}_n {\bf 1}_n') \circ \widetilde \Omega\big)  \hat \xi_1\|}{n\bar \theta^2} \notag\\
& \qquad +  \frac{\|W\oslash N \circ(N\oslash \widehat N - {\bf 1}_n {\bf 1}_n') \hat \xi_1 \|}{n\bar \theta^2}+ \frac{C}{n}
\end{align}
where we used the decomposition
\[
\widetilde W = (N\oslash \widehat N -{\bf 1}_n {\bf 1}_n') \circ \widetilde \Omega - {\rm diag} (N\circ \widetilde \Omega \oslash \widehat N) + W\oslash  N +  W\oslash N \circ(N\oslash \widehat N - {\bf 1}_n {\bf 1}_n')
\]
and the crude bound
\begin{align*}
& \big\| \big[{\rm diag} ( \widetilde \Omega) - {\rm diag} (N\circ \widetilde \Omega \oslash \widehat N) \big] \hat \xi_1 \big\| \leq C\bar \theta^2 \|\hat \xi_1\| \leq C\bar \theta^2 
\end{align*}
following from $N\oslash \widehat N \leq C$ with high probability and $\|\widetilde \Omega\|_{\max} \leq C \bar \theta^2$. To proceed, we further  bound
\begin{align}\label{2024092102}
\frac{\| \big(  (N\oslash \widehat N -{\bf 1}_n {\bf 1}_n') \circ \widetilde \Omega \big) \hat \xi_1\|}{n\bar \theta^2} \leq \frac{\|  (N\oslash \widehat N -{\bf 1}_n {\bf 1}_n') \circ \widetilde \Omega\|}{n\bar \theta^2} \leq \frac{\|  N\oslash \widehat N -{\bf 1}_n {\bf 1}_n' \|_F}{n}
\end{align}
And we analyze the upper bound for $\|W\oslash N \circ(N\oslash \widehat N - {\bf 1}_n {\bf 1}_n') \hat \xi_1 \|$ below. By definition, 
\begin{align*}
\|W\oslash N \circ(N\oslash \widehat N - {\bf 1}_n {\bf 1}_n') \hat \xi_1 \| & = \sqrt{\sum_i \Big(\sum_j (W\oslash N)_{ij}(N\oslash \widehat N - {\bf 1}_n {\bf 1}_n')_{ij} \hat \xi_1(j)  \Big)^2} \notag\\
& \leq  \sqrt{\sum_i  \sum_j (N\oslash \widehat N - {\bf 1}_n {\bf 1}_n')_{ij}^2 \cdot \sum_j (W\oslash N)_{ij}^2 \hat \xi_1(j)^2} \notag\\
& \leq  \sqrt{\sum_{i,j} (N\oslash \widehat N - {\bf 1}_n {\bf 1}_n')_{ij}^2 \cdot \max_i \sum_j (W\oslash N)_{ij}^2 \|\hat \xi_1\|^2_{\max}} \notag\\
& \leq \|N\oslash \widehat N - {\bf 1}_n {\bf 1}_n'\|_F \|\hat \xi_1\|_{\max} \sqrt{\max_i \sum_j (W\oslash N)_{ij}^2}
\end{align*}
where in the second step we used Cauchy-Schwarz inequality. 
Regarding the last factor inside the square root, for each fixed $i$, it is a sum of independent r.v.s, so we can use Bernstein inequality to get its high probability bound. Specifically, fixed an $i$, as $N$ is deterministic and each entry of $N$  is $\asymp 1$, we can derive the mean of $\sum_j (W\oslash N)_{ij}^2$  is given by 
\[
\sum_j \mathbb E(W\oslash N)_{ij}^2\asymp \sum_j \theta_i \theta_j \asymp n\bar \theta^2\, ;
\]
And the variance can be estimated by 
\[
\sum_j {\rm var} (W\oslash N)_{ij}^2 \leq \sum_j \mathbb E(W\oslash N)_{ij}^4 \leq C n\bar \theta^2\,. 
\]
Consequently, by Bernstein inequality, it is not hard to derive 
\[
\Big| \sum_j (W\oslash N)_{ij}^2 - \mathbb E(W\oslash N)_{ij}^2 \Big| \leq C\sqrt{n\bar \theta^2 \log(n) } +C \log(n) \ll C n\bar \theta^2
\] 
with probability $1- o(n^{-4})$. Then, combining all $i$, it gives that 
\[
\max_i \sum_j (W\oslash N)_{ij}^2\leq C n\bar \theta^2
\]
with probability $1- o(n^{-3})$. We thus obtain that 
\begin{align}\label{2024092301}
\|W\oslash N \circ(N\oslash \widehat N - {\bf 1}_n {\bf 1}_n') \hat \xi_1 \|\leq C\|N\oslash \widehat N - {\bf 1}_n {\bf 1}_n'\|_F \cdot  \sqrt n \bar \theta \|\hat \xi_1\|_{\max}
\end{align}

Next, we study the bound of $\| e_ie_i'( W\oslash  N) \hat \xi_1+ ( W\oslash  N) e_ie_i' \hat \xi_1\|$ below. 
\begin{align}\label{2024092103}
& \quad \| e_ie_i'( W\oslash  N) \hat \xi_1+ ( W\oslash  N) e_ie_i' \hat \xi_1\| \notag\\
& \leq |e_i'( W\oslash  N)  \hat \xi_1 | + \| e_i'( W\oslash  N) \| |\hat \xi_1(i) | \notag\\
& \leq  C\Big( \bar \theta \sqrt{\log(n)} + \|\hat \xi_1\|_{\max} \log(n) + \sqrt{n\bar \theta^2} \| \hat \xi_1 -  \hat \xi_1^{(i)} \| + \sqrt{n\bar \theta^2} |\hat \xi_1(i) | \Big).
\end{align}

Combining (\ref{2024092101}) - (\ref{2024092103}), we get  
\begin{align*}
\| \hat \xi_1 -  \hat \xi_1^{(i)} \| &  \leq C\bigg( \frac{ \sqrt{\log(n)} }{n\bar \theta}+ \|\hat \xi_1\|_{\max}\frac{ \log(n) }{n\bar \theta^2} +\frac{ \| \hat \xi_1 -  \hat \xi_1^{(i)} \|}{ \sqrt{n\bar \theta^2}}   \notag\\
& \qquad +\frac{ |\hat \xi_1(i) |}{ \sqrt{n\bar \theta^2}} + \frac{\|\hat \xi_1\|_{\max} \|  N\oslash \widehat N -{\bf 1}_n {\bf 1}_n' \|_F}{\sqrt{n\bar \theta^2}} \bigg)
\end{align*}
Rearranging both sides gives that 
\[
\| \hat \xi_1 -  \hat \xi_1^{(i)} \| \leq C\bigg( \frac{ \sqrt{\log(n)} }{n\bar \theta}+ \|\hat \xi_1\|_{\max}\frac{1 + \|N\oslash \widehat N - {\bf 1}_n{\bf 1}_n' \|_{F} }{\sqrt{n\bar \theta^2}}   \bigg)
\]
Consequently, 
\begin{align*}
\frac{\big| e_i ( W\oslash  N) \hat \xi_1\big|}{n\bar \theta^2} &\leq C\bigg(  \frac{ \sqrt{\log(n)} }{n\bar \theta}+ \frac{\|\hat \xi_1\|_{\max} \log(n)}{n\bar \theta^2} + \frac{ \| \hat \xi_1 -  \hat \xi_1^{(i)} \|}{\sqrt{n\bar \theta^2}} \bigg) \notag\\
& \leq C\bigg(  \frac{ \sqrt{\log(n)} }{n\bar \theta} + \|\hat \xi_1\|_{\max}\Big[ \frac{ \log(n)}{n\bar \theta^2} + \frac{ \|N\oslash \widehat N - {\bf 1}_n{\bf 1}_n' \|_{F}}{n\bar \theta^2} \Big]\bigg)
\end{align*}
Plugging this and (\ref{2024092105}) into (\ref{2024092104}), we have 
\begin{align*}
|\hat \xi_1(i) - \xi_1(i) |& \leq C\frac{  \| (N\oslash \widehat N -{\bf 1}_n{\bf 1}_n') \circ \widetilde \Omega \| +\tau_n  + \sqrt{\lambda_1(\widetilde \Omega) }}{\lambda_1(\widetilde \Omega)}   \cdot \frac{1}{\sqrt n} + \frac{C\bar \theta}{\sqrt n}\notag\\
&\quad  + C(\bar \theta^2 + \bar \theta \|\widehat P - P\|_{\max} )\Big(\frac{1}{\sqrt n} +  \|\hat \xi_1 - \xi \|_{\max}\Big)+ \frac Cn  |\hat \xi_1(i) - \xi_1(i) | \notag\\
& \quad +C\bigg(  \frac{ \sqrt{\log(n)} }{n\bar \theta}+ \|\hat \xi_1\|_{\max}\Big[ \frac{ \log(n)}{n\bar \theta^2} + \frac{ \|N\oslash \widehat N - {\bf 1}_n{\bf 1}_n' \|_{F}}{n\bar \theta^2} \Big]  \bigg) 
\end{align*}
Rearranging both sides, together with $\|\hat \xi_1\|_{\max}  \leq \| \xi_1\|_{\max} + \|\hat \xi_1 - \xi_1\|_{\max}\leq C/\sqrt n + \|\hat \xi_1 - \xi_1\|_{\max}$, gives rise to  
\begin{align*}
& \quad |\hat \xi_1(i) - \xi_1(i) |\notag\\
& \leq C \Big( \frac{ \| (N\oslash \widehat N -{\bf 1}_n{\bf 1}_n') \circ \widetilde \Omega \|+ \tau_n }{n\bar \theta^2}  +  \frac{ \sqrt{\log(n)} }{\sqrt{n\bar \theta^2} } + \frac{\|N\oslash \widehat N -{\bf 1}_n {\bf 1}_n'\|_{F} }{n\bar\theta^2 }  + \bar \theta +\bar \theta \|\widehat P - P\|_{\max}\Big) \frac{1}{\sqrt{n}}\notag\\
& \quad +C\Big(  \frac{\|N\oslash \widehat N -{\bf 1}_n {\bf 1}_n'\|_{F} }{n\bar \theta^2} + \bar \theta^2 + \bar \theta\|\widehat P - P\|_{\max}+ \frac{\log(n)}{n\bar\theta^2} \Big) \|\hat \xi_1 - \xi_1\|_{\max}
\end{align*}
We further take maximum for both sides. Under the conditions that  $\| (N\oslash \widehat N -{\bf 1}_n{\bf 1}_n') \circ \widetilde \Omega \|\ll n\bar \theta^2$, $\|N\oslash \widehat N -{\bf 1}_n {\bf 1}_n'\|_F \ll n\bar \theta^2$ and $n\bar \theta^2 \geq C(\log(n))^2$, together with $\bar \theta \|\widehat P - P\|_{\max} \ll \lambda_{\min}(P) \leq C$ and $\bar \theta = o(1)$,  it yields that with probability $1- o(n^{-3})$,
{
\begin{align*}
 \|\hat \xi_1 - \xi_1\|_{\max} &\leq  \frac{C}{\sqrt{n}} \Big( \frac{ \| (N\oslash \widehat N -{\bf 1}_n{\bf 1}_n') \circ \widetilde \Omega \|+ \tau_n }{n\bar \theta^2}  +  \frac{ \sqrt{\log(n)} }{\sqrt{n\bar \theta^2} } + \frac{\|N\oslash \widehat N -{\bf 1}_n {\bf 1}_n'\|_{F} }{n\bar\theta^2 }  \notag\\
 & \qquad \quad  + \bar \theta^2 + \bar \theta \|\widehat P - P\|_{\max}\Big)\notag\\
 &  \ll \frac{1}{\sqrt n}
\end{align*}
}
Further by $|\xi_1(i)| \asymp 1/\sqrt n$ for $1\leq i \leq n$, we deduce that $|\hat \xi_1(i)| \asymp 1/\sqrt n$ for $1\leq i \leq n$. 

Now, by the definition of $\widehat R$, we can derive  
\begin{align*}
\|\widehat RO - R \|_F^2 &=\|  {\rm diag}(\hat \xi_1)^{-1} (\widehat \Xi_1O - \Xi) -  \big[ {\rm diag}(\hat \xi_1)^{-1}  - {\rm diag}( \xi_1)^{-1} \big] \Xi_1\|^2_F \notag\\
& \leq C\Big(\|  {\rm diag}(\hat \xi_1)^{-1} (\widehat \Xi_1O - \Xi) \|_F^2 + \|  \big[ {\rm diag}(\hat \xi_1)^{-1}  - {\rm diag}( \xi_1)^{-1} \big] \Xi_1\|_F^2\big) \notag\\
& \leq C\Big( { n }\|\widehat \Xi_1O - \Xi\|_F^2 + \|\hat \xi_1 - \xi_1\|^2\|{\rm diag}(\hat \xi_1)^{-1}   {\rm diag}( \xi_1)^{-1}\Xi_1\|_{2\to \infty}^2 \Big) \notag\\
& \leq C\Big({ n }\|\widehat \Xi_1O - \Xi\|_F^2 + n\, \|\hat \xi_1 - \xi_1\|^2 \Big) 
\end{align*}
By (\ref{2024092100}), we conclude that 
\begin{align*}
\|\widehat RO - R \|_F^2 & \leq  Cn\frac{  \| (N\oslash \widehat N -{\bf 1}_n{\bf 1}_n') \circ \Omega  \|^2  +\tau_n^2   +{ \lambda_1(\widetilde \Omega) } }{\big| \lambda_K(\widetilde \Omega)  \big|^2}
\end{align*}
with probability $1- o(n^{-3})$. Now, combining the above result with  Lemma~\ref{lem: Hamming-R}, we conclude the proof of Lemma~\ref{lemma:RSCORE}.

\subsection{The error rate of $\theta$} \label{subsec:refit_theta} 
In this subsection, we prove the error rate for refitting $\theta$ under the assumptions in Theorem~\ref{thm:main}.The results is collected in the following Lemma. 
\begin{lemma} \label{lem:error_theta}
Under the assumptions in Theorem~\ref{thm:main}, it holds that  with probability $1 - o(n^{-3})$, 
\begin{align*}
\begin{array}{ll}
 \big| \hat \theta_i - \theta_i  \big| \leq C\Big( \sqrt{\log(n)/n} + r_n/\bar\theta\Big), & \text{ if  \quad $\hat \pi_i = \pi_i$}; \\
 \big| \hat \theta_i - P_{k k_0}\theta_i  \big| \leq C\Big( \bar\theta (\log(n) /n\bar \theta^2)^{1/4}+ \sqrt{r_n}\Big), & \text{ if \quad $\hat \pi_i = e_{k} \neq e_{k_0} =  \pi_i$}.
 \end{array}
\end{align*}
 where  $r_n$ is the Hamming error of the $\widehat \Pi$ by directly SCORE  and $\{e_k\}_{k=1}^K $  represents the standard basis of $\mathbb R^{K}$.
\end{lemma}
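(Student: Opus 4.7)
Write the refitting estimator as $\hat\theta_i^2 = \phi^{(1)}(i)/\phi^{(2)}(i)$, where $\phi^{(j)}$ are the numerator and denominator of (\ref{thetaestimate}) with $S_i$ replaced by $\hat{S}_{k,i}$, and let $\mu^{(j)} := \mathbb{E}\phi^{(j)}$. With target $c_i := \theta_i$ in case~1 and $c_i := P_{k k_0}\theta_i$ in case~2, I decompose
\[
\hat\theta_i^2 - c_i^2 \;=\; \frac{\phi^{(1)} - \mu^{(1)}}{\phi^{(2)}} \;-\; c_i^2\,\frac{\phi^{(2)} - \mu^{(2)}}{\phi^{(2)}} \;+\; \frac{\mu^{(1)} - c_i^2 \mu^{(2)}}{\phi^{(2)}},
\]
isolating the stochastic fluctuation from the deterministic misclassification bias, and the plan is to bound each piece separately and then pass from $\hat\theta_i^2$ to $\hat\theta_i$.

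The bias analysis builds on Lemma~\ref{lemma:theta}: a direct term-by-term calculation (using $\Omega_{ij} = N_{ij}\theta_i\theta_j\pi_i'P\pi_j$) shows that whenever both $j$ and $k'$ truly belong to the target community $\mathcal{C}_{\hat\pi_i}$, the contribution to $\mu^{(1)} - c_i^2\mu^{(2)}$ vanishes, using $P_{kk}=1$ and the symmetry $P_{k\ell} = P_{\ell k}$. Thus only pairs with at least one of $j, k'$ misclassified contribute; there are at most $O(n^2 r_n)$ such pairs, each of absolute value $\leq C\bar\theta^4$. For the stochastic terms, I bound $\mathrm{Var}(\phi^{(2)}) \leq Cn^3\bar\theta^4$ and $\mathrm{Var}(\phi^{(1)}) \leq Cn^3\bar\theta^6$ by enumerating pairs of triples in the double sum according to their shared vertices (the dominant contribution comes from triples sharing exactly one vertex); an appropriate concentration inequality for low-degree polynomials of Bernoulli variables then gives $|\phi^{(j)} - \mu^{(j)}|$ at most $\sqrt{\log n}$ times the standard deviation with probability $1 - o(n^{-3})$. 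Together with $\mu^{(2)} \asymp n^2\bar\theta^2$ this yields $\phi^{(2)} \asymp n^2\bar\theta^2$ on the same event, and
\[
|\hat\theta_i^2 - c_i^2| \;\leq\; C\bar\theta^2\Bigl(\sqrt{\tfrac{\log n}{n\bar\theta^2}} + r_n\Bigr).
\]

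To pass from $\hat\theta_i^2$ to $\hat\theta_i$, I treat the two cases differently. In case~1 the bound above implies $\hat\theta_i \asymp \bar\theta$ a posteriori, so $|\hat\theta_i - \theta_i| = |\hat\theta_i^2 - \theta_i^2|/(\hat\theta_i + \theta_i) \leq C|\hat\theta_i^2 - \theta_i^2|/\bar\theta$, giving the stated $\sqrt{\log(n)/n} + r_n/\bar\theta$. In case~2 the lower bound $\hat\theta_i + |P_{k k_0}|\theta_i \gtrsim \bar\theta$ may fail because $|P_{k k_0}|$ can be arbitrarily small; instead I use the elementary inequality $|a - b| \leq \sqrt{|a^2 - b^2|}$ for $a, b \geq 0$, which introduces a fourth root and produces the claimed $\bar\theta(\log(n)/(n\bar\theta^2))^{1/4} + \sqrt{r_n}$. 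The main technical obstacle is the concentration step: the summands in the cycle counts share edges, so a naive Bernstein bound for independent sums is unavailable and a bounded-differences argument loses too much; one must carry out careful covariance bookkeeping (or invoke a Kim--Vu-type polynomial concentration inequality) to obtain the variance scaling used above, and take care that the a-posteriori bound $\hat\theta_i \asymp \bar\theta$ in case~1 is bootstrapped cleanly from the $\hat\theta_i^2$ estimate.
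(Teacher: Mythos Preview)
Your decomposition has a genuine gap: the index set $\hat S_{k,i}$ over which $\phi^{(1)},\phi^{(2)}$ are summed is itself a function of $A$ (it is determined by the SCORE clustering), so writing $\mu^{(j)}=\mathbb E\phi^{(j)}$ and then invoking polynomial concentration for $\phi^{(j)}-\mu^{(j)}$ is not legitimate. Kim--Vu and related inequalities apply to a \emph{fixed} polynomial in independent Bernoullis; here the polynomial changes with the data. Your discussion of ``shared edges'' is a separate (and milder) difficulty---the real obstacle is the data-dependence of the summation set, which you do not address.

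The paper handles this by first swapping $\hat S_{k,i}$ for the deterministic set $S_{k,i}=\mathcal C_k\setminus\{i\}$ at the level of $A$, bounding
\[
\Bigl|\sum_{j\neq t\in \hat S_{k,i}}A_{ij}(1-A_{jt})A_{ti}-\sum_{j\neq t\in S_{k,i}}A_{ij}(1-A_{jt})A_{ti}\Bigr|\leq Cn^{2}r_n\bar\theta^{2}
\]
via the crude estimate $\|e_{n,i}'A\|_1\leq Cn\bar\theta^2$ and $\|\hat 1_k-1_k\|_1\leq nr_n$, and only \emph{afterwards} expanding $A=\Omega+W$ and applying Bernstein plus decoupling for U-statistics on the deterministic-index sums. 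This swap error $n^{2}r_n\bar\theta^{2}$, divided by $\mu^{(2)}\asymp n^{2}\bar\theta^{2}$, is what produces the $r_n$ term in the bound for $\hat\theta_i^{2}$ and hence the $r_n/\bar\theta$ in case~1. Your sharper bias estimate $O(n^{2}r_n\bar\theta^{4})$ covers only the population part $\Omega_{ij}(1-\Omega_{jt})\Omega_{ti}$ over misclassified pairs; the fluctuation over those same misclassified indices is of the larger order $n^{2}r_n\bar\theta^{2}$ and is not accounted for in your argument. Indeed your own intermediate bound $|\hat\theta_i^{2}-c_i^{2}|\leq C\bar\theta^{2}(\sqrt{\log n/(n\bar\theta^{2})}+r_n)$, after dividing by $\bar\theta$, yields $\bar\theta r_n$ rather than the $r_n/\bar\theta$ you claim in the next line---the arithmetic inconsistency is a symptom of the missing term.
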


We prove Lemma~\ref{lem:error_theta} below. 

Recall the refitting formula for $\theta$:
\begin{align}\label{eq:express_htheta}
\hat{\theta}_i = \sqrt{ \frac{\sum_{j \neq t \in \hat{S}_{k, i}}  A_{ij} (1- A_{jt})  A_{ti}}{\sum_{j \neq t \in \hat{S}_{k,i}}  (1- A_{ij})  A_{jt} (1- A_{ti})}}
\end{align}
where $i \in \widehat{ \mathcal C}_k $ and $\hat{S}_{k, i} = \widehat{\mathcal C}_k \setminus \{i\}$.
Using the error rate of $\widehat \Pi$ from SCORE, i.e., $\|\widehat \Pi - \Pi\|_1 \leq nr_n$, we first crudely bound the numerator and denominator in the expression of $\hat \theta_i$. Let $\hat 1_k$ and $1_k$ denote the $k$0th column of $\widehat \Pi $ and $\Pi$, respectively.
\begin{align*}
&\quad \sum_{j \neq t \in \hat{S}_{k, i}}  A_{ij} (1- A_{jt})  A_{ti}  \notag\\
 & = e_{n,i}' A {\rm diag} \big(\hat 1_k \big)({\bf 1}_n {\bf 1}_n' -I_n- A) {\rm diag} \big(\hat 1_k \big)A e_{n,i} \notag\\
& =  e_{n,i}' A {\rm diag} \big( 1_k \big)({\bf 1}_n {\bf 1}_n' -I_n- A) {\rm diag} \big(1_k \big)A e_{n,i} + e_{n,i}' A {\rm diag} \big( \hat 1_k - 1_k \big)({\bf 1}_n {\bf 1}_n' -I_n- A) {\rm diag} \big(1_k \big)A e_{n,i} \notag\\
& \quad + e_{n,i}' A {\rm diag} \big( \hat 1_k \big)({\bf 1}_n {\bf 1}_n' -I_n- A) {\rm diag} \big(\hat 1_k - 1_k \big)A e_{n,i} 
\end{align*}
For the second and third terms on the RHS above,  we bound 
\begin{align*}
& \big| e_{n,i}' A {\rm diag} \big( \hat 1_k - 1_k \big)({\bf 1}_n {\bf 1}_n' -I_n- A) {\rm diag} \big(1_k \big)A e_{n,i}  + e_{n,i}' A {\rm diag} \big( \hat 1_k \big)({\bf 1}_n {\bf 1}_n' -I_n- A) {\rm diag} \big(\hat 1_k - 1_k \big)A e_{n,i} \big| \notag\\
& \leq 2 \|e_{n,i}' A \|_1 \cdot  \| e_{n,i}' A {\rm diag} \big( \hat 1_k - 1_k \big)\|_1\notag\\
& \leq Cn\bar \theta^2 \| \hat 1_k - 1_k\|_1 \leq Cn\bar \theta^2 \| \widehat \Pi - \Pi \|_1 \leq Cn^{2}r_n\bar \theta^2 
\end{align*}
where to obtain the third line, we used 
\begin{align*}
& \|e_{n,i}' A \|_1  = \sum_{j\neq i} A_{ij} =  \sum_{j\neq i} \Omega_{ij} +  \sum_{j\neq i} W_{ij}  = n\bar \theta^2 + O(\sqrt{n\bar \theta^2 \log(n)}\, )\notag\\
&  \| e_{n,i}' A {\rm diag} \big( \hat 1_k - 1_k \big)\|_1\leq \|e_{n,i}' A\|_{\max} \| \hat 1_k - 1_k\|_1 \leq \| \hat 1_k - 1_k\|_1
\end{align*}
simultaneously for all $1\leq i \leq n$, with probability $1- o(n^{-3})$. Here Bernstein inequality is employed to derive 
\[
\Big|  \sum_{j\neq i} W_{ij} \Big| \leq C\big(\sqrt{ \sum_{j\neq i}{\rm var} (W_{ij}) \log(n) } + \log(n) \big) 
\leq C\sqrt{n\bar \theta^2 \log(n)}
\]
by noting that $W_{ij}$ is with mean $0$ and variance 
\[
{\rm var} (W_{ij})  = \Omega_{ij} (1- \Omega_{ij}) \leq \theta_i\theta_j
\]
and the condition $n\bar \theta^2 \geq C \log(n)$. 
Therefore, 
\begin{align}\label{error:theta_fromPi1}
\Big| \sum_{j \neq t \in \hat{S}_{k, i}}  A_{ij} (1- A_{jt})  A_{ti}  - \sum_{j \neq t \in {S}_{k, i}}  A_{ij} (1- A_{jt})  A_{ti}  \Big| \leq C  n^{2} r_n \bar \theta^2 \,.
\end{align}
simultaneously for all $1\leq i \leq n$, with probability $1- o(n^{-3})$. 
Similarly, we can show that 
 \begin{align}\label{error:theta_fromPi2}
& \Big| \sum_{j \neq t \in \hat{S}_{k,i}}  (1- A_{ij})  A_{jt} (1- A_{ti})  -  \sum_{j \neq t \in {S}_{k,i}}  (1- A_{ij})  A_{jt} (1- A_{ti})\Big|  \notag\\
&
\leq C\|\hat 1_k - 1_k\|_1\cdot n\bar \theta^2 \leq C \|\hat \Pi -\Pi\|_1 n \bar \theta^2 \leq n^{2}r_n \bar \theta^2\,. 
  \end{align}
To proceed, we  study $ \sum_{j \neq t \in {S}_{k, i}}  A_{ij} (1- A_{jt})  A_{ti} $ and $ \sum_{j \neq t \in {S}_{k,i}}  (1- A_{ij})  A_{jt} (1- A_{ti})$ instead. Recall the decomposition $A = \Omega - {\rm diag}(\Omega) - W$. For the numerator,
%
 we decompose 
\begin{align*}
&  \sum_{j \neq t \in {S}_{k, i}}  A_{ij} (1- A_{jt})  A_{ti} \notag\\
&= \sum_{j \neq t \in {S}_{k, i}}  \Omega_{ij} (1- \Omega_{jt})  \Omega_{ti}  + \sum_{j \neq t \in {S}_{k, i}} W_{ij}(1- \Omega_{jt})  \Omega_{ti} + \Omega_{ij} (- W_{jt})  \Omega_{ti} + \Omega_{ij} (1- \Omega_{jt})  W_{ti} \notag\\
& \quad +  \sum_{j \neq t \in {S}_{k, i}} W_{ij}(- W_{jt})  \Omega_{ti} + \Omega_{ij} (- W_{jt})  W_{ti} + W_{ij} (1- \Omega_{jt})  W_{ti} \notag\\
 & \quad + \sum_{j \neq t \in {S}_{k, i}} W_{ij}(- W_{jt})  W_{ti} \notag\\
& =: \sum_{j \neq t \in {S}_{k, i}}  \Omega_{ij} (1- \Omega_{jt})  \Omega_{ti} + \sum_{a=1}^3 T_{1a}+ \sum_{a=1}^3T_{2a} + T_3 
\end{align*}
We analyze each term on the RHS above one by one. Note that $W_{ij}$ is with mean $0$ and variance 
\[
{\rm var} (W_{ij})  = \Omega_{ij} (1- \Omega_{ij}) \leq \theta_i\theta_j
\]
and the trivial bound $|W_{ij}\sum_{t\neq j}(1- \Omega_{jt})  \Omega_{ti} |<n\bar \theta^2$ (similarly for each summand in $T_{12}$ and $T_{13}$).
By Bernstein inequality, 
\begin{align*}
& |T_{11}|  \leq C\Big(\sqrt{\sum_{j} \theta_i\theta_j \big(\sum_{t\neq j} (1- \Omega_{jt}) \Omega_{ti} \big)^2 \log(n)} + n\bar \theta^2\log(n)\Big)  \leq  Cn\bar \theta^2\big(  \sqrt {n\bar \theta^2\log (n)}  + \log(n) \big) \notag\\ 
&|T_{12}|   \leq 2C\Big(\sqrt{\sum_{j< t} \theta_j\theta_t(\Omega_{ij})^2( \Omega_{ti})^2 \log(n)} + \log(n)\Big) \leq  C\big( n\bar \theta^5 \sqrt {\log (n)}  + \log(n) \big) \notag\\
& |T_{13}| \leq C\Big(\sqrt{\sum_{t} \theta_t\theta_i \big(\sum_{j\neq t} \Omega_{ij}(1- \Omega_{jt})\big) ^2\log(n)} + n\bar \theta^2\log(n)\Big)  \leq  Cn\bar \theta^2\big(  \sqrt {n\bar \theta^2\log (n)}  + \log(n) \big) 
\end{align*}
simultaneously for all $1\leq i \leq n$, with probability $1- o(n^{-3})$. Consequently, 
\begin{align*}
\Big|  \sum_{a=1}^3 T_{1a}\Big|  \leq  Cn\bar \theta^2  \sqrt {n\bar \theta^2\log (n)}  \,. 
\end{align*}
following from the condition that $n\bar \theta^2\geq C \log(n)$, which is implied by condition (\ref{asm:P}) in the manuscript.

Regarding $T_{2a}$ for $a=1,2,3$ and $T_3$, their large deviation bounds can be tackled by decoupling inequality for U-statistics in \cite{de1995decoupling}. Specifically, implied by this technique, the large deviation bound of $T_{21}$ is dominated by that of
\[
 \tilde T_{21}: = \sum_{j \neq t \in {S}_{k, i}} W_{ij}(- W^{(1)}_{jt})  \Omega_{ti} 
\]
where $W^{(1)}$ is an i.i.d. copy of $W$. Thanks to this independence, we first condition on $W^{(1)}$ and use Bernstein inequality to get 
\begin{align*}
| \tilde T_{21} | \leq C\Big[\sqrt{ \sum_{j \in {S}_{k, i}} \theta_i\theta_j\big(\sum_{t \neq j\in {S}_{k, i}} \Omega_{ti}W^{(1)}_{jt}\big)^2 \log(n)} + \max_{j} \Big| \sum_{t \neq j\in {S}_{k, i}} \Omega_{ti}W^{(1)}_{jt}\Big| \log(n)\Big]
\end{align*}
Next, by Bernstein inequality again, we obtain 
\[
\Big|\sum_{t \neq j\in {S}_{k, i}} \Omega_{ti}W^{(1)}_{jt} \Big| \leq C\Big(\sqrt{n\bar\theta^6 \log(n)} + \bar \theta^2 \log(n)\Big)
\]
Combining the above inequalities, we arrive at 
\[
| \tilde T_{21} | \leq C\big( n\bar \theta^4 {\log (n)}  +\sqrt{n\bar\theta^2} \bar \theta^2 (\log(n))^{3/2} \big) \,.
\]
simultaneously for all $1\leq i \leq n$, with probability $1- o(n^{-3})$. Therefore, 
\[
| T_{21} | \leq C\big( n\bar \theta^4 {\log (n)}  +\sqrt{n\bar\theta^2} \bar \theta^2  (\log(n))^{3/2} \big) \,.
\]
In the same manner, we can show that 
\begin{align*}
| T_{22} | \leq C\big( n\bar \theta^4 {\log (n)}  +\sqrt{n\bar\theta^2} \bar \theta^2 (\log(n))^{3/2}  \big), \qquad | T_{23} | \leq C n\bar \theta^2 {\log (n)} 
\end{align*}
As a result,
\begin{align*}
\Big|  \sum_{a=1}^3 T_{2a}\Big|  \leq   Cn\bar \theta^2 {\log (n)}   \,. 
\end{align*}
under the condition $n\bar \theta^2 \geq C \log(n) $ and $\bar \theta \leq C$.

Lastly, we prove the large deviation bound for $T_3$. Using decoupling inequality for U-statistics, it suffices to prove a large deviation bound for $\tilde T_3$ with 
\[
\tilde T_3: =  \sum_{j \neq t \in {S}_{k, i}} W_{ij}(- W^{(1)}_{jt})  W^{(2)}_{ti}
\]
Here $W^{(1)}$ and $W^{(2)}$ are two i.i.d. copies of $W$. Condition on $W^{(1)}, W^{(2)}$, by Bernstein inequality, 
\[
|\tilde T_3| \leq C\Big(\sqrt{\sum_{j\in {S}_{k, i}} \theta_i \theta_j \big( \sum_{t\neq j\in {S}_{k, i}} W^{(1)}_{jt}  W^{(2)}_{ti} \big)^2 \log(n)  } + \max_j \Big|  \sum_{t\neq j\in {S}_{k, i}} W^{(1)}_{jt}  W^{(2)}_{ti}\Big|\log(n) \Big)
\]
In addition, for $\sum_{t\neq j\in {S}_{k, i}} W^{(1)}_{jt}  W^{(2)}_{ti} $, each summand is independent of each other. By Bernstein inequality,  we can similarly get 
\[
\Big| \sum_{t\neq j\in {S}_{k, i}} W^{(1)}_{jt}  W^{(2)}_{ti} \Big|  \leq C\Big(  \sqrt{n\bar \theta^4 \log(n)}+\log(n) \Big)
\]
Consequently, 
\[
|\tilde T_3| \leq C\Big( n\bar \theta^3 \log(n)+ \sqrt{n\bar \theta^2}(\log(n))^{3/2} \Big)
\]
This, by  $n\bar \theta^2 \geq C \log(n)$ and $\bar \theta \leq C$,  further  implies 
\[
|T_3 |  \leq C n\bar \theta^2 {\log (n)} 
\]
simultaneously for all $1\leq i \leq n$, with probability $1- o(n^{-3})$. 
Based on the large deviation bounds for $\sum_{a=1}^3 T_{1a}, \sum_{a=1}^3 T_{2a}$ and $T_3$, we therefore conclude that 
\begin{align}\label{eq:numer_case1}
 \sum_{j \neq t \in {S}_{k, i}}  A_{ij} (1- A_{jt})  A_{ti}  = \sum_{j \neq t \in {S}_{k, i}}  \Omega_{ij} (1- \Omega_{jt})  \Omega_{ti} + O_{p} \big( (n\bar \theta^2)^{3/2}\sqrt{\log (n)}\, \big)
 \end{align}
 simultaneously for all $i$, where the probability is $1- o(n^{-3})$.
 
 Next, we analyze the denominator  $ \sum_{j \neq t \in {S}_{k,i}}  (1- A_{ij})  A_{jt} (1- A_{ti})$ . Analogously, 
 we decompose 
\begin{align*}
&  \sum_{j \neq t \in {S}_{k,i}}  (1- A_{ij})  A_{jt} (1- A_{ti})  \notag\\
 & =  \sum_{j \neq t \in {S}_{k,i}}  (1- \Omega _{ij})  \Omega_{jt} (1- \Omega_{ti})\notag\\
  &\quad  + \sum_{j \neq t \in {S}_{k,i}}  (1- \Omega_{ij})  \Omega_{jt} (-W_{ti})  + (1- \Omega _{ij}) W_{jt} (1- \Omega_{ti}) + (- W_{ij})  \Omega_{jt} (1- \Omega_{ti}) \notag\\
  & \quad +  \sum_{j \neq t \in {S}_{k,i}}  (1- \Omega_{ij})  W_{jt} (-W_{ti})  + (-W _{ij}) W_{jt} (1- \Omega_{ti}) + (- W_{ij})  \Omega_{jt} (- W_{ti}) \notag\\
  & \quad +   \sum_{j \neq t \in {S}_{k,i}}  (-W_{ij})  W_{jt} (-W_{ti})\notag\\
  &=: \sum_{j \neq t \in {S}_{k,i}}  (1- \Omega _{ij})  \Omega_{jt} (1- \Omega_{ti}) + \sum_{a=1}^3 \mathcal T_{1a}+ \sum_{a=1}^3\mathcal T_{2a} - T_3 
\end{align*}
Similarly to $ \sum_{a=1}^3 T_{1a}$ and $ \sum_{a=1}^3 T_{2a} $, we can derive 
\begin{align*}
& |\mathcal T_{11} | \leq Cn\bar \theta^2 \sqrt{n\bar \theta^2 \log(n) }, \quad |\mathcal T_{12} | \leq Cn\bar \theta \sqrt{\log(n) }, \quad |\mathcal T_{13} | \leq Cn\bar \theta^2 \sqrt{n\bar \theta^2 \log(n) }\notag\\
&  |\mathcal T_{21} | \leq Cn\bar \theta^2 \log(n) , \quad |\mathcal T_{22} | \leq Cn\bar \theta^2 \log(n) , \quad |\mathcal T_{23} | \leq Cn\bar \theta^4  \log(n) 
\end{align*}
by Bernstein inequality and decoupling inequality. Since the details are rather similar, we omit the details. The above estimates, together with $|T_3 |  \leq C n\bar \theta^2 {\log (n)} $, give rise to 
\begin{align}\label{eq:denom_case1}
 \sum_{j \neq t \in {S}_{k,i}}  (1- A_{ij})  A_{jt} (1- A_{ti})  = \sum_{j \neq t \in {S}_{k,i}}  (1- \Omega _{ij})  \Omega_{jt} (1- \Omega_{ti}) + O_{p} \big([(n\bar \theta^2)^{3 /2} + n\bar \theta]\sqrt{\log (n)}\, \big)
\end{align}
simultaneously for all $1\leq i \leq n$, with probability $1- o(n^{-3})$.

Combining (\ref{error:theta_fromPi1}), (\ref{error:theta_fromPi2}), (\ref{eq:numer_case1}) and (\ref{eq:denom_case1}) into (\ref{eq:express_htheta}), we can further derive that 
\begin{align}\label{eq:2024051601}
\hat{\theta}_i & = \sqrt{ \frac{\sum_{j \neq t \in \hat{S}_{k, i}}  A_{ij} (1- A_{jt})  A_{ti}}{\sum_{j \neq t \in \hat{S}_{k,i}}  (1- A_{ij})  A_{jt} (1- A_{ti})}} \notag\\
& = \sqrt{ \frac{\sum_{j \neq t \in {S}_{k, i}}  \Omega_{ij} (1- \Omega_{jt})  \Omega_{ti} + O_{p} \big( (n\bar \theta^2)^{3/2}\sqrt{\log (n)} \, + n^{2}r_n \bar \theta^2 \big) }{\sum_{j \neq t \in {S}_{k,i}}  (1- \Omega _{ij})  \Omega_{jt} (1- \Omega_{ti}) + O_{p} \big( [(n\bar \theta^2)^{3 /2} + n\bar \theta] \sqrt{\log (n)}\, + n^{2}r_n\bar \theta^2 \big)}} 
\notag\\
& =  \sqrt{ \frac{\sum_{j \neq t \in {S}_{k, i}}  \Omega_{ij} (1- \Omega_{jt})  \Omega_{ti} }{\sum_{j \neq t \in {S}_{k,i}}  (1- \Omega _{ij})  \Omega_{jt} (1- \Omega_{ti}) } + O_p \big(\bar\theta \sqrt{\log(n)/n} + r_n \big)  } \,  
\end{align}
simultaneously for all $1\leq i \leq n$, where the high probability is at least $1- o(n^{-3})$. 
Here we used the crude estimate 
\[
\sum_{j \neq t \in {S}_{k,i}}  (1- \Omega _{ij})  \Omega_{jt} (1- \Omega_{ti}) \asymp (n\bar \theta)^2
\]
under the assumption that the number of nodes in each community is balanced and the diagonal entries of $P$ are one so that $\Omega_{jt} \propto \theta_j\theta_t$ if $j,t\in \mathcal C_k$. 

To proceed, we separate the analysis into two cases: (1) $\hat \pi = \pi_0 = e_k$; (2) $\hat \pi_i = e_{k} \neq e_{k_0} = \pi_i$. This is because the leading term in $\hat{\theta}_i$ may vary with the two different cases. 

For case (1), $i \in \mathcal C_k$, it follows that 
\begin{align*}
 \frac{\sum_{j \neq t \in {S}_{k, i}}  \Omega_{ij} (1- \Omega_{jt})  \Omega_{ti} }{\sum_{j \neq t \in {S}_{k,i}}  (1- \Omega _{ij})  \Omega_{jt} (1- \Omega_{ti}) } =\frac{\sum_{j \neq t \in {S}_{k, i}}  N_{ij} N_{jt}  N_{ti} \theta_i\theta_j \theta_t\theta_i }{\sum_{j \neq t \in {S}_{k,i}}  N_{ij} N_{jt}  N_{ti} \theta_j \theta_t} = \theta_i^2
\end{align*}
which is also claimed by Lemma~\ref{lemma:theta} . In light of this, further with the condition in Theorem~\ref{thm:main} that $r_n \ll\bar\theta^2$ (note that $r_n \asymp \delta_n)$,  we conclude that 
\begin{align}\label{error_random1} 
\hat{\theta}_i = \theta_i + O_p \big( \sqrt{\log(n)/n} +r_n/\bar\theta \big) \,. 
\end{align}
simultaneously for all $1\leq i \leq n$, where the high probability is at least $1- o(n^{-3})$. 

For case (2), $i \notin \mathcal C_k$. Therefore, $\Omega_{ij} = N_{ij} \theta_i\theta_j \cdot P_{k_0k}$ where $N_{ij} = (1+ \theta_i\theta_j \cdot P_{k_0k})^{-1} $, for all $j\in \mathcal C_k$. As a result,  
\begin{align*}
 \frac{\sum_{j \neq t \in {S}_{k, i}}  \Omega_{ij} (1- \Omega_{jt})  \Omega_{ti} }{\sum_{j \neq t \in {S}_{k,i}}  (1- \Omega _{ij})  \Omega_{jt} (1- \Omega_{ti}) } =\frac{\sum_{j \neq t \in {S}_{k, i}}  N_{ij} N_{jt}  N_{ti} \theta_i\theta_j \theta_t\theta_i \cdot P_{k_0k}^2}{\sum_{j \neq t \in {S}_{k,i}}  N_{ij} N_{jt}  N_{ti} \theta_j \theta_t} = \theta_i^2 P_{k_0k}^2
\end{align*}
Note that $P_{k_0k} = \hat\pi_i ' P \pi_i$ under this case and $\|P\|_{\max} \leq C$. We therefore conclude that 
\begin{align}\label{error_random2} 
\hat{\theta}_i & = e'_{n,i}(\widehat \Pi P \Pi')e_{n,i}\theta_i + O_p \Big(\min \Big\{ \bar\theta (\log(n) /n\bar \theta^2)^{1/4}+ \sqrt{r_n},  \frac{\sqrt{\log(n)/n} + r_n/\bar\theta }{ e'_{n,i}(\hat \Pi P \Pi')e_{n,i}}  \Big\} \Big) \notag\\
& = e'_{n,i}(\widehat \Pi P \Pi')e_{n,i}\theta_i + O_p \big( \bar\theta (\log(n) /n\bar \theta^2)^{1/4}+ \sqrt{r_n} \big) \,. 
\end{align}
simultaneously for all $1\leq i \leq n$, where the high probability is at least $1- o(n^{-3})$. 
By (\ref{error_random1}) and (\ref{error_random2}), we  complete the proof.

\subsection{The error rate of $P$} \label{subsec:refit_P} 
In this subsection, we prove the error rate of $P$, which is presented in the following lemma.

\begin{lemma}\label{lem:error_P}
Under the assumptions in Theorem~\ref{thm:main}, it hold with probability $1 - o(n^{-3})$ that 
\[
\|\widehat{P} - P\|_{\max} \leq C \Big( \sqrt{\frac{\log(n) }{n\bar \theta^2}} + \frac{r_n}{\bar \theta^2}\Big)
\]
where $r_n $ denotes  the Hamming error by directly applying SCORE.
\end{lemma}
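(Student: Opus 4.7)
\noindent\textbf{Proof proposal for Lemma~\ref{lem:error_P}.}

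The plan is to mimic the template used for $\hat{\theta}_i$: in the idealized case $\widehat{\mathcal{C}}_k=\mathcal{C}_k$ and $\hat{\theta}_i=\theta_i$, the ratio defining $\widehat{P}_{k\ell}$ equals $P_{k\ell}$ exactly, because
\[
\sum_{i\in \mathcal C_k, j\in \mathcal C_\ell}\Omega_{ij} = P_{k\ell}\cdot \sum_{i\in \mathcal C_k, j\in \mathcal C_\ell}\theta_i\theta_j(1-\Omega_{ij})
\]
by the identity $\Omega_{ij}=N_{ij}\theta_i\theta_j P_{k\ell}$ and $1-\Omega_{ij}=N_{ij}$ on $\mathcal C_k\times \mathcal C_\ell$. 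So I will write $\widehat P_{k\ell} = (N^\star + E_N)/(D^\star + E_D)$ with $N^\star = \sum_{\mathcal C_k,\mathcal C_\ell}\Omega_{ij}$, $D^\star = \sum_{\mathcal C_k,\mathcal C_\ell}\theta_i\theta_j(1-\Omega_{ij})$, then use $N^\star/D^\star = P_{k\ell}$ and
\[
|\widehat P_{k\ell}-P_{k\ell}| \;\le\; \frac{|E_N|}{D^\star}+P_{k\ell}\frac{|E_D|}{D^\star}.
\]
Note $D^\star \asymp n^2\bar\theta^2$ by balance and $N_{ij}\asymp 1$, and by assumption $\|P\|_{\max}\le C$, so the task reduces to bounding $|E_N|/(n^2\bar\theta^2)$ and $|E_D|/(n^2\bar\theta^2)$ uniformly in $k,\ell$.

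For the numerator, the error decomposes as $E_N = E_N^{\mathrm{cluster}}+E_N^{\mathrm W}$, where $E_N^{\mathrm{cluster}}$ collects pairs $(i,j)$ lying in the symmetric difference between $\widehat{\mathcal C}_k\times\widehat{\mathcal C}_\ell$ and $\mathcal C_k\times\mathcal C_\ell$ and $E_N^{\mathrm W} = \sum_{\mathcal C_k,\mathcal C_\ell}W_{ij}$. The first is trivially bounded by $2nr_n\cdot n \cdot \|A\|_{\max} \le 2n^2 r_n$, but only order $n^2 r_n \bar\theta^2$ pairs actually have non-negligible expectation; using $|A_{ij}|\le 1$ together with a Bernstein bound on the centered part gives $|E_N^{\mathrm{cluster}}| \le C\,n^2 r_n \bar\theta^2 + o(n^2\bar\theta^2)$ w.p.~$1-o(n^{-3})$. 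The $W$ sum has variance $\lesssim n^2\bar\theta^2$ and is $[-1,1]$-bounded, so Bernstein gives $|E_N^{\mathrm W}|\le Cn\bar\theta\sqrt{\log n}$. Dividing by $D^\star \asymp n^2\bar\theta^2$ yields a contribution of $O(r_n + \sqrt{\log n}/(n\bar\theta))$, both of which are absorbed into the target bound.

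The denominator is the delicate part. I write $E_D = E_D^{\mathrm W} + E_D^{\mathrm{cluster}}+ E_D^{\hat\theta}$. The $W$-piece $\sum\theta_i\theta_j W_{ij}$ is $O_p(\bar\theta^2 \cdot n\sqrt{\log n})$ by Bernstein, so it contributes $\sqrt{\log n}/(n)$ after normalization (negligible). The clustering piece again involves at most $O(n^2 r_n)$ pairs of total weight $\le \bar\theta^2$, contributing $r_n$ after normalization. The dominant term is $E_D^{\hat\theta}$: invoking Lemma~\ref{lem:error_theta}, for correctly classified nodes $|\hat\theta_i-\theta_i|\le C(\sqrt{\log n/n}+r_n/\bar\theta)$, while for at most $nr_n$ misclassified nodes $\hat\theta_i$ and $\theta_i$ are both $O(\bar\theta)$. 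Expanding $\hat\theta_i\hat\theta_j-\theta_i\theta_j$ and summing,
\[
|E_D^{\hat\theta}|\;\le\; C\bar\theta\,n^2\bigl(\sqrt{\log n/n}+r_n/\bar\theta\bigr)+C n^2 r_n \bar\theta^2 \;\le\; C\bigl(n^{3/2}\bar\theta\sqrt{\log n}+n^2 r_n\bigr),
\]
so $|E_D^{\hat\theta}|/D^\star \le C(\sqrt{\log n/(n\bar\theta^2)}+r_n/\bar\theta^2)$, matching the target.

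Combining the three contributions and taking a union bound over the $K^2=O(1)$ entries of $P$ yields $\|\widehat P-P\|_{\max}\le C(\sqrt{\log n/(n\bar\theta^2)}+r_n/\bar\theta^2)$ with probability $1-o(n^{-3})$. The main obstacle is the denominator bookkeeping: one must keep the $\hat\theta$-error contribution uniform over $(k,\ell)$ and simultaneously check that $|E_D|/D^\star=o(1)$ so that the Taylor step $(N^\star+E_N)/(D^\star+E_D)\approx P_{k\ell}(1+E_N/N^\star-E_D/D^\star)$ is valid; this uses the condition $r_n\ll \bar\theta^2$ (implied by the assumption $\delta_n\ll\bar\theta^2$ in Theorem~\ref{thm:main}) together with $n\bar\theta^2\gg \log n$, both of which are already in force.
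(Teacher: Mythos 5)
Your proposal is correct and follows essentially the same route as the paper's proof: both exploit the exact cancellation identity $N^\star/D^\star=P_{k\ell}$ on $\mathcal C_k\times\mathcal C_\ell$, split the numerator/denominator errors into a clustering piece (bounded by $Cn^2 r_n\bar\theta^2$ via the $\ell_1$-count of misclassified nodes), a $W$-noise piece (Bernstein), and a $\hat\theta$-refitting piece (Lemma~\ref{lem:error_theta}), and then conclude through the ratio expansion using $D^\star\asymp n^2\bar\theta^2$, $r_n\ll\bar\theta^2$ and $n\bar\theta^2\gg\log(n)$. The only differences are bookkeeping: the paper channels the $\hat\theta$-error through $\Delta=\widehat\Theta-\mathrm{diag}(\widehat\Pi P\Pi')\Theta$ (so that misclassified nodes, for which $\hat\theta_i$ targets $P_{kk_0}\theta_i$, are handled exactly) and controls the denominator's $W$-quadratic form by $\|W\|$, whereas you expand $\hat\theta_i\hat\theta_j-\theta_i\theta_j$ directly, treat misclassified nodes crudely as $O(\bar\theta)$, and should make the clustering-piece bound uniform over the random index sets (e.g., via a high-probability bound on $\max_i e_{n,i}'A{\bf 1}_n$, as the paper does) rather than applying Bernstein to a data-dependent sum — a routine fix that yields the same bound.
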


Recalling the refitting formula of $P$, 
\begin{equation} \label{supp:P2} 
\widehat{P}_{k\ell} = \frac{ \sum_{i \in \widehat{{\cal C}}_k}  \sum_{j  \in {\widehat {\cal C}}_{\ell}}  A_{ij}} {\sum_{i \in \widehat{ {\cal C}}_k}  \sum_{j  \in {\widehat {\cal C}}_{\ell}}\hat{\theta}_i \hat{\theta}_j (1 - A_{ij})}
\end{equation} 
for $k\neq \ell$. We can rewrite it as 
\[
\widehat{P}_{k\ell} =  \frac{\hat 1_k'  A \hat 1_{\ell}}{ \hat 1_k' \widehat \Theta  ({\bf 1}_n{\bf 1}_n' - A) \widehat \Theta \hat 1_{\ell} } 
\]
For the numerator, we derive 
\begin{align}\label{est:P_num}
\hat 1_k'  A \hat 1_{\ell} &= 1_k'  \Omega  1_{\ell} + 1_k'  W  1_{\ell}  + (\hat 1_k - 1_k)'  A 1_{\ell} + 1_k'  A  (\hat 1_{\ell} - 1_{\ell}) +  (\hat 1_k - 1_k)' A  (\hat 1_{\ell} - 1_{\ell})\notag\\
& = 1_k'  \Omega  1_{\ell} + O_p(n\bar \theta \sqrt{\log(n)} + n^{2}r_n \bar \theta^2\big) 
\end{align}
where the high probability is at least $1- o(n^{-3})$. Here to get the RHS, we used the following estimates which can be obtained by employing Bernstein inequality, 
\begin{align*}
& |1_k'  W  1_{\ell}  |\leq C\big( \sqrt{\sum_{i\in \mathcal C_k, j\in \mathcal C_{\ell}} \theta_i\theta_jP_{k\ell} \log(n) } + \log(n) \big) \leq C\big( n\bar \theta \sqrt{P_{k\ell} \log(n) } + \log(n) \big), \notag\\
& |e_{n,i}'A 1_{\ell}- e_{n,i}'\Omega  1_{\ell}|\leq C\big( \sqrt{\sum_{ j\in \mathcal C_{\ell}} \theta_i\theta_j \log(n) } + \log(n) \big) \leq C\big( \sqrt{n\bar \theta^2 \log(n)} + \log(n) \big), \notag\\
&|e_{n,i}'\Omega  1_{\ell}|\leq Cn\bar\theta^2\,. 
\end{align*}
simultaneously for all $1\leq i\leq n$ and $1\leq k, \ell\leq K$, with probability $1- o(n^{-3})$. Note that the second and third inequalities also imply that $\max_i (e_{n,i}' A{\bf1}_n) \leq C \bar\theta^2$ and furthermore,  
\begin{align*}
& \big| (\hat 1_k - 1_k)'  A 1_{\ell} \big|  \leq \|\hat 1_k - 1_k\|_1 \max_i (e_{n,i}' A{\bf1}_n) \leq C n^{2}r_n \bar \theta^2\notag\\
&  |  (\hat 1_k - 1_k)' A  (\hat 1_{\ell} - 1_{\ell})| \leq \|\hat 1_k - 1_k\|_1 \max_i (e_{n,i}' A{\bf1}_n) \leq C n^{2}r_n \bar \theta^2
\end{align*}

Next for denominator, we first have 
\begin{align*}
& \quad \hat 1_k' \widehat \Theta  ({\bf 1}_n{\bf 1}_n' - A) \widehat \Theta \hat 1_{\ell}  \notag\\
&=  1_k' \widehat \Theta  ({\bf 1}_n{\bf 1}_n' - A) \widehat \Theta  1_{\ell}  + (\hat 1_k -1_k)' \widehat \Theta  ({\bf 1}_n{\bf 1}_n' - A) \widehat \Theta  \hat 1_{\ell} + 1_k' \widehat \Theta  ({\bf 1}_n{\bf 1}_n' - A) \widehat \Theta  ( \hat 1_{\ell}  - 1_{\ell}) \notag\\
& = 1_k' \widehat \Theta  ({\bf 1}_n{\bf 1}_n' - A) \widehat \Theta  1_{\ell}  + O_p(n^{2}r_n \bar \theta^2) 
\end{align*}
Write $\Delta = \widehat \Theta -  {\rm diag}(\widehat \Pi P \Pi')\Theta$. We further derive 
\begin{align*}
1_k' \widehat \Theta  ({\bf 1}_n{\bf 1}_n' - A) \widehat \Theta  1_{\ell}  = & 1_k' {\rm diag}(\widehat \Pi P \Pi)\Theta  ({\bf 1}_n{\bf 1}_n' - A)\Theta  {\rm diag}(\widehat \Pi P \Pi) 1_{\ell}   \notag\\
 & +1_k' \Delta ({\bf 1}_n{\bf 1}_n' - A) \widehat \Theta  1_{\ell}   + 1_k' {\rm diag}(\widehat \Pi P \Pi)\Theta  ({\bf 1}_n{\bf 1}_n' - A) \Delta 1_{\ell}   \notag\\
 & = : J_1 + J_2 + J_3
\end{align*}
We analyze each term on the RHS as follows. 
\begin{align*}
J_1 &=  1_k' {\rm diag}(\widehat \Pi P \Pi)\Theta  ({\bf 1}_n{\bf 1}_n' - A)\Theta  {\rm diag}(\widehat \Pi P \Pi) 1_{\ell} \notag\\
&=  1_k' {\rm diag}(\widehat \Pi P \Pi)\Theta  ({\bf 1}_n{\bf 1}_n' - \Omega )\Theta  {\rm diag}(\widehat \Pi P \Pi) 1_{\ell} -  1_k' {\rm diag}(\widehat \Pi P \Pi)\Theta  W\Theta  {\rm diag}(\widehat \Pi P \Pi) 1_{\ell}
\end{align*}
where with probability $1- o(n^{-3})$, since ${\bf 1}_n{\bf 1}_n' - \Omega  ={\bf 1}_n{\bf 1}_n'  - \widetilde \Omega\circ N  = N$, 
\begin{align*}
& \quad 1_k' {\rm diag}(\widehat \Pi P \Pi)\Theta  N\Theta  {\rm diag}(\widehat \Pi P \Pi) 1_{\ell}  \notag\\
& =  1_k' \Theta  N\Theta 1_{\ell}  +  1_k' ({\rm diag}(\widehat \Pi P \Pi)- I_n)\Theta  N\Theta  {\rm diag}(\widehat \Pi P \Pi) 1_{\ell} + 1_k' \Theta  N \Theta({\rm diag}(\widehat \Pi P \Pi) - I_n ) 1_{\ell}  \notag\\
& = \sum_{i\in \mathcal C_k, j\in \mathcal C_{\ell} } \theta_i \theta_j N_{ij} + O_p\big(n^{2}r_n\bar \theta^2 \big)  \asymp n^2 \bar \theta^2
\end{align*}
and 
\begin{align*}
&\big| 1_k' {\rm diag}(\widehat \Pi P \Pi)\Theta  W\Theta  {\rm diag}(\widehat \Pi P \Pi) 1_{\ell} \big| \leq \|W\|  \| 1_k' {\rm diag}(\widehat \Pi P \Pi)\Theta\|^2 \leq n\bar \theta^2 \sqrt{n\bar \theta^2}\, . 
\end{align*}
The last step of the above inequality is due to the non-asymptotic theory of random matrix which gives  $\|W\| \leq \sqrt{n\bar \theta^2}$ with high probability. As a result, 
\[
J_1 = \sum_{i\in \mathcal C_k, j\in \mathcal C_{\ell} } \theta_i \theta_j N_{ij} + O_p\big(n^{2}r_n\bar \theta^2 \big) + O_p\big( (n\bar \theta^2)^{3/2}\big) \, . 
\]

To proceed, we note that 
\begin{align*}
\|1_k' \Delta\|_1 &= \sum_{i\in \mathcal C_k : \hat \pi_i = \pi_i} | \hat \theta_i - \theta_i  | +  \sum_{i\in \mathcal C_k: \hat \pi_i \neq  \pi_i} | \hat \theta_i - \hat \pi_i' P \pi_i  \theta_i |  \notag\\
&\leq Cn \Big(\sqrt{\frac{\log(n) }{n}} + r_n/\bar \theta \Big) + Cnr_n  \Big( \bar \theta \Big(\frac{\log(n) }{n\bar \theta^2}\Big)^{1/4} +\sqrt{r_n}  \Big) \notag\\
& \leq C\sqrt{n\log(n)} + Cnr_n/\bar \theta
\end{align*}

For $J_2$,
\begin{align*}
|J_2| = | 1_k' \Delta ({\bf 1}_n{\bf 1}_n' - A) \widehat \Theta  1_{\ell} |\leq C\|1_k' \Delta\|_1 n\bar \theta \leq Cn \sqrt{n\bar\theta ^2\log(n)} + n^{2}r_n \, .
\end{align*}
For $J_3$, 
\begin{align*}
|J_3| = | 1_k' {\rm diag}(\widehat \Pi P \Pi)\Theta  ({\bf 1}_n{\bf 1}_n' - A) \Delta 1_{\ell}  |\leq C\|1_{\ell}' \Delta\|_1 n\bar \theta \leq Cn \sqrt{n\bar\theta ^2\log(n)} + n^{2}r_n\, .
\end{align*}
Consequently, 
\begin{align*}
\hat 1_k' \widehat \Theta  ({\bf 1}_n{\bf 1}_n' - A) \widehat \Theta \hat 1_{\ell}  =\sum_{i\in \mathcal C_k, j\in \mathcal C_{\ell} } \theta_i \theta_j N_{ij} + O_p\Big(n\sqrt{n\bar\theta ^2\log(n)} + n^{2}r_n 
\Big)
\end{align*}
This, with (\ref{est:P_num}), gives rise to 
\begin{align*}
\widehat{P}_{k\ell}  &= \frac{\sum_{i\in \mathcal C_k, j\in \mathcal C_{\ell} } \theta_i \theta_j N_{ij}P_{k\ell} + O_p(n\bar \theta \sqrt{\log(n)} + n^{2}r_n \bar \theta^2 \big) } {\sum_{i\in \mathcal C_k, j\in \mathcal C_{\ell} } \theta_i \theta_j N_{ij} + O_p\Big(n \sqrt{n\bar\theta ^2\log(n)} + n^{2}r_n \Big)}\notag\\
 & = P_{k\ell} + O_p\Big(  \sqrt{\log(n)} /(n\bar \theta)  + r_n + \sqrt{\log(n)/n\bar \theta^2} + r_n/\bar \theta^2\Big) \notag\\
 & = P_{k\ell} + O_p\Big(    \sqrt{\log(n)/n\bar\theta^2} + r_n/\bar \theta^2\Big) 
\end{align*}
simultaneously for all $1\leq k \neq \ell\leq K$, where the probability is at least $1- o(n^{-3})$.  This completes the proof. 

\subsection{The error rate of $N$} \label{subsec:refit_N}
In this subsection, we prove bounds for $\|N\oslash \widehat{N} - {\bf 1}_n {\bf 1}_n'\|_{F}$ and $\|(N\oslash \widehat{N} - {\bf 1}_n {\bf 1}_n') \circ \widetilde \Omega\|$ under the assumptions in Theorem~\ref{thm:main}. The results are provided as below.

\begin{lemma} \label{lem:error_N}
Suppose the assumptions in Theorem~\ref{thm:main} hold. Then, 
\begin{align*}
&\|(N \oslash \widehat N  - {\bf 1}_n{\bf 1}_n') \|_F  \leq C\Big( \sqrt{n\bar \theta^2 \log(n) } + n r_n + n\bar \theta^2 \sqrt{r_n}\Big) \ll \lambda_{1}(\widetilde \Omega)  \notag\\
&\|(N \oslash \widehat N  - {\bf 1}_n{\bf 1}_n') \circ \widetilde \Omega\|  \leq C\Big(\bar \theta^2 \sqrt{n\bar \theta^2 \log(n) } + n\bar \theta^2 r_n + n\bar \theta^4 \sqrt{r_n}\Big)\ll |\lambda_{K}(\widetilde \Omega)|
\end{align*}
with probability $1- o(n^{-3})$.
\end{lemma}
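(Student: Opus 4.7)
The argument rests on a clean algebraic reduction. Since $\widehat N_{ij}=(1+\widehat{\widetilde\Omega}_{ij})^{-1}$ and $N_{ij}=(1+\widetilde\Omega_{ij})^{-1}$, a direct computation yields
\[
\frac{N_{ij}}{\widehat N_{ij}}-1=(\widehat{\widetilde\Omega}_{ij}-\widetilde\Omega_{ij})\,N_{ij},\qquad \bigl[(N\oslash\widehat N-{\bf 1}_n{\bf 1}_n')\circ\widetilde\Omega\bigr]_{ij}=(\widehat{\widetilde\Omega}_{ij}-\widetilde\Omega_{ij})\,\Omega_{ij}.
\]
Using $N_{ij}\in(0,1]$, $\Omega_{ij}\le C\bar\theta^{2}$, and $\|\cdot\|\le\|\cdot\|_F$, both targeted quantities reduce to controlling a single Frobenius norm: I will show $\|N\oslash\widehat N-{\bf 1}_n{\bf 1}_n'\|_F\le\|\widehat{\widetilde\Omega}-\widetilde\Omega\|_F$ and $\|(N\oslash\widehat N-{\bf 1}_n{\bf 1}_n')\circ\widetilde\Omega\|\le C\bar\theta^{2}\,\|\widehat{\widetilde\Omega}-\widetilde\Omega\|_F$.

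The main task is therefore to bound $\|\widehat{\widetilde\Omega}-\widetilde\Omega\|_F^{2}=\sum_{i,j}\bigl|\hat\theta_i\hat\theta_j\widehat P_{\hat k_i\hat k_j}-\theta_i\theta_j P_{k_i k_j}\bigr|^{2}$, which I split according to whether both endpoints are correctly classified. For a pair with $\hat\pi_i=\pi_i$ and $\hat\pi_j=\pi_j$, the triangle inequality combined with Lemma~\ref{lem:error_theta} (giving $|\hat\theta_i-\theta_i|\le C(\sqrt{\log(n)/n}+r_n/\bar\theta)$), Lemma~\ref{lem:error_P} (giving $\|\widehat P-P\|_{\max}\le C\sqrt{\log(n)/(n\bar\theta^{2})}+Cr_n/\bar\theta^{2}$), together with $\hat\theta_i\le C\bar\theta$ and $\|\widehat P\|_{\max},\|P\|_{\max}\le C$, produces an entrywise error of order $\bar\theta\sqrt{\log(n)/n}+r_n$. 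Squaring and summing over the $O(n^{2})$ such pairs contributes at most $C(n\bar\theta^{2}\log(n)+n^{2}r_n^{2})$. For pairs where at least one endpoint is misclassified, I fall back on the crude entrywise bound $|\widehat{\widetilde\Omega}_{ij}-\widetilde\Omega_{ij}|\le C\bar\theta^{2}$; since the number of such pairs is at most $2n\cdot nr_n$, they contribute at most $Cn^{2}r_n\bar\theta^{4}$. Taking square roots gives
\[
\|\widehat{\widetilde\Omega}-\widetilde\Omega\|_F\le C\Bigl(\sqrt{n\bar\theta^{2}\log(n)}+nr_n+n\bar\theta^{2}\sqrt{r_n}\,\Bigr),
\]
and feeding this into the two reductions above produces exactly the two stated inequalities.

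The $\ll$ comparisons then follow from $\lambda_1(\widetilde\Omega)\asymp n\bar\theta^{2}$, $|\lambda_K(\widetilde\Omega)|\asymp n\bar\theta^{2}|\lambda_{\min}(P)|$, and the theorem's standing assumptions $r_n\asymp\delta_n\ll\min\{\bar\theta^{2},\bar\theta|\lambda_{\min}(P)|,|\lambda_{\min}(P)|^{2}/\bar\theta^{2}\}$, combined with $n\bar\theta^{2}\gg\log(n)$ from (\ref{asm:P}). The only genuine subtlety, which also explains the shape of the final estimate, is the misclassified regime: Lemma~\ref{lem:error_theta} in that case controls $|\hat\theta_i-P_{\hat k_i k_i}\theta_i|$ rather than $|\hat\theta_i-\theta_i|$, so $\hat\theta_i$ is consistent only up to an $O(1)$ multiplicative factor $P_{\hat k_i k_i}$. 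This precludes any sharper entrywise perturbation on misclassified pairs and forces the $n\bar\theta^{2}\sqrt{r_n}$ term; everything else is routine bookkeeping.
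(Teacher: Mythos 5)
Your proposal is correct and follows essentially the same route as the paper's own proof: the same identity $N\oslash\widehat N-{\bf 1}_n{\bf 1}_n'=(\widehat{\widetilde\Omega}-\widetilde\Omega)\circ N$, the same entrywise case split between correctly classified and misclassified pairs using Lemmas~\ref{lem:error_theta} and~\ref{lem:error_P} (with the crude $C\bar\theta^2$ bound and the $\le 2n\cdot nr_n$ count on misclassified pairs), and the same reduction of the operator norm to $C\bar\theta^2\|\widehat{\widetilde\Omega}-\widetilde\Omega\|_F$ before invoking $\lambda_1(\widetilde\Omega)\asymp n\bar\theta^2$, $|\lambda_K(\widetilde\Omega)|\asymp n\bar\theta^2|\lambda_{\min}(P)|$ and the standing conditions on $r_n\asymp\delta_n$.
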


We prove Lemma~\ref{lem:error_N} below.

By definition, 
\begin{align*}
N\oslash \widehat{N} - {\bf 1}_n {\bf 1}_n' = (\widehat \Theta \widehat \Pi \widehat P \widehat \Pi'\widehat \Theta  - \Theta \Pi P \Pi'\Theta )\circ N
\end{align*}
It follows that
\begin{align*}
& \quad e_{n,i}'\big( N\oslash \widehat{N}  - {\bf 1}_n {\bf 1}_n'\big)e_{n,j} \notag\\
& \leq e_{n,i}'\big(  \widehat \Theta \widehat \Pi \widehat P \widehat \Pi'\widehat \Theta  - \Theta \Pi P \Pi'\Theta \big)e_{n,j} \notag\\
& \leq  e_{n,i}' \big( \widehat \Theta \widehat \Pi (\widehat P- P)  \widehat \Pi'\widehat \Theta \big)e_{n,j} + e_{n,i}'\big( \widehat \Theta (\widehat \Pi - \Pi) P \widehat \Pi'\widehat \Theta \big)e_{n,j} + e_{n,i}'\big( \widehat \Theta \Pi P (\widehat \Pi - \Pi)'\widehat \Theta \big)e_{n,j} \notag\\
& \quad + e_{n,i}'\big( \widehat \Theta \Pi P  \Pi'\widehat \Theta- \Theta \Pi P \Pi'\Theta \big)e_{n,j} 
\end{align*} 
By the error rates for refitting $\theta$ and  $P$, i.e., 
\begin{align*}
\|\widehat P - P \|_{\max}\leq C \Big( \sqrt{\frac{\log(n) }{n\bar \theta^2}} + \frac{r_n}{\bar \theta^2}\Big), \qquad |\hat \theta_ i- \theta_i | \leq C \Big( \sqrt{\frac{\log(n) }{n} }+ \frac{r_n}{\bar \theta}\Big), \quad \text{if $\hat \pi_i  = \pi_i$}
\end{align*}
and 
\[
 |\hat \theta_ i-P_{kk_0} \theta_i | \leq C \Big( \bar\theta (\log(n) /n\bar \theta^2)^{1/4}+ \sqrt{r_n} \Big), \quad \text{if $\hat \pi_i  =e_k, \pi_i = e_{k_0}, k\neq k_0$}
\]
We can derive that with probability $1- o(n^{-3)}$, simultaneously for all $1\leq i, j\leq n$, 
\begin{align}\label{2024093001}
& \big| e_{n,i}' \big( \widehat \Theta \widehat \Pi (\widehat P- P)  \widehat \Pi'\widehat \Theta \big)e_{n,j} \big| = \left\{
\begin{array}{ll}
 0 & \text{if $\hat \pi_i= \hat \pi_j$}\\
 O_p\Big(  \bar \theta\sqrt{\log(n)/n}  +  r_n \Big)  & \text{if $\hat \pi_i\neq  \hat \pi_j$}
\end{array}
\right. \notag\\
&\big| e_{n,i}'\big( \widehat \Theta (\widehat \Pi - \Pi) P \widehat \Pi'\widehat \Theta \big)e_{n,j} \big| =  \left\{
\begin{array}{ll}
 0 & \text{if $\hat \pi_i = \pi_i $}\\
 O_p\big(  \bar \theta^2\big)  & \text{if $\hat \pi_i \neq  \pi_i$}
\end{array}
\right. \notag\\
&\big| e_{n,i}'\big( \widehat \Theta \Pi P (\widehat \Pi - \Pi)'\widehat \Theta \big)e_{n,j}\big| =  \left\{
\begin{array}{ll}
 0 & \text{if $\hat \pi_j = \pi_j $}\\
 O_p\big(  \bar \theta^2\big)  & \text{if $\hat \pi_j \neq  \pi_j$}
\end{array}
\right.
\end{align}
and 
\begin{align}\label{2024093002}
\big|  e_{n,i}'\big( \widehat \Theta \Pi P  \Pi'\widehat \Theta- \Theta \Pi P \Pi'\Theta \big)e_{n,j}  \big| =  \left\{
\begin{array}{ll}
 O_p\Big(\bar\theta \sqrt{\log(n)/n} + r_n \Big) & \text{if $\hat \pi_i = \pi_i, \hat \pi_j = \hat \pi_j $}\\
 O_p\big(  \bar \theta^2\big)  & \text{if $\hat \pi_i \neq  \pi_i$ or $\hat \pi_j \neq  \pi_j$}
\end{array}
\right.
\end{align}
Here we used the fact that $\widehat P _{kk} = P_{kk} = 1$. 

Combining the above estimates together, we obtain that 
\begin{align*}
(N\oslash \widehat{N}  - {\bf 1}_n {\bf 1}_n' )_{ij}=  \left\{
\begin{array}{ll}
 O_p\Big(\bar\theta \sqrt{\log(n)/n} + r_n \Big) & \text{if $\hat \pi_i = \pi_i, \hat \pi_j = \hat \pi_j $}\\
 O_p\big(  \bar \theta^2\big)  & \text{if $\hat \pi_i \neq  \pi_i$ or $\hat \pi_j \neq  \pi_j$}
\end{array}
\right.
\end{align*}
Therefore, in light of  $\|\widehat \Pi - \Pi\|_1 \leq nr_n$, it yields that
\begin{align*}
\| N\oslash \widehat{N}  - {\bf 1}_n {\bf 1}_n' \|_{F} &=
 \sqrt{\sum_{i,j: \hat \pi_i = \pi_i, \hat \pi_j =\pi_j}  ( (N\oslash \widehat{N} )_{ij}   - 1)^2 + \sum_{i,j: \hat \pi_i \neq  \pi_i \text{ or } \hat \pi_j \neq \pi_j}  ((N\oslash \widehat{N} )_{ij}  - 1)^2}
\notag\\
& \leq C\sqrt{n^2\big(\bar\theta \sqrt{\log(n)/n} + r_n\big)^2 + n^{2}r_n \bar \theta^4}  \notag\\
& \leq C\big(\bar\theta \sqrt{n\log(n)} + nr_n + n\bar \theta^2\sqrt{r_n} \,  \big)
\end{align*}
with probability $1- o(n^{-3})$. Due to the conditions that 
\[
r_n\ll \bar \theta^2 \to 0, \qquad n\bar\theta^2 \gg \log(n),
\]
we easily see that 
\[
\| N\oslash \widehat{N}  - {\bf 1}_n {\bf 1}_n' \|_{F}\leq C\big(\bar\theta \sqrt{n\log(n)} + nr_n + n\bar \theta^2\sqrt{r_n} \,  \big) \ll \lambda_1(\widetilde \Omega)
\]
since $\lambda_1(\widetilde \Omega)\asymp n\bar \theta^2$.

Next, we consider $\|(N \oslash \widehat N  - {\bf 1}_n{\bf 1}_n') \circ \widetilde \Omega\|$.
 Note that 
\[
N\oslash \widehat{N}  - {\bf 1}_n {\bf 1}_n' = \big( \widehat \Theta \widehat \Pi \widehat P \widehat \Pi'\widehat \Theta  - \Theta \Pi P \Pi'\Theta  \big) \circ N \,. 
\]
We consider $\| \big( \widehat \Theta \widehat \Pi \widehat P \widehat \Pi'\widehat \Theta  - \Theta \Pi P \Pi'\Theta  \big) \circ \widetilde \Omega \circ N \|$ instead. Since the rank of $\widehat \Theta \widehat \Pi \widehat P \widehat \Pi'\widehat \Theta  - \Theta \Pi P \Pi'\Theta $ is at most $2K$, we bound 
\begin{align*}
\| (N\oslash \widehat{N}  - {\bf 1}_n {\bf 1}_n')  \circ \widetilde \Omega \| & \leq \sqrt{ \sum_{i,j} (N\circ \widetilde \Omega)_{ij}^2 \big( \widehat \Theta \widehat \Pi \widehat P \widehat \Pi'\widehat \Theta  - \Theta \Pi P \Pi'\Theta  \big)_{ij}^2 }  \notag\\
&\leq \|N\circ \widetilde \Omega\|_{\max} \| \widehat \Theta \widehat \Pi \widehat P \widehat \Pi'\widehat \Theta  - \Theta \Pi P \Pi'\Theta  \|_F \notag\\
& \leq \sqrt{2K} \|N\circ \widetilde \Omega\|_{\max}\| \widehat \Theta \widehat \Pi \widehat P \widehat \Pi'\widehat \Theta  - \Theta \Pi P \Pi'\Theta  \|\notag\\
& \leq C \bar \theta^2 \| \widehat \Theta \widehat \Pi \widehat P \widehat \Pi'\widehat \Theta  - \Theta \Pi P \Pi'\Theta  \|
\end{align*}
To proceed, we study the upper bound of $\| \widehat \Theta \widehat \Pi \widehat P \widehat \Pi'\widehat \Theta  - \Theta \Pi P \Pi'\Theta  \|$. Note that 
\[
\| \widehat \Theta \widehat \Pi \widehat P \widehat \Pi'\widehat \Theta  - \Theta \Pi P \Pi'\Theta  \|\leq \| \widehat \Theta \widehat \Pi \widehat P \widehat \Pi'\widehat \Theta  - \Theta \Pi P \Pi'\Theta  \|_{F} \leq \sqrt {2K }  \| \widehat \Theta \widehat \Pi \widehat P \widehat \Pi'\widehat \Theta  - \Theta \Pi P \Pi'\Theta  \|
\]
It suffices to study the upper bound of $\| \widehat \Theta \widehat \Pi \widehat P \widehat \Pi'\widehat \Theta  - \Theta \Pi P \Pi'\Theta  \|_{F}$, which by previous arguments (\ref{2024093001}) and (\ref{2024093002}), is given by 
\[
\| \widehat \Theta \widehat \Pi \widehat P \widehat \Pi'\widehat \Theta  - \Theta \Pi P \Pi'\Theta  \|_{F} \leq C\big(\bar\theta \sqrt{n\log(n)} + nr_n + n\bar \theta^2\sqrt{r_n} \,  \big)
\]

We thus conclude that 
\[
\| (N\oslash \widehat{N}  - {\bf 1}_n {\bf 1}_n')  \circ \widetilde \Omega \| \leq C\big(\bar\theta^2 \sqrt{n\bar\theta^2\log(n)} + n \bar \theta^2 r_n + n\bar \theta^4\sqrt{r_n} \,  \big)
\]
Our assumptions in Theorem~\ref{thm:main} says   that 
\[
\sqrt{n\bar \theta^2} \lambda_{\min}(P )\geq C\log(n), \quad  r_n\ll \lambda_{\min}(P ), \quad r_n \ll \lambda_{\min}^2(P) /\bar \theta^4
\]
Using these conditions, together with $|\lambda_{K}(\widetilde \Omega) | \asymp n\bar \theta^2 \lambda_{\min} (P)$, we easily derive that 
\[
\| (N\oslash \widehat{N}  - {\bf 1}_n {\bf 1}_n')  \circ \widetilde \Omega \| \leq C\big(\bar\theta^2 \sqrt{n\bar\theta^2\log(n)} + n \bar \theta^2 r_n + n\bar \theta^4\sqrt{r_n} \,  \big)\ll |\lambda_{K}(\widetilde \Omega)|
\]
This finishes the proof of error rates for $\widehat N$.

\subsection{Proof of Theorem~\ref{thm:main}}\label{subsec:main_thm} 
We now prove Theorem~\ref{thm:main} using the results in Sections~\ref{subsec:refit_P} and \ref{subsec:refit_N}, and Lemma~\ref{lemma:RSCORE}. 

To begin with, we verify the additional conditions in Lemma~\ref{lemma:RSCORE} (i.e., (\ref{asm:rscore1a} ) and (\ref{asm:rscore1b} ))  under the assumptions in Theorem~\ref{thm:main}. Notice that $r_n \asymp \delta_n$ where $\delta_n = \max\{\|(N - {\bf 1}_n {\bf 1}_n') \circ 
\widetilde{\Omega}\|^2, \lambda_1(\widetilde{\Omega})\} / \lambda_K^2(\widetilde{\Omega})$.
Specifically, in Section~\ref{subsec:refit_P}, we have shown that 
\[
\|\widehat{P} - P\|_{\max} \leq C \Big( \sqrt{\frac{\log(n) }{n\bar \theta^2}} + \frac{r_n}{\bar \theta^2}\Big)
\]
From the assumptions in   Theorem~\ref{thm:main}, we have 
\[
\sqrt{n\bar \theta^2} \lambda_{\min}(P) \geq C\log(n), \qquad r_n \ll \min\{|\lambda_{\min}(P)|\bar \theta, \bar \theta^2\}
\]
It follows that $\|\widehat{P} - P\|_{\max} = o(1)$ and 
\[
 \sqrt{\frac{\log(n) }{n\bar \theta^2}}  |\lambda_{\min} (P) |^{-1}\bar \theta = o(1), \qquad  \frac{r_n}{\bar \theta^2} |\lambda_{\min} (P) |^{-1}\bar \theta = o(1)\,. 
\]
Next, thanks to $r_n\asymp \delta_n \ll\lambda_{\min}^2(P) /\bar \theta^2$,
\[
\|\widehat \Pi - \Pi\| (\sqrt{n}\, |\lambda_{\min}(P)|)^{-1} \bar\theta \leq C\sqrt{nr_n}(\sqrt{n}\, |\lambda_{\min}(P)|)^{-1} \bar\theta \ll 1\,. 
\]
Therefore, the conditions in (\ref{asm:rscore1a}) are satisfied.  It was mentioning that in Section~\ref{subsec:refit_N}, we have validated (\ref{asm:rscore1b}). Lastly, by Lemma~\ref{lem:error_theta}, and the conditions that $n\bar \theta^2\gg \log(n), r_n\ll \bar \theta^2$, we easily see that $\hat \theta_i < C \bar \theta$.

Therefore, we can apply the results in Lemma~\ref{lemma:RSCORE}, which gives that 

\begin{align*}
r_n(\widehat{\Pi}^{rscore})  \leq   \frac{C[\| (N\oslash \widehat N - {\bf 1}_n{\bf 1}_n')  \circ \widetilde \Omega\|^2  +\tau_n^2+   { \lambda_1(\widetilde \Omega)}]}{\big| \lambda_K(\widetilde \Omega) \big|^2}
\end{align*}
where 
$\tau_n =   \sqrt{n}\bar \theta^3 [\sqrt{n} \| \widehat P - P\|_{\max}  +   \| \widehat{\Pi}^{score}- \Pi\|]$.

Furthermore, we plug in the upper bounds of  $ \| \widehat P - P\|_{\max}$ and $\| (N\oslash \widehat N - {\bf 1}_n{\bf 1}_n')  \circ \widetilde \Omega\| $ in  Sections~\ref{subsec:refit_P} and \ref{subsec:refit_N} and note that $\|\widehat{\Pi}^{score} - \Pi\| \leq C \sqrt{nr_n }\leq C \sqrt{n\delta_n}$. Elementary computations lead to 
\[
[\| (N\oslash \widehat N - {\bf 1}_n{\bf 1}_n')  \circ \widetilde \Omega\|^2  +\tau_n^2\leq C\Big( n\bar \theta^4 \log(n) + n^2\bar \theta^2 \delta_n^2 + n^2 \bar \theta^6 \delta_n\Big) 
\]
Thereby, we conclude the proof of Theorem~\ref{thm:main}.

\subsection{Proof of Corollary~\ref{cor:main}}
Using the assumptions that $\lambda_{\min} (P) \geq C$ for a constant $C > 0$, we see that the condition of $\delta_n$ in  Theorem~\ref{thm:main} is reduced to 
\[
\delta_n\ll \bar \theta^2
\]
Also, we can derive 
\[
\|(N - {\bf 1}_n {\bf 1}_n') \circ \widetilde{\Omega}\|^2 / \lambda_K^2(\widetilde{\Omega})\leq \|(N - {\bf 1}_n {\bf 1}_n')\|_{\max}^2 \|\widetilde \Omega\|^2 / \lambda_K^2(\widetilde{\Omega})\leq C\bar \theta^4 \ll \bar \theta^2\to 0 
\]
and 
\[
\lambda_1(\widetilde{\Omega})/ \lambda_K^2(\widetilde{\Omega}) \leq \frac{1}{n\bar\theta^2}\ll \bar \theta^2\,. 
\]
by the assumption that $n \bar{\theta}^4  \to \infty$.
Therefore, by Theorem~\ref{thm:SCORE}, we obtain that 
\[
\delta_n \asymp r_n(\widehat{\Pi}^{score})  \leq C \Big( \frac{1}{n\bar \theta^2} + \bar \theta^4\Big) 
\]
and $\delta_n \ll \bar \theta^2$. Then, the conditions in Theorem~\ref{thm:main} are satisfied. Therefore, 
\begin{align*}
r_n(\widehat{\Pi}^{rscore}) & \leq \frac{C}{\lambda_K^2(\widetilde \Omega)} \Big( \lambda_1(\widetilde{\Omega}) + n\bar \theta^4 \log(n) + n^2 \bar \theta^2 \delta_n^2 + n^2 \bar \theta^6 \delta_n\Big)\notag\\
&\leq \frac{C}{n^2\bar \theta^4} \Big( n\bar \theta^2 +n\bar \theta^4 \log(n) + n^2 \bar \theta^2[1/( n\bar \theta^2)^2 + \bar \theta^8] + n^2 \bar \theta^6 [1/ n\bar \theta^2 + \bar \theta^4] \Big) \notag\\
& \leq C\Big( \frac{1}{n\bar \theta^2} + \bar\theta^6  +   \frac{\log(n)}{n}\Big)
\end{align*}
where we used $\bar \theta = o(1)$ and $n\bar \theta^4\to \infty$. We thus complete the proof of
Corollary~\ref{cor:main}.

\bibliography{network}
\bibliographystyle{iclr2025_conference}

\end{document}